\DeclareMathOperator*{\argmax}{arg\,max}
\def \fs {{\cal H}}
\def \ew {c}
\def \vw {f}
\def\R{\mathbb{R}}
\def\gc{\ensuremath{{\cal C}_G}\xspace}
\def\gcmc{\ensuremath{\mathsf{GCMC}}\xspace}
\def\sse{\subseteq}
\def \SA {Sherali-Adams }
\newtheorem{define}{Definition}
\newtheorem{cl}{Claim}
\newtheorem{observation}{Observation}
\newtheorem{assume}{Assumption}
\def\eod{\vrule height 6pt width 5pt depth 0pt}
\title{Max-Cut under Graph Constraints}
\author{Jon Lee\thanks{Research of J. Lee was partially supported by
NSF grant CMMI–1160915 and ONR grant N00014-14-1-0315.} \and Viswanath Nagarajan \and Xiangkun Shen }
\institute{IOE Dept., University of Michigan, Ann Arbor, MI 48109, USA.}
\date{}
\begin{document}

\maketitle
\begin{abstract}
An instance of the graph-constrained max-cut (\gcmc) problem consists of (i) an undirected graph $G=(V,E)$ and (ii) edge-weights $c:{V\choose 2} \rightarrow \mathbb{R}_+$ on a complete undirected graph. The objective is to find a subset $S \subseteq V$ of vertices satisfying some graph-based constraint in $G$ that maximizes the weight $\sum_{u\in S, v\not\in S} c_{uv}$ of edges in the cut $(S,V\setminus S)$. The types of graph constraints we can handle include independent set, vertex cover, dominating set and connectivity. Our main results are for the case when $G$ is a graph with bounded treewidth, where we obtain a  $\frac12$-approximation algorithm. Our algorithm uses an  LP relaxation based on the Sherali-Adams hierarchy. It can handle any graph constraint for which there is a (certain type of) dynamic program  that exactly optimizes linear objectives.

\smallskip
Using known decomposition results, these imply essentially the same approximation ratio for \gcmc under constraints such as independent set, dominating set and connectivity on a planar graph $G$ (more generally for bounded-genus or excluded-minor graphs).
\end{abstract}

\section{Introduction}\label{sec:intro}

The max-cut problem is an extensively studied combinatorial-optimization problem. Given an undirected edge-weighted graph, the goal   is to find a subset $S \subseteq V$ of vertices that maximizes the weight of edges in the cut $(S,V\setminus S)$. Max-cut has a 0.878-approximation algorithm~\cite{GW95} which is known to be best-possible assuming the ``unique games conjecture''~\cite{KMO07}. It also has a number of practical applications, e.g., in circuit layout, statistical physics and clustering.

In some applications, one needs to solve the max-cut problem under additional constraints on the subset $S$. Consider for example, the following clustering problem. The input is an undirected graph $G=(V,E)$ representing, say, a social network (vertices $V$ denote users and edges $E$ denote connections between users), and a weight function $\ew:{V\choose 2} \rightarrow \R_+$ representing,   a dissimilarity measure between pairs of users. The goal  is to find a subset $S\sse V$ of users that are connected in $G$ while maximizing the weight of edges in the cut $(S,V\setminus S)$. This corresponds to finding a cluster of connected users that is as different as possible from its complement set. This ``connected max-cut'' problem also arises in image segmentation applications~\cite{VKR08,HKMPS15}.

Designing algorithms for constrained versions of max-cut is also interesting from a theoretical standpoint. For max-cut under certain types of constraints (such as cardinality or matroid constraints) good approximation algorithms are known, e.g.,~\cite{AS99,AHS01}. In fact, many of these results have since been extended to the more general setting of submodular objectives~\cite{FNS11,CVZ14}. However, not much is known for max-cut under ``graph-based'' constraints as in the example above.

In this paper, we study a large class of graph-constrained max-cut problems and present unified approximation algorithms for them. Our results require that the constraint be defined on a graph $G$ of bounded treewidth. (Treewidth is a measure of how similar a graph is to a tree structure --- see \S\ref{sec:prelim} for definitions.) We note however that for a number of constraints (including the connectivity example above),  we can combine our algorithm with known decomposition results~\cite{DHK05,DHK11} to obtain essentially the same approximation ratios when the constraint graph $G$ is planar/bounded-genus/excluded-minor.

\paragraph{Problem definition.} The input to the {\em graph-constrained max-cut} (\gcmc) problem consists of (i) an $n$-vertex undirected graph $G=(V,E)$ which implicitly specifies a collection ${\cal C}_G\subseteq 2^V$ of feasible vertex subsets, and (ii) (symmetric) edge-weights $\ew:{V\choose 2} \rightarrow \R_+$. The \gcmc problem is then as follows:
\begin{equation}\label{eq:gcmc-defn}
\max_{S\in \gc} \quad \sum_{u\in S, v\not\in S} \ew(u,v).
\end{equation}
In this paper, we assume that the constraint graph $G$ has bounded treewidth. We also assume that the graph constraint \gc admits an exact dynamic program for optimizing a linear objective, i.e. for:
\begin{equation}\label{eq:LinObj-defn}
\max_{S\in \gc} \quad \sum_{u\in S} \vw(u),\qquad \mbox{where $\vw:V\rightarrow \R$ is any given vertex weights}.
\end{equation}

Note that the \gcmc objective~\eqref{eq:gcmc-defn} is a quadratic function of the solution $S$, whereas our assumption~\eqref{eq:LinObj-defn} involves a {\em linear} function of the solution $S$. See \S\ref{sec:prelim} for more precise definitions/assumptions.

\subsection{Our Results and Techniques}\label{subsec:results}
Our main result can be stated informally as follows.

\begin{theorem}[{\small \gcmc result --- informal}] \label{thm:main1}
Consider any instance of the \gcmc problem on a bounded-treewidth graph $G=(V,E)$. Suppose there is an exact dynamic program for optimizing any linear function subject to constraint \gc. Then we obtain a $\frac12$-approximation algorithm for \gcmc.
\end{theorem}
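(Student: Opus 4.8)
\smallskip
\noindent\emph{Proof plan.} The plan is to linearize the quadratic cut objective through its multilinear extension, write a \SA-style LP that is \emph{exact} with respect to the constraint \gc by building on the assumed dynamic program, and then round with a pipage-type scheme that loses only a factor $2$ because the cut function has a small ``correlation gap''. Write $x\in\{0,1\}^V$ for the indicator of $S$; the cut value equals the multilinear function $F(x):=\sum_{uv}c_{uv}(x_u+x_v-2x_ux_v)$, which at $0/1$ points is the cut and in general is the expected cut when each $v$ is independently put in $S$ with probability $x_v$. Introducing a variable $Y_{uv}$ for each pair (modeling $\P[u,v\in S]$) and using a width-$\omega$ tree decomposition $(\cT,\{B_t\})$ of $G$, I would take the \SA lift at level $\Theta(\omega)$ on top of the dynamic program: since the DP optimizes every linear objective over \gc exactly, it furnishes an extended formulation of $\mathrm{conv}(\gc)$ (the variables are fractional distributions over DP states on each bag, with consistency across tree edges and $x$-projection $\mathrm{conv}(\gc)$), while the level-$\Theta(\omega)$ lift makes each bag carry a genuine joint distribution, links $Y_{uv}$ to the DP states when $u,v$ lie in a common bag, and for all other pairs merely enforces pairwise consistency $\max(0,y_u+y_v-1)\le Y_{uv}\le\min(y_u,y_v)$, where $y_v$ is the marginal of~$v$. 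With objective $\sum_{uv}c_{uv}(y_u+y_v-2Y_{uv})$, every feasible $S$ yields a feasible integral point, so the LP value is at least $\opt$.

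\emph{The factor $\tfrac12$.} At the fractional optimum $(y,Y)$ I would prove the edgewise bound $y_u+y_v-2y_uy_v\ge\tfrac12(y_u+y_v-2Y_{uv})$; this uses only $Y_{uv}\ge\max(0,y_u+y_v-1)$ and, writing $s=y_u+y_v$ and using $y_uy_v\le s^2/4$, reduces to $(s-1)(s-2)\le0$. Summing over edges gives $F(y)\ge\tfrac12\cdot(\text{LP value})\ge\tfrac12\opt$. Hence it suffices to round the point $y\in\mathrm{conv}(\gc)$ to a vertex $\mathbb{1}_{S^\star}$ of $\mathrm{conv}(\gc)$ without ever decreasing $F$, since then the cut of $S^\star$ equals $F(\mathbb{1}_{S^\star})\ge F(y)\ge\tfrac12\opt$.

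\emph{Rounding along the tree decomposition.} The enabling fact is that $F$ is a convex quadratic along every direction $e_u-e_v$ (the coefficient of $t^2$ in $F(x+t(e_u-e_v))$ equals $2c_{uv}\ge0$). I would round $y$ bag by bag, processing $\cT$ from the leaves toward a root: at each bag only $O(\omega)$ coordinates of $x$ are still fractional, and, keeping the DP-state variables in play so as to remain inside the extended polytope, I repeatedly move two fractional coordinates of the current bag in opposite directions until one of them becomes $0/1$; convexity lets me always choose the endpoint at which $F$ does not decrease (or move to a random endpoint with the correct expectation and invoke Jensen). Because the extended formulation is tree-structured, fixing a bag's vertices never re-fractionalizes an already processed part, so the schedule terminates with $x$ integral; being a vertex of $\mathrm{conv}(\gc)$ it equals $\mathbb{1}_{S^\star}$ for some $S^\star\in\gc$, and $F$ has not decreased.

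\emph{Main obstacle.} The crux is the rounding step: arbitrary polytopes are not pipage-roundable, and this is precisely where the hypothesis that \gc admits a dynamic program of the prescribed form is essential --- one must show that, working inside the DP's extended formulation over a width-$\omega$ tree decomposition, a feasible ``opposite'' move always exists, the bottom-up schedule ends at a genuine vertex of $\mathrm{conv}(\gc)$, and the part of $F$ depending on an already-eliminated bag is carried along correctly (this is what forces the \SA level to be $\gtrsim\omega$, so that the $Y_{uv}$'s inside a bag are pinned to the DP states). By contrast, the objective side is painless: the argument for $\tfrac12$ uses nothing beyond pairwise consistency of the relaxation.
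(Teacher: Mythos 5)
Your \emph{$\tfrac12$ inequality} is exactly right and is in fact the same inequality the paper proves. Writing $a=y_u$, $b=y_v$, and $p=Y_{uv}$, your claim
$a+b-2ab\ge\tfrac12(a+b-2p)$ under $p\ge\max(0,a+b-1)$
is precisely Observation~1 in the paper, stated there for two jointly distributed $\{0,1\}$ random variables $X,Y$ as
$\Pr(X{=}1)\Pr(Y{=}0)+\Pr(X{=}0)\Pr(Y{=}1)\ge\tfrac12\bigl[\Pr(X{=}0,Y{=}1)+\Pr(X{=}1,Y{=}0)\bigr]$.
So you have located the quantitative heart of the argument.

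The genuine gap is the rounding step. You acknowledge the obstacle --- ``arbitrary polytopes are not pipage-roundable'' --- but you do not resolve it, and in fact I do not believe it can be resolved as you describe. The DP assumption gives a tree-structured extended formulation of $\mathrm{conv}(\gc)$, but that alone does not give the exchange/pipage property you need: there is no reason a pair $u,v$ in the current bag always admits a feasible $e_u-e_v$ move that reaches a new tight face while remaining inside the extended polytope, nor that the bottom-up schedule terminates without ``re-fractionalizing''. Already for independent-set constraints on paths, the fractional point $(1/2,\dots,1/2)$ sits simultaneously on several edge-constraint faces and most $e_u-e_v$ directions are infeasible, so a careful choice is needed; for general DP-representable constraints no such structural guarantee is available. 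Proving that such a pipage schedule exists would be a separate, nontrivial project.

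The paper sidesteps this entirely by replacing pipage with \emph{conditional randomized rounding over DP states}. It samples a state at the root and then, walking the balanced tree decomposition top-down, samples the \emph{joint} pair of children states from the LP's conditional distribution given the states already fixed on the root-to-node path. Assumption~\ref{assm:DP}(3) guarantees the glued-together set is always in $\gc$ (no pipage move, no exchange property needed). The payoff is that for an edge $(u,v)$ whose highest nodes $\bar u,\bar v$ lie in different subtrees, the indicators $I_u,I_v$ are \emph{independent conditional on the common ancestor path} --- exactly the hypothesis of Observation~1 --- and the $\tfrac12$ falls out of your inequality applied pointwise. (The same-subtree case is exact, no loss.) So the correct way to read your proposal is: you have the right LP and the right inequality, but the lever that makes the inequality usable is the conditional independence created by the top-down sampling, not a pipage argument over $\mathrm{conv}(\gc)$. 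One smaller inaccuracy: the \SA level the paper needs is governed by the depth $O(\log n)$ of the balanced tree decomposition (subsets $N\subseteq T_{\ell_1}\cup T_{\ell_2}$), not by the width $\omega$; the width controls the per-node state space $|\Sigma_i|$, not the lift level.
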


This algorithm uses a linear-programming relaxation for \gcmc based on the dynamic program (for linear objectives) which is further strengthened via the Sherali-Adams LP hierarchy. The resulting LP has polynomial size whenever the number of dynamic program states associated with a single tree-decomposition node is constant (see \S\ref{sec:prelim} for the formal definition).\footnote{For other polynomial time dynamic programs, the LP has {\em quasi-polynomial} size.} The rounding algorithm is a natural top-down procedure that randomly chooses a ``state'' for each tree-decomposition node using the LP's probability distribution conditional on the choices at its ancestor nodes. The final solution is obtained by combining the chosen states at each  tree-decomposition node, which is guaranteed to satisfy constraint \gc due to properties of the dynamic program. We note that the choice of variables in the Sherali-Adams LP as well as the rounding algorithm are similar to those used in~\cite{GTW13} for the sparsest cut problem on bounded-treewidth graphs. An important difference in our result is that we apply the Sherali-Adams hierarchy to a non-standard LP that is defined using the dynamic program for linear objectives. (If we were to apply Sherali-Adams to the standard LP, then it is unclear how to enforce the constraint \gc during the rounding algorithm.) Another difference is that our rounding algorithm needs to make a correlated choice in selecting the states of sibling nodes in order to satisfy constraint \gc --- this causes the number of variables in the Sherali-Adams LP to increase, but it still remains polynomial since the tree-decomposition has constant degree.

\smallskip
The requirement in Theorem~\ref{thm:main1} on the graph constraint \gc is satisfied by several interesting constraints and thus we obtain approximation algorithms for all these \gcmc problems. See Section~\ref{sec:appln} for details.
\begin{theorem}[Applications] \label{thm:main2} There is a $\frac12$-approximation algorithm for \gcmc under the following constraints in a bounded-treewidth graph: independent set,  vertex cover, dominating set,  connectivity.
\end{theorem}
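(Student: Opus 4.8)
The plan is to derive Theorem~\ref{thm:main2} directly from Theorem~\ref{thm:main1}: it suffices to exhibit, for each of the four constraints, an exact dynamic program over a tree decomposition that solves the linear problem~\eqref{eq:LinObj-defn}, and to check that it has the format required by Theorem~\ref{thm:main1} (see \S\ref{sec:prelim}) with only a constant number of states per tree-decomposition node, so that the resulting Sherali--Adams LP stays polynomial-size. Fix a tree decomposition $(\cT,\{X_t\})$ of $G$ of width $w=O(1)$, which we may assume is ``nice'' (leaf/introduce/forget/join nodes). In every case a state at a node $t$ will be a bounded-size annotation of the bag $X_t$, the feasible partial solutions in the subtree below $t$ will be exactly those whose restriction to $X_t$ agrees with the state, parent/child states must agree on shared bag vertices, and the objective $\sum_{u\in S}\vw(u)$ is accumulated by charging $\vw(u)$ to the (unique) node at which $u$ is introduced.

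For \textbf{independent set} and \textbf{vertex cover} a state at $t$ is a subset $Y\subseteq X_t$, giving at most $2^{w+1}$ states: for independent set we keep only those $Y$ with $G[Y]$ edgeless, and for vertex cover only those $Y$ such that every edge of $G$ lying inside $X_t$ has an endpoint in $Y$. Since every edge of $G$ appears in some bag and the reconstructed solution $S$ is the union of the chosen subsets, these purely local checks enforce global independence (resp. that $S$ is a vertex cover), so this is an exact DP for~\eqref{eq:LinObj-defn}. For \textbf{dominating set} a state at $t$ labels each $v\in X_t$ by one of ``in $S$'', ``outside $S$ but already dominated from the subtree below $t$'', or ``outside $S$ and not yet dominated'', giving at most $3^{w+1}$ states; the transitions are the textbook ones (in particular a vertex may be forgotten only once it is in $S$ or dominated, and introduce/join nodes update domination status), and any globally consistent assignment yields a dominating set $S$ of the recorded weight.

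The main work, and the step I expect to be the main obstacle, is the \textbf{connectivity} constraint, where the DP is genuinely nontrivial. A state at $t$ is a pair $(Y,\mathcal{P})$ with $Y\subseteq X_t$ and $\mathcal{P}$ a partition of $Y$, recording which bag vertices lie in $S$ and how they split among the connected components of $G[S_t]$ for the partial solution $S_t$ in the subtree below $t$; the number of states is at most $2^{w+1}B_{w+1}$ (a constant, $B_\cdot$ the Bell number). The delicate point is to maintain the invariant that \emph{every} connected component of $G[S_t]$ meets the current bag $X_t$, so that no component gets ``orphaned'' by being fully forgotten before $S$ is finalized: a forget transition that would delete the last bag-vertex of a group is disallowed, except that we permit exactly one such event, which commits to that component being the final $S$ (equivalently: add a one-bit ``finalized'' flag to the state, or root $\cT$ at a bag containing a guessed representative vertex of $S$); join nodes take the transitive-closure merge of the two children's partitions. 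Checking that globally consistent state assignments correspond precisely to connected vertex subsets $S$ (with $S=\emptyset$ handled trivially) then shows this is an exact DP for~\eqref{eq:LinObj-defn}. In all four cases the per-node state count depends only on $w$, so Theorem~\ref{thm:main1} applies and yields a polynomial-time $\frac12$-approximation, which is Theorem~\ref{thm:main2}.
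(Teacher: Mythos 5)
Your proposal is correct and follows the same approach as the paper: for each of the four constraints, exhibit a bounded-state tree-decomposition DP satisfying Assumption~\ref{assm:DP} (bag-subset states for independent set and vertex cover, bag subset plus a domination labeling for dominating set, bag subset plus a connectivity partition for connectivity), then invoke Theorem~\ref{thm:main1}. One small difference worth noting is that the paper works directly with the balanced binary tree decomposition of Theorem~\ref{thm:treedecomp} rather than a nice (introduce/forget/join) decomposition, and for connectivity it forces a single component by restricting the root states to $P_r=\{B_r\}$ and requiring that every component of a partial solution meet the current bag, rather than your explicit ``finalized'' flag --- both address the same orphaned-component issue, and your version is arguably the more careful treatment of the corner case where the connected set fails to meet the root bag.
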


We note that many other constraints such as precedence, connected dominating set, and triangle matching also satisfy our requirement. In the interest of space, we only present details for the constraints mentioned in Theorem~\ref{thm:main2}. We also note that for some of these constraints (e.g., independent set) one can come up with a problem specific algorithm where the approximation ratio depends on the treewidth $k$. Our result is stronger since the algorithm is more general, and the ratio is independent of $k$.

For many of the constraints above, we can use known decomposition results~\cite{DHK05,DHK11} to obtain approximation algorithms for \gcmc when the constraint graph has bounded genus or excludes some fixed minor (e.g., planar graphs).
\begin{corollary}
There is a $(\frac12-\epsilon)$-approximation algorithm for \gcmc under the following constraints in an excluded-minor graph: independent set,  vertex cover, dominating set. Here $\epsilon>0$ is a fixed constant.
\label{cor:em}
\end{corollary}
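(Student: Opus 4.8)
The plan is to reduce \gcmc on an $H$-minor-free graph to \gcmc on bounded-treewidth graphs and invoke Theorem~\ref{thm:main1}, losing only a $(1-O(\epsilon))$ factor through a Baker-type vertex partition of the constraint graph. Fix $\epsilon>0$ and put $k:=\lceil 1/\epsilon\rceil$. By the decomposition results~\cite{DHK05,DHK11}, the vertex set $V$ of the $H$-minor-free graph $G$ can be partitioned into $k$ classes $V_1,\dots,V_k$ so that for every $i$ the induced subgraph $G_i:=G[V\setminus V_i]$ has treewidth bounded by a function of $k$ and $H$ alone --- in particular by a constant. For each $i$ we set up an auxiliary \gcmc instance living entirely on the vertex set $V\setminus V_i$: the constraint graph is $G_i$, the weights are $\ew$ restricted to $\binom{V\setminus V_i}{2}$, and the graph constraint is the natural restriction of \gc to $G_i$ --- independent set in $G_i$ in the first case, vertex cover of $G_i$ in the second, and ``dominate the prescribed vertex set $U_i:=\{v\in V\setminus V_i: N_G(v)\cap V_i=\emptyset\}$ within $G_i$'' in the third. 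Each of these constraints admits an exact dynamic program with a constant number of states per tree-decomposition node on bounded-treewidth graphs (see Section~\ref{sec:appln}; partial domination is a routine variant of the dominating-set program), so Theorem~\ref{thm:main1} applies and yields, for each $i$, a $\tfrac12$-approximate solution $S_i'\subseteq V\setminus V_i$ of the $i$-th auxiliary instance. We then lift $S_i'$ to a feasible solution of the original instance by taking $S_i:=S_i'$ for independent set and $S_i:=S_i'\cup V_i$ for vertex cover and dominating set, and return the best of $S_1,\dots,S_k$.

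Two easy facts make this well defined. First, $S_i\in\gc$ in every case: independence is preserved both under taking subsets and under passing to induced subgraphs; every edge of $G$ either meets $V_i$ or is an edge of $G_i$, hence is covered by $S_i'$; and every vertex of $G$ either lies in $V_i$, has a neighbour in $V_i$, or lies in $U_i$ and is therefore dominated by $S_i'$ inside $G_i$. Second, $\sum_{u\in S_i,\,v\notin S_i}\ew(u,v)\ \ge\ \sum_{u\in S_i',\,v\in(V\setminus V_i)\setminus S_i'}\ew(u,v)$, because passing from $S_i'$ to $S_i$ can only introduce extra cut edges incident to $V_i$ and all weights are nonnegative; thus the value of $S_i$ in the original instance is at least the value of $S_i'$ in the auxiliary instance.

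For the analysis, let $S^\star\in\gc$ attain the optimum $\opt$. Charge each cut edge $\{u,v\}$ of $S^\star$ to the part(s) containing $u$ and $v$; the total charge is at most $2\,\opt$, so some index $i^\star$ satisfies that the cut edges of $S^\star$ meeting $V_{i^\star}$ have total weight at most $\tfrac2k\,\opt$. The set $S^\star\setminus V_{i^\star}$ is feasible for the $i^\star$-th auxiliary instance: independence is inherited; a vertex cover of $G$ restricted to $V\setminus V_{i^\star}$ still covers every edge of $G_{i^\star}$; and a dominating set of $G$ restricted to $V\setminus V_{i^\star}$ still dominates $U_{i^\star}$ inside $G_{i^\star}$, precisely because each vertex of $U_{i^\star}$ has all of its $G$-neighbours outside $V_{i^\star}$. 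Its value in the auxiliary instance equals $\opt$ minus the weight of the cut edges of $S^\star$ meeting $V_{i^\star}$, hence is at least $\bigl(1-\tfrac2k\bigr)\opt$. Consequently the $\tfrac12$-approximation produces $S_{i^\star}'$, and therefore $S_{i^\star}$, of value at least $\tfrac12\bigl(1-\tfrac2k\bigr)\opt=\bigl(\tfrac12-\tfrac1k\bigr)\opt\ge\bigl(\tfrac12-\epsilon\bigr)\opt$, and the returned solution is at least this good.

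The crux --- and the reason connectivity does not appear in the statement --- is the feasibility of $S^\star\setminus V_i$ in the auxiliary instance: we need a restriction of the constraint that (a) still admits a bounded-treewidth dynamic program and (b) is satisfied by the optimal solution after the vertices of $V_i$ are deleted. This holds for independent set, vertex cover and dominating set (the last using the mild extension to partial domination), but fails for connectivity, since removing $V_i$ can disconnect $S^\star\setminus V_i$. The only other point needing care is that the auxiliary constraints retain a constant number of dynamic-program states per tree-decomposition node, which is what keeps the Sherali-Adams LP of Theorem~\ref{thm:main1} of polynomial size; this is immediate because the treewidth bound for $G_i$ depends only on the fixed quantities $k$ and $H$.
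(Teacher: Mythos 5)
Your proof is correct, and it follows the same high-level strategy as the paper: partition $V$ via the Baker-type decomposition of~\cite{DHK11}, solve \gcmc on bounded-treewidth subgraphs using Theorem~\ref{thm:main1}, and lift the best solution back. The independent-set case is essentially identical to the paper's. Where you diverge is in the treatment of vertex cover and dominating set. The paper contracts each class $V_i$ to a single vertex $v_{new}$, forms a contracted graph $G_i$ with aggregate edge-weights $c_i(u,v_{new})=\sum_{w\in V_i}c(u,w)$ (cf.~\eqref{eq:ci}), forces $v_{new}$ into the solution via extra LP constraints, and then expands $v_{new}$ back into $V_i$. The payoff of that contraction is an exact bookkeeping identity: $c_i(\delta S'_i)=c(\delta(S'_i\setminus\{v_{new}\}\cup V_i))$, so the auxiliary objective already equals the lifted objective, and one only needs Claim~\ref{cl:vi2} to pass to the optimum. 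Your variant instead works directly on $G[V\setminus V_i]$ with the weights restricted to $\binom{V\setminus V_i}{2}$ and replaces the exact identity with the one-sided inequality ``lifted value $\ge$ auxiliary value'' (valid by nonnegativity of $c$); for the dominating-set case this forces you to introduce partial domination of the subset $U_i=\{v\in V\setminus V_i:N_G(v)\cap V_i=\emptyset\}$. This is indeed a routine modification of the paper's dominating-set DP --- the state $(B_i,Y_i)$ in \S\ref{sec:appln} already carries a ``need-not-dominate'' set $Y_i$, so one simply seeds the exempt set with $V\setminus(V_i\cup U_i)$ --- but it is an extra verification that the paper's contraction trick avoids entirely, since there the lifted vertex $v_{new}$ is declared part of the solution and automatically dominates its neighbourhood. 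Both routes give $\tfrac12(1-O(1/h))\cdot\opt$; the paper's is a touch cleaner in bookkeeping, yours is a touch more elementary in that it never needs to argue the contracted graph retains bounded treewidth (the paper handles this by noting $v_{new}$ increases width by at most one).
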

\begin{corollary}
There is a $(\frac12-\epsilon)$-approximation algorithm for connected max-cut in a bounded-genus graph. Here $\epsilon>0$ is a fixed constant. \label{cor:bg}
\end{corollary}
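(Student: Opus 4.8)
}
The plan is to reduce connected max-cut on a bounded-genus graph $G$ (genus $g=O(1)$) to $k=\Theta(g/\epsilon)$ instances of \gcmc under a connectivity constraint on bounded-treewidth graphs, and then invoke Theorems~\ref{thm:main1} and~\ref{thm:main2}; we may assume $G$ is connected. First I would apply the contraction decomposition of~\cite{DHK05,DHK11}, which partitions $E$ into classes $E_1,\dots,E_k$ such that: (a) for every $i$ the contracted graph $G_i:=G/E_i$ has treewidth $O(gk)=O(1)$; and (b) the components of $(V,E_i)$ are ``localized'' with respect to a fixed BFS tree of $G$ --- each component spans only $O(g)$ consecutive BFS layers --- so that every vertex of $V$ belongs to a \emph{non-trivial} component of $(V,E_i)$ for at most $t=O(g)$ of the indices $i$.

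For each $i$ I then set up a \gcmc instance on $G_i$ with the connectivity constraint: the vertices of $G_i$ are the components (``supernodes'') of $(V,E_i)$, and for distinct supernodes $a,b$ I put $c^{(i)}(a,b):=\sum_{u\in a,\,v\in b}c(u,v)$ (loops created by contraction do not affect the cut, and parallel edges affect neither treewidth nor the connectivity constraint). By Theorem~\ref{thm:main2} connectivity admits the required exact dynamic program, and its number of states per tree-decomposition node is constant, so Theorem~\ref{thm:main1} yields in polynomial time a connected $T_i\subseteq V(G_i)$ of $c^{(i)}$-cut value at least $\tfrac12\,\opt(G_i,c^{(i)})$. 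Lift $T_i$ to $S_i:=\bigcup_{a\in T_i}a\subseteq V$. Each supernode is connected in $G$ and adjacent supernodes of $G_i$ are joined by a $G$-edge, so connectivity of $T_i$ in $G_i$ forces connectivity of $S_i$ in $G$; and since $S_i$ is a union of whole supernodes, its $c$-cut value equals the $c^{(i)}$-cut value of $T_i$. The algorithm returns the best $S_i$.

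It remains to show $\max_i\opt(G_i,c^{(i)})\ge(1-\tfrac{t}{k})\,\opt$, where $\opt$ is the value of an optimal connected set $S^\star\subseteq V$ and $\partial S^\star$ is the set of $G$-edges with exactly one endpoint in $S^\star$. Let $R_i\subseteq V(G_i)$ be the image of $S^\star$ under the contraction defining $G_i$, i.e.\ the supernodes meeting $S^\star$; then $R_i$ is connected in $G_i$ and its $c^{(i)}$-cut value equals the $c$-cut value of $S'_i:=\bigcup_{a\in R_i}a\supseteq S^\star$. A short calculation gives $\mathrm{cut}_c(S'_i)\ge\opt-B_i$, where $B_i$ is the $c$-weight of the pairs $\{u,v\}$ with $u\in S^\star$, $v\notin S^\star$, such that $v$ lies in a supernode of $(V,E_i)$ split by $S^\star$; any such split supernode contains an edge of $E_i\cap\partial S^\star$ and is in particular non-trivial. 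Every pair counted in $B_i$ contributes $c(u,v)$ to $\opt$, and its endpoint outside $S^\star$ lies in a non-trivial component of $(V,E_i)$, so property (b) gives $\sum_i B_i\le t\cdot\opt$; thus some $i$ has $B_i\le\tfrac{t}{k}\opt$, whence $\opt(G_i,c^{(i)})\ge\mathrm{cut}_c(S'_i)\ge(1-\tfrac{t}{k})\opt$. The returned solution therefore has value at least $\tfrac12(1-\tfrac{t}{k})\opt\ge(\tfrac12-\epsilon)\opt$ for $k=\Theta(t/\epsilon)=\Theta(g/\epsilon)$.

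The step I expect to be the main obstacle is the averaging in the last paragraph: averaging the ``loss'' over all $c$-edges would be hopeless, since the total $c$-weight can be arbitrarily larger than $\opt$, so it is essential to charge the loss only against the cut pairs of $S^\star$ and to use the locality of the contraction decomposition --- the fact that each vertex participates non-trivially in only $O(g)$ of the $k$ classes. Checking that the decomposition of~\cite{DHK05,DHK11} does supply this locality together with the stated treewidth bound on each $G_i$ in the bounded-genus case is the delicate point; the connectivity of the lift and the cut-value identity $\mathrm{cut}_c(S'_i)=\sum_{a\in R_i,\,b\notin R_i}c^{(i)}(a,b)$ are routine.
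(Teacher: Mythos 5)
Your proof is correct and follows essentially the same route as the paper: apply the contraction decomposition of~\cite{DHK05} (Theorem~\ref{thm:bg}), solve each resulting bounded-treewidth connected max-cut instance via Theorem~\ref{thm:sym}, lift, and charge the lost cut weight against $\opt$ using the fact that each vertex is contracted in only $O(1)$ of the color classes (the paper defers this bookkeeping to~\cite{HKMPS15}). One small refinement: the radial-coloring property stated in Theorem~\ref{thm:bg} --- all edges incident to a common vertex lie in at most three consecutive classes --- gives $t=3$ rather than your $t=O(g)$, though this does not change the $(\tfrac12-\epsilon)$ conclusion.
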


\noindent Our approach can also handle other types of objectives. If $g:2^V\rightarrow \R_+$ is
the sum of a polynomial number of functions   each of which is monotone, submodular and defined on a constant-size subset of $V$, then we obtain a $(1-\frac{1}{e})$-approximation algorithm for the problem of maximizing $g(S)$ subject to $S\in \gc$. The graph constraint \gc is as above.\footnote{This setting  is interesting only for constraints such as independent set, triangle matching and precedence that are not ``upward closed''.} The main idea is to use the correlation gap of monotone submodular function.\cite{ADSY12,CVZ14}

\subsection{Related Work}
For the basic undirected max-cut problem, there is an elegant 0.878-approximation algorithm~\cite{GW95} via semidefinite programming.  This is also the best one can hope for, assuming the {\em unique games conjecture}~\cite{KMO07}.

Most of the prior work on constrained max-cut  has focused on cardinality, matroid and knapsack constraints~\cite{AS99,AHS01,FNS11,CVZ14,LMNS10,LSV10}.
Constant-factor approximation algorithms are known for max-cut under the intersection of any constant number of such constraints ---  these results hold in the substantially more general setting of non-negative submodular functions. The main techniques used here are local search and the multilinear extension~\cite{CCPV11} of submodular functions. These results made crucial use of certain exchange properties  of the underlying constraints, which are not true for graph-based constraints that we consider. 

Closer to our setting, a version of the connected max-cut problem was studied recently in~\cite{HKMPS15}, where the connectivity constraint as well as the weight function were defined on the {\em same} graph $G$. The authors obtained an $O(\log n)$-approximation algorithm for general graphs, and an exact algorithm on bounded-treewidth graphs (which implied a PTAS for bounded-genus graphs); their algorithms relied heavily on the uniformity of the constraint/weight graphs. In contrast, we consider the connected max-cut problem where the connectivity constraint and the weight function are {\em unrelated}; in particular, our problem generalizes max-cut even when $G$ is a trivial graph (e.g., a star).
Moreover, our algorithms work for a much wider class of constraints. We note however that our results require  graph $G$ to have bounded treewidth --- this is also necessary since some of the constraints we consider (e.g., independent set) are inapproximable in general graphs. (For connected max-cut itself, obtaining a non-trivial approximation ratio when $G$ is a general graph remains an open question.)

In terms of techniques, the closest work to ours is~\cite{GTW13}. We use ideas
from~\cite{GTW13} in formulating the (polynomial size) Sherali-Adams LP as well as in the rounding algorithm. There are important differences too, as discussed in \S\ref{subsec:results}.

Finally, our result adds to a somewhat small list~\cite{BO04,MM09,BCG09,BLNZ15,GTW13,FKLST14} of algorithmic results based on the Sherali-Adams~\cite{SA90} LP hierarchy. We are not aware of a more direct approach to obtain a constant-factor approximation algorithm even for connected max-cut when the constraint graph $G$ is a tree.

\section{Preliminaries}\label{sec:prelim}

\noindent {\em Basic definitions.} For an undirected complete graph on vertices $V$ and subset $S\subseteq V$, let $\delta S$ be the set of edges with exactly one end-point in $S$. For any weight function $\ew: {V\choose 2} \rightarrow \R_+$ and subset $F\sse {V\choose 2}$, we use $\ew(F):=\sum_{e\in F} \ew_{e}$.

\noindent {\em Tree Decomposition.}
Given an undirected graph $G=(V,E)$, this  consists of a tree ${\cal T}=(I,F)$ and a collection of vertex subsets $\{X_i\subseteq V\}_{i\in I}$ such that:
\begin{itemize}
\item for each $v\in V$, the nodes $\{i \in I: v\in X_i\}$ are connected in ${\cal T}$, and
\item for each edge $(u,v)\in E$, there is some node $i\in I$ with $u,v\in X_i$.
\end{itemize}

The width of such a tree-decomposition is $\max_{i\in I}(|X_i|-1)$, and the treewidth of $G$ is the smallest width of any tree-decomposition for $G$.

We will work with ``rooted'' tree-decompositions that also specify a root node $r\in I$. The depth $d$ of such a tree-decomposition is the length of the longest root-leaf path in ${\cal T}$. The depth of any node $i\in I$ is the length of the $r-i$ path in ${\cal T}$.

For any $i\in I$, the set $V_i$ denotes all the vertices at or below node $i$, that is 
\vspace{-1mm}$$V_i\quad :=\quad \cup_{k \in {\cal T}_i}\,\,   X_k, \quad \mbox{where }{\cal T}_i=\{ k\in I : k \mbox{ in subtree of ${\cal T}$ rooted at }i\}.\vspace{-1mm}$$

The following known result provides a convenient representation of ${\cal T}$.

\begin{theorem}[Balanced Tree Decomposition~\cite{B89}]
Let $G=(V,E)$ be a graph of treewidth $k$. Then $G$ has a rooted tree-decomposition $({\cal T}=(I,F), \{X_i|i\in I\})$ where ${\cal T}$ is a binary tree of depth $2\lceil \log_\frac{5}{4} (2|V|)\rceil$ and treewidth at most $3k+2$. This tree-decomposition can be found in $O(|V|)$ time.
\label{thm:treedecomp}
\end{theorem}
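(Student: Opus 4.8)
The statement is a classical result of Bodlaender~\cite{B89}, so the ``proof'' appearing in the paper is really a pointer to that reference; still, here is the plan one would follow. The starting point is \emph{any} tree-decomposition of $G$ of width $k$, which exists by definition of treewidth (and for fixed $k$ can be computed in $O(|V|)$ time). Two cheap preprocessing steps put it into a convenient form. First, whenever a bag $X_i$ is contained in the bag $X_j$ of a neighbour $j$, contract the edge $(i,j)$; iterating this leaves at most $|V|$ bags and does not change the width. Second, replace every node of degree $d\ge 3$ by a small binary caterpillar of $O(d)$ copies of it (all carrying the same bag); this makes ${\cal T}$ binary, leaves the width unchanged, and at most doubles the number of nodes, so we may assume ${\cal T}$ is binary with at most $2|V|$ nodes --- this is the origin of the $2|V|$ in the depth bound.

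The heart of the argument is a recursive ``centroid splitting.'' I would process a connected subtree $T'$ of ${\cal T}$ together with an \emph{interface set} $W\sse V$ of at most $2(k+1)$ vertices, maintained so that $W$ contains every vertex of $G$ that lies both in some bag of $T'$ and in some bag of ${\cal T}$ outside $T'$. Using the elementary fact that a tree on $m$ nodes has a node whose removal leaves components with at most $m/2$ nodes each, pick such a centroid node $i\in T'$, make $W\cup X_i$ the root bag of the balanced decomposition of $T'$, and recurse on each component $C$ of $T'-i$ with updated interface $W_C\sse X_i\cup X_{j_C}$, where $j_C$ is the node of $C$ adjacent to $i$. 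The new root bag has size at most $2(k+1)+(k+1)=3(k+1)$, i.e.\ width at most $3k+2$, and the new interfaces again have size at most $2(k+1)$, so the invariant is preserved down the recursion. Since ${\cal T}$ is binary the centroid has degree at most three, so $T'-i$ can have three components; forcing the \emph{output} tree to be binary requires gluing their recursive decompositions under a short binary gadget, and it is this re-balancing that replaces exact halving by a slightly weaker ``shrink by a constant factor over a bounded number of levels'' recursion, producing (after the bookkeeping of~\cite{B89}) the base $\tfrac54$ and the leading factor $2$ in the depth $2\lceil\log_{5/4}(2|V|)\rceil$.

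Two things then need checking. For correctness, one verifies the two tree-decomposition axioms for the constructed decomposition: coverage of each edge $(u,v)\in E$ follows by tracking, through the recursion, an original bag that contained both $u$ and $v$; connectedness of $\{i: v\in X_i\}$ follows by induction for vertices confined to a single recursive subtree, and for ``boundary'' vertices from the fact that such a vertex, once placed into an interface, is carried up a contiguous chain of ancestor bags. For the running time, the (weighted) centroid of a tree is computable in linear time, so the divide-and-conquer runs in $O(|V|\log|V|)$ in the obvious implementation; the sharper $O(|V|)$ bound asserted in the theorem is the more technical part, and is exactly what~\cite{B89} (and the Bodlaender--Hagerup refinement) supplies.

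The main obstacle is not any single step but getting the three quantitative requirements to hold \emph{simultaneously}: width exactly $3k+2$ (which forces each new bag to be an old bag together with an interface that is itself the union of at most two old bags --- there is no more slack), a genuinely binary output tree, and depth $2\lceil\log_{5/4}(2|V|)\rceil$. Reconciling ``removing a centroid of a binary tree can leave three pieces'' with ``the output must be binary'' while still contracting the relevant size by a constant factor per bounded number of levels is the delicate balancing act, and pinning the constants $\tfrac54$ and $3$ is the part I would expect to be fiddly; everything else is routine bookkeeping. Given this, the honest route in the paper is simply to invoke~\cite{B89}.
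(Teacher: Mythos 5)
The paper supplies no proof of this theorem --- it is stated as a black-box citation of Bodlaender~\cite{B89} --- and you correctly recognize this, so there is nothing in the paper to compare your sketch against line by line. Your outline (binarize the input decomposition to get the factor $2|V|$, recurse on a centroid while carrying a bounded interface so that each new bag is an old bag of size at most $k+1$ union an interface of size at most $2(k+1)$, giving width $3k+2$) is a faithful high-level account of the ideas in~\cite{B89}, and you are appropriately candid that the exact constants $\tfrac{5}{4}$ and the $O(|V|)$ running time are the technical parts you would need the reference to pin down. Given that the paper itself merely invokes the citation, that level of detail is more than sufficient.
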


\paragraph{Dynamic program for linear objectives.} We assume that the constraint \gc admits an exact dynamic programming (DP) algorithm for optimizing linear objectives, i.e.  for the problem~\eqref{eq:LinObj-defn}.
There is some additional notation that is needed to formally describe the DP: this is necessary due to the generality of our results.

\begin{define}[{\small DP}]\label{def:DP}
With any tree-decomposition {\small $({\cal T}=(I,F), \{X_i|i\in I\})$}, we associate the following:
\begin{enumerate}
\item For each node $i\in I$, there is a state space $\Sigma_i$.
\item For each node $i\in I$ and $\sigma\in\Sigma_i$, there is a collection $\fs_{i,\sigma}\subseteq 2^{V_i}$ of  subsets.
\item For each node $i\in I$, its children nodes $\{j,j'\}$ and $\sigma\in\Sigma_i$, there is a collection ${\cal F}_{i,\sigma}\subseteq \Sigma_j\times \Sigma_{j'}$ of valid combinations of children states.
\end{enumerate}
\end{define}

\begin{assume}[{\small Linear objective DP for \gc}]\label{assm:DP}
Let $({\cal T}=(I,F), \{X_i|i\in I\})$ be any  tree-decomposition. Then there exist $\Sigma_i$, ${\cal F}_{i,\sigma}$ and $\fs_{i,\sigma}$ (see Definition~\ref{def:DP}) that satisfy the following:
\begin{enumerate}
\item \emph{(bounded state space)} $\Sigma_i$ and ${\cal F}_{i,\sigma}$ are all bounded by constant, that is, $\max_i{|\Sigma_i|}=t$ and $\max_{i,\sigma}{|{\cal F}_{i,\sigma}|}=p$, where $t, p = O(1)$.
\item \emph{(required state)} For each $i\in I$ and $\sigma\in\Sigma_i$, the intersection with $X_i$ of every set in $\fs_{i,\sigma}$ is the same, denoted $X_{i,\sigma}$, that is $S\cap X_i=X_{i,\sigma}$ for all $S\in\fs_{i,\sigma}$.
\item \emph{(subproblem)} For each non-leaf node $i\in I$ with children $\{j,j'\}$ and $\sigma\in\Sigma_i$,
$$\fs_{i,\sigma}\quad =\quad \left\{X_{i,\sigma}\cup S_j\cup S_{j'} \,\,:\,\,S_j\in\fs_{j,w_j}, \, S_{j'} \in\fs_{j',w_{j'}}, \, (w_j,w_{j'}) \in{\cal F}_{i,\sigma}\right\}.$$
By condition~2, for any leaf $\ell\in I$ and $\sigma\in\Sigma_\ell$, we have $\fs_{\ell,\sigma}=\{X_{\ell,\sigma}\}$ or $\emptyset$.
\item \emph{(cover all constraints)} At the root node $r$, we have ${\cal C}_G=\bigcup_{\sigma\in\Sigma_r}{\fs_{r,\sigma}}$.
\end{enumerate}
\end{assume}

The most restrictive assumption is the first condition. To the best of our knowledge, all natural constraints that admit a dynamic program on bounded-treewidth graphs (for linear objectives) satisfy conditions 2-4. Even in cases when condition 1 is not true, a relaxed version holds  (where $t$ and $p$ are polynomial), and our approach gives a {\em quasi-polynomial} time $\frac12$-approximation algorithm.

\paragraph{Example:} Here we outline how independent set satisfies the above requirements.

\begin{itemize}
\item The state space of each node $i\in I$ consists of all  independent subsets of $X_i$.

\item The subsets $\fs_{i,\sigma}$  consist of all independent subsets $S\sse V_i$ with $S\cap X_i=\sigma$.
\item The valid combinations ${\cal F}_{i,\sigma}$ consist of  all tuples $(w_j,w_{j'})$ where the child states $w_j$ and $w_{j'}$ are ``consistent'' with state $\sigma$ at node $i$.

\end{itemize}
A formal proof of why the independent-set constraint satisfies Assumption~\ref{assm:DP} appears in Section~\ref{sec:appln}.
There, we also discuss a number of other graph constraints satisfying our assumption.

The following result follows from Assumption~\ref{assm:DP}.

\begin{cl}\label{cl:DP-sol-state}
For any $S\in \gc$, there is a collection $\{b(i)\in \Sigma_i\}_{i\in I}$ such that:
\begin{itemize}
\item for each node $i\in I$ with children $j$ and $j'$, $(b(j),b(j')) \in {\cal F}_{i,b(i)}$,
\item for each leaf $\ell$ we have $\fs_{\ell,b(\ell)}\ne \emptyset$, and
\item $S=\bigcup_{i\in I} X_{i,b(i)}$.
\end{itemize}
Moreover, for any vertex $u\in V$, if $\overline{u}\in I$ denotes the highest node containing $u$ then $u\in S \iff u\in X_{\overline{u},b(\overline{u})}$.
\end{cl}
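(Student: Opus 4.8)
The plan is to prove Claim~\ref{cl:DP-sol-state} by a top-down induction on the tree-decomposition $\cT$, essentially ``running the DP in reverse'' on the fixed feasible set $S$. First I would use condition~4 of Assumption~\ref{assm:DP}: since $S\in\gc=\bigcup_{\sigma\in\Sigma_r}\fs_{r,\sigma}$, there exists some $b(r)\in\Sigma_r$ with $S\in\fs_{r,b(r)}$. This is the base of the construction. I would then maintain as an invariant, processing nodes in order of increasing depth, that whenever $b(i)$ has been defined we have $S\cap V_i\in\fs_{i,b(i)}$ (more precisely, that the actual subset witnessing membership is exactly $S\cap V_i$).

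The inductive step uses condition~3 (the subproblem decomposition). Suppose $b(i)$ is defined and $S\cap V_i\in\fs_{i,b(i)}$, and let $j,j'$ be the children of $i$. By the subproblem formula, $S\cap V_i = X_{i,b(i)}\cup S_j\cup S_{j'}$ for some $S_j\in\fs_{j,w_j}$, $S_{j'}\in\fs_{j',w_{j'}}$ with $(w_j,w_{j'})\in\cF_{i,b(i)}$. Set $b(j):=w_j$, $b(j'):=w_{j'}$; the first bullet of the claim then holds at $i$ by construction. It remains to check that $S_j=S\cap V_j$ (and symmetrically for $j'$), so that the invariant propagates. The inclusion $S_j\subseteq V_j$ is immediate, and $S_j\subseteq S$ follows since $S_j\subseteq S\cap V_i\subseteq S$; hence $S_j\subseteq S\cap V_j$. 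For the reverse inclusion I would invoke the tree-decomposition property: any vertex $v\in V_j$ that also lies in $X_{i,b(i)}\subseteq X_i$ must, by connectivity of the nodes containing $v$, also lie in $X_j$, and then condition~2 ($S\cap X_j=X_{j,b(j)}\subseteq S_j$) places it in $S_j$; any vertex of $S\cap V_j$ not in $X_i$ cannot come from $X_{i,b(i)}$ or from $S_{j'}$ (as $V_j\cap V_{j'}$ consists of vertices in a common bag, hence in $X_i$), so it must come from $S_j$. This yields $S\cap V_j\subseteq S_j$, giving equality. At a leaf $\ell$ the invariant says $S\cap V_\ell\in\fs_{\ell,b(\ell)}$, so $\fs_{\ell,b(\ell)}\neq\emptyset$, which is the second bullet; and by condition~2, $S\cap X_\ell=X_{\ell,b(\ell)}$.

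For the third bullet, $S=\bigcup_{i\in I}X_{i,b(i)}$: the inclusion $\supseteq$ follows because each $X_{i,b(i)}=S\cap X_i\subseteq S$ (using condition~2 and the invariant at $i$); the inclusion $\subseteq$ follows because every $u\in S\subseteq V=V_r$ lies in some bag $X_i$, and then $u\in S\cap X_i=X_{i,b(i)}$. Finally, the ``moreover'' statement is a direct consequence: for $u\in V$, let $\overline u$ be the highest node with $u\in X_{\overline u}$. If $u\in S$ then $u\in S\cap X_{\overline u}=X_{\overline u,b(\overline u)}$; conversely if $u\in X_{\overline u,b(\overline u)}=S\cap X_{\overline u}$ then $u\in S$.

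I expect the only real obstacle to be the verification that $S_j=S\cap V_j$ in the inductive step — i.e., that the DP's decomposition of $S\cap V_i$ across the two children is forced to be the ``geometric'' split of $S$ along the subtrees. This is where the defining properties of a tree-decomposition (connectivity of the bags containing each vertex, and every edge living in some bag) are actually used, rather than just the abstract DP axioms; everything else is bookkeeping from Assumption~\ref{assm:DP}. I would state the needed separator fact as a short sub-observation: for children $j,j'$ of $i$, $V_j\cap V_{j'}\subseteq X_i$ and more generally $V_j\setminus V_i$-type leakage does not occur, so that $S\cap V_j$ is cleanly recovered.
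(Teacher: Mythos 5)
Your construction of the states $b(i)$ is the same as the paper's (top--down, seeded at the root via Assumption~\ref{assm:DP}(4) and propagated with Assumption~\ref{assm:DP}(3)), and the first two bullets follow identically. The gap is in the strengthened invariant you try to carry, namely that the witness at node $j$ is exactly $S\cap V_j$. In the sub-case $v\in S\cap V_j\cap X_i$ you write ``condition~2 ($S\cap X_j=X_{j,b(j)}\sse S_j$)'', but condition~2 applied to the witness $S_j\in\fs_{j,b(j)}$ only yields $S_j\cap X_j=X_{j,b(j)}$; the identity $S\cap X_j=X_{j,b(j)}$ is precisely the invariant at $j$ that you are in the middle of establishing, so this step is circular. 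Worse, the step cannot be repaired from Assumption~\ref{assm:DP} alone: nothing in the abstract DP axioms forces $X_{i,b(i)}\cap X_j\sse X_{j,b(j)}$. An admissible DP may ``double-cover'' a vertex $v\in X_i\cap X_j$, placing $v\in X_{i,b(i)}$ while the chosen child state $w_j$ has $v\notin X_{j,w_j}$; the union $X_{i,b(i)}\cup S_j\cup S_{j'}$ still reproduces the correct set, so conditions~2 and~3 are not violated. The concrete DPs of Section~\ref{sec:appln} rule this out through explicit consistency clauses in $\cF_{i,\sigma}$ (e.g.\ $w_j\cap X_i=\sigma\cap X_j$), but Assumption~\ref{assm:DP} itself does not.

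The paper sidesteps this by carrying only the weaker invariant: it records, at each node $i$, \emph{some} witness $B_i\in\fs_{i,b(i)}$ with $B_r=S$ and $B_j:=S_j$ taken directly from the subproblem decomposition, without ever claiming $B_i=S\cap V_i$. The third bullet then comes from unrolling $S=B_r=\bigcup_{i\in I}X_{i,b(i)}$, and the ``moreover'' statement is argued locally at $\overline u$: since all bags containing $u$ lie in the subtree rooted at $\overline u$, and $B_{\overline u}=\bigcup_{k\text{ below }\overline u}X_{k,b(k)}$, we have $u\in S$ iff $u\in B_{\overline u}$; then condition~2 gives $u\in B_{\overline u}\iff u\in B_{\overline u}\cap X_{\overline u}=X_{\overline u,b(\overline u)}$. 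You should restructure the proof to maintain only $B_i\in\fs_{i,b(i)}$ and drop the claim $S_j=S\cap V_j$; every later use of the strong invariant in your write-up (the inclusion $X_{i,b(i)}=S\cap X_i\sse S$ in bullet three, and both directions of the ``moreover'') is then replaced by the $B_i$-based arguments just described.
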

\begin{proof}
We define the states $b(i)$ in a top-down manner; we will also define an associated subset $B_i\in \fs_{i,b(i)}$ at each node $i$. At the root, we set $b(r)=\sigma$ such that $S\in \fs_{r,\sigma}$: this is well-defined by Assumption~\ref{assm:DP}(4). We also set $B_r=S$. Having set $b(i)$ and $B_i\in \fs_{i,b(i)}$ for any node $i\in I$ with children $\{j,j'\}$, we use Assumption~\ref{assm:DP}(3)  to write
$$B_i=X_{i,b(i)} \cup S_j\cup S_{j'} \quad \mbox{where } S_j\in \fs_{j,w_j} ,\, S_{j'}\in \fs_{j',w_{j'}} \mbox{ and } (w_j,w_{j'}) \in {\cal F}_{i,b(i)}.$$
Then we set $b(j)=w_j$ and $B_j=S_j$ for all the children $J$ of node $i$. It is now easy to verify the first three conditions in the claim.

Since $S=\bigcup_{i\in I} X_{i,b(i)}$, it is clear that $u\in X_{\overline{u},b(\overline{u})} \implies u\in S$. In the other direction, suppose $u\not\in X_{\overline{u},b(\overline{u})}$: we will show $u\not\in S$. Since $\overline{u}$ is the highest node containing $u$, it suffices to show that $u\not\in  B_{\overline{u}}$ above. But this follows directly from Assumption~\ref{assm:DP}(2) since $B_{\overline{u}}\in \fs_{\overline{u},b(\overline{u})}$, $u\in X_{\overline{u}}$ and  $u\not\in X_{\overline{u},b(\overline{u})}$.
\hfill \eod
\end{proof}

\paragraph{\SA LP hierarchy.}
This is  one of the several ``lift-and-project'' procedures that, given a $\{0,1\}$ integer program, produces systematically a sequence of increasingly tighter convex relaxations. The \SA procedure~\cite{SA90} involves generating stronger {\em LP relaxations} by adding new variables and constraints. The $r^{th}$ round of this procedure has a variable $y(S)$ for every subset $S$ of at most $r$ variables in the original integer program --- the new variable $y(S)$ corresponds to the joint event that all the original variables in $S$ are one.

\section{Approximation Algorithm for \gcmc}\label{sec:sym}\label{sec:alg}

In this section, we prove:
\begin{theorem}\label{thm:sym}
Consider any instance of the \gcmc problem on a bounded-treewidth graph $G=(V,E)$. If the graph constraint \gc satisfies Assumption~\ref{assm:DP} then we obtain a

$\frac12$-approximation algorithm.
\end{theorem}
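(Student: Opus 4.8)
The plan is to formulate a Sherali-Adams-strengthened LP relaxation built on top of the linear-objective DP of Assumption~\ref{assm:DP}, and then give a top-down randomized rounding that produces a feasible $S\in\gc$ whose expected cut weight is at least half the LP optimum (hence at least half of \opt). First I would set up the LP variables. Using the balanced tree-decomposition of Theorem~\ref{thm:treedecomp}, the tree $\cT$ is binary and has logarithmic depth, so every root-leaf path has $O(\log n)$ nodes. The ``base'' integer variables are, for each node $i\in I$, an indicator of which state $\sigma\in\Sigma_i$ is chosen at $i$ (exactly one, so these are a partition-type set of $0/1$ variables), and since $|\Sigma_i|=t=O(1)$ a single root-leaf path touches only $O(\log n)$ such variables. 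I apply Sherali-Adams to a level $r=\Theta(\log n)$ so that for any root-leaf path $P$ we have joint distribution variables $y$ over all state-assignments to the nodes of $P$; additionally, to be able to correlate the choices of the two children of a node (needed to respect ${\cal F}_{i,\sigma}$), I include in each ``clique'' not just a path but a path together with, at each node on it, both of its children — this still has size $O(\log n)$ because the tree has degree $2$, so the LP has $n^{O(1)}$ variables (quasi-polynomial in the general case). The LP constraints are: the usual Sherali-Adams consistency/marginalization constraints; at the root, $\sum_{\sigma\in\Sigma_r} y(\{(r,\sigma)\})=1$ (some feasible state is chosen, which by Assumption~\ref{assm:DP}(4) corresponds to a feasible $S$); for each node $i$ with children $j,j'$ and each state $\sigma$, the marginal of $(b(j),b(j'))$ conditioned on $b(i)=\sigma$ is supported on ${\cal F}_{i,\sigma}$ (enforceable since $i,j,j'$ all lie in one clique); and for each leaf $\ell$, states $\sigma$ with $\fs_{\ell,\sigma}=\emptyset$ get probability $0$. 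The objective encodes the cut weight: by the ``moreover'' part of Claim~\ref{cl:DP-sol-state}, a vertex $u$ is in $S$ iff $u\in X_{\overline u,b(\overline u)}$ where $\overline u$ is the highest bag containing $u$; so for an edge $(u,v)$ the event ``$u\in S, v\notin S$'' depends only on the states of the two nodes $\overline u,\overline v$, and these two nodes are comparable in $\cT$ or share a path to the root — in any case both appear in some common clique — so there is a Sherali-Adams variable giving $\P[u\in S,v\notin S]$, and I maximize $\sum_{(u,v)} \ew(u,v)\big(\P[u\in S,v\notin S]+\P[v\in S,u\notin S]\big)$. Any integral $S\in\gc$ gives a feasible LP solution via Claim~\ref{cl:DP-sol-state}, so the LP optimum is $\ge\opt$.

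Next I would describe the rounding. Process $\cT$ top-down: at the root pick $b(r)=\sigma$ with probability $y(\{(r,\sigma)\})$; having fixed the state of a node $i$ (and, inductively, of its entire path to the root), choose the pair of states $(b(j),b(j'))$ for its two children jointly, from the Sherali-Adams conditional distribution of $(b(j),b(j'))$ given the already-fixed ancestor states. Because the depth is $O(\log n)$ and the clique size is $O(\log n)$, all these conditional probabilities are genuine Sherali-Adams marginals and the procedure is well defined. By construction $(b(j),b(j'))\in{\cal F}_{i,b(i)}$ always and leaves never get an infeasible state, so by Claim~\ref{cl:DP-sol-state} (used in the converse direction, building the nested sets $B_i$) the output $S:=\bigcup_{i} X_{i,b(i)}$ lies in $\gc$. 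The crucial feature is that, for any single vertex $u$, the marginal law of $b(\overline u)$ under the rounding equals the LP marginal $y(\cdot)$ at node $\overline u$ — this is the standard telescoping property of Sherali-Adams conditional rounding along a path — and more importantly, for any edge $(u,v)$, since $\overline u$ and $\overline v$ lie in a common clique that the rounding respects along the path from the root, the joint law of $(b(\overline u),b(\overline v))$ produced by the rounding matches the LP joint. Hence $\P[u\in S,v\notin S]$ under the algorithm equals exactly its LP value for every edge, so $\E[\ew(\delta S)]$ equals the LP objective, which is $\ge\opt$ — in fact this would give a $1$-approximation in expectation, which is too good, so the correct statement is subtler: the rounding matches LP marginals only up to the correlations the clique structure captures, and along different branches the choices become independent, so I only get that for each edge the rounded probability of being cut is at least $\frac12$ of its LP contribution, using a convexity/worst-case argument of the form $\P[X\ne Y]\ge\frac12(\P[X=1]+\P[Y=1])$ when $X,Y$ are (possibly independent) Bernoullis with the right marginals, summed against $\ew\ge 0$. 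Taking expectations and linearity then yields $\E[\ew(\delta S)]\ge\frac12\cdot\mathrm{LP}\ge\frac12\opt$.

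The main obstacle — and the step I would spend the most care on — is exactly this loss-of-$\frac12$ analysis: making precise which pairs of nodes $(\overline u,\overline v)$ have their joint distribution preserved by the rounding (ancestor–descendant pairs, handled exactly) versus which only retain correct marginals (incomparable pairs in different subtrees, where the rounding has made independent choices), and showing that in the worst, incomparable case the probability that edge $(u,v)$ is cut is still at least $\tfrac12\big(p_u(1-p_v)+p_v(1-p_u)\big)$-type quantity bounded below by $\tfrac12$ of the LP's cut contribution for that edge. Concretely I expect to argue: condition on the state of the lowest common ancestor $a$ of $\overline u$ and $\overline v$; given that, $b(\overline u)$ and $b(\overline v)$ are independent with the correct conditional marginals; and for independent Bernoullis $X,Y$, $\P[X\ne Y]=p(1-q)+q(1-p)\ge \tfrac12\big(p+q\big)\ge \tfrac12\P_{\mathrm{LP}}[\text{$u,v$ separated}]$ after the right massaging, where the final inequality uses that the LP's separation probability for $(u,v)$ is at most $\min\{$relevant marginals$\}$-type bounds — precisely here the Sherali-Adams constraints (in particular that the LP is consistent on the clique containing $a,\overline u,\overline v$) are what make the inequality go through. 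I would also need to verify the size bound: $|I|=O(n)$, cliques of size $O(\log n)$, state spaces of size $O(1)$, so the number of Sherali-Adams variables and constraints is $n^{O(\log t)}=\mathrm{poly}(n)$ when $t=O(1)$, and quasi-polynomial otherwise, matching the theorem's claim; and invoke Theorem~\ref{thm:treedecomp} to reduce from an arbitrary bounded-treewidth $G$ to the balanced binary case at the cost of only a constant-factor blowup in width. Everything else (LP feasibility of integral solutions, feasibility of the rounded solution) is routine given Claim~\ref{cl:DP-sol-state}.
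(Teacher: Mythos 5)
Your overall architecture is the same as the paper's: a Sherali--Adams LP over joint state-assignments along augmented root-leaf paths, constraints forcing the conditional child-pair distribution to be supported on ${\cal F}_{i,\sigma}$, top-down conditional rounding, exactness for ancestor--descendant pairs $\bar u,\bar v$, and a factor-$\frac12$ loss only for incomparable pairs. But the one step you yourself flag as crucial is exactly where the proposal breaks. The inequality you invoke, $\Pr[X\ne Y]=p(1-q)+q(1-p)\ge\frac12(p+q)$ for Bernoullis with marginals $p,q$, is false (take $p=q=1$: the left side is $0$; or $p=q=0.9$: $0.18$ versus $0.9$), so the chain ``independent disagreement $\ge\frac12(p+q)\ge\frac12\cdot$LP-separation'' does not go through. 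The correct statement (Observation~\ref{obs:joint_indep} in the paper) compares the independent coupling not to the sum of the marginals but to an \emph{arbitrary} coupling with the same marginals: for any jointly distributed $\{0,1\}$ variables, $\Pr(X{=}1)\Pr(Y{=}0)+\Pr(X{=}0)\Pr(Y{=}1)\ge\frac12\left[\Pr(X{=}1,Y{=}0)+\Pr(X{=}0,Y{=}1)\right]$. One applies this with the LP's conditional joint law of $(I_u,I_v)$ given the states on the LCA's path playing the role of the coupling: the conditional LP cut value $z^+_{uv}(s(T_i))+z^-_{uv}(s(T_i))$ is the joint disagreement probability, while the algorithm realizes the independent coupling of the same conditional marginals. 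Proving this observation requires a short but genuine case analysis (it reduces to $a+b+2x\ge 4ab$ where $x=\Pr(X{=}0,Y{=}0)$), not the ``min of relevant marginals'' massaging you sketch.

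A second, smaller gap: with cliques defined as a single root-leaf path together with the children of its nodes, an incomparable pair $\{\bar u,\bar v\}$ need not lie in any one clique, so the variable giving $\Pr[u\in S, v\notin S]$ in your objective is not defined by your LP. The paper resolves this by letting ${\cal P}$ consist of subsets of $T_{\ell_1}\cup T_{\ell_2}$ over \emph{pairs} of leaves (still $O(n^2)\cdot 2^{O(d)}$ sets), which is what makes both the objective variables $y(s(\{\bar u,\bar v\}))$ and the conditioning at the least common ancestor well-defined. With those two repairs your argument coincides with the paper's proof.
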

\paragraph{Algorithm outline:}
We start with a balanced tree-decomposition ${\cal T}$ of graph $G$, as given in Theorem~\ref{thm:treedecomp}; recall the associated definitions from \S\ref{sec:prelim}. Then we formulate an LP relaxation of the problem using Assumption~\ref{assm:DP} (i.e. the dynamic program for linear objectives) and further strengthened by applying the \SA operator. Finally we use a natural top-down rounding that relies on Assumption~\ref{assm:DP} and the \SA constraints. 

\subsection{Linear Program}

We start with some additional notation related to the tree-decomposition  ${\cal T}$ (from Theorem~\ref{thm:treedecomp}) and our dynamic program assumption (Assumption~\ref{assm:DP}).
\begin{itemize}
\item For any node $i\in I$, $T_i$ is the set of nodes on the $r-i$ path along with the children of all nodes except $i$ on this path. See also Figure~\ref{fig:SA-variable}.
\item ${\cal P}$ is the collection of all node subsets $J$ such that $J\sse T_{\ell_1} \cup T_{\ell_2}$ for some pair of leaf-nodes $\ell_1,\ell_2$. See also Figure~\ref{fig:SA-variable}.
\item $s(i)\in \Sigma_i$ denotes a state at node $i$. Moreover, for any subset of nodes $N\sse I$, we use the shorthand  $s(N):=\{s(k):k\in N\}$.
\item $\bar{u}\in I$ denotes the highest tree-decomposition node containing vertex $u$.
\end{itemize}

\begin{figure}[htb]
  \centering
  \includegraphics[height=0.45\textwidth]{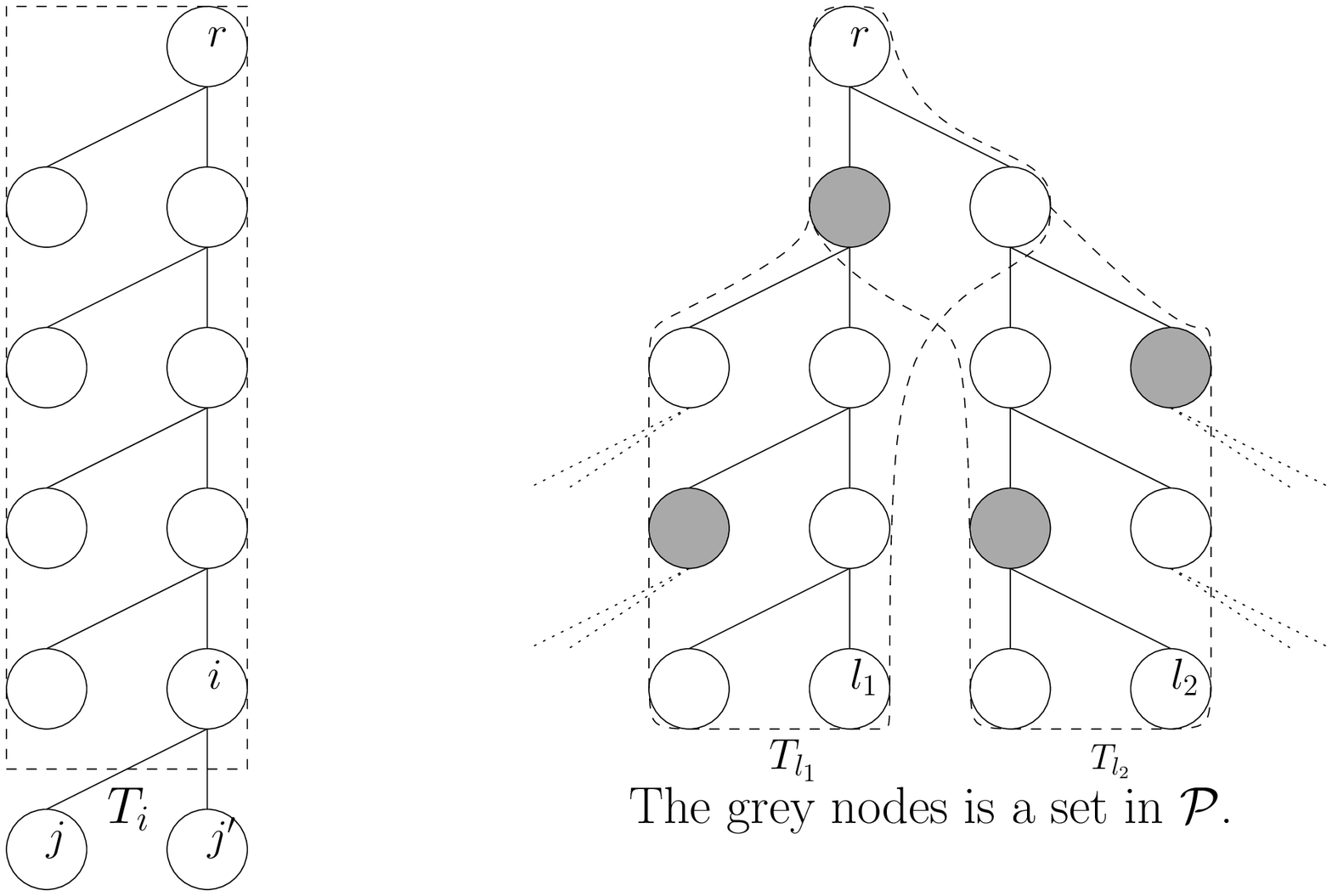}

\caption{Examples of (i) a set $T_i$ and (ii) a set in ${\cal P}$.\label{fig:SA-variable}}
\label{fig:partition}
\end{figure}

The variables in our LP are $y(s(N))$ for all $\{s(k)\in \Sigma_k\}_{k\in N}$ and $N\in {\cal P}$.  Variable $y(s(N))$ corresponds to the joint event that the solution (in \gc) ``induces'' state $s(k)$ (in terms of Assumption~\ref{assm:DP}) at each node $k\in N$.

We also use variables $z_{uv}$ defined in constraint~\eqref{cons:obj} that measure the probability of an edge $(u,v)$ being cut. Constraints~\eqref{cons:SA} are the Sherali-Adams constraints that enforce consistency among the $y$ variables. Constraints~\eqref{cons:r}-\eqref{cons:DP-leaf} are from the dynamic program (Assumption~\ref{assm:DP}) and require valid state selections. 
\begin{flalign}
&\mbox{maximize } \sum_{\{u,v\}\in{V \choose 2}}{c_{uv}z_{uv}}\tag{LP}\label{LP}&\\
&z_{uv}\,\, =\,\, \sum_{\substack{s(\bar{u})\in\Sigma_{\bar{u}} \\ u\in X_{\bar{u},s(\bar{u})}}}\sum_{\substack{s(\bar{v})\in\Sigma_{\bar{v}} \\ v\not\in X_{\bar{v},s(\bar{v})}}}{y(s(\{\bar{u},\bar{v}\}))}+\sum_{\substack{s(\bar{u})\in\Sigma_{\bar{u}} \\ u\not\in X_{\bar{u},s(\bar{u})}}}\sum_{\substack{s(\bar{v})\in\Sigma_{\bar{v}} \\ v\in X_{\bar{v},s(\bar{v})}}}{y(s(\{\bar{u},\bar{v}\}))}, & \notag \\
 &\qquad \qquad \qquad \qquad \qquad \qquad \qquad \qquad \qquad \qquad \qquad \qquad \forall \{u,v\}\in{V \choose 2};\label{cons:obj}&\\
&y(s(N))\,\, =\,\, \sum_{s(i)\in\Sigma_i}{y(s(N\cup\{i\}))}, \qquad \forall N\in {\cal P},\,  i\notin N\,:\, N\cup \{i\}\in {\cal P};\label{cons:SA}&\\ 
&\sum_{s(r)\in\Sigma_r}y(s(r))=1;&\label{cons:r}\\
&y(s(\{i,j,j'\}))\,=\,0, \qquad \qquad \quad \forall i\in I, \,  s(i)\in \Sigma_i,\, (s(j),s(j'))\notin {\cal F}_{i,s(i)}; \label{cons:DP}&\\ 
&y(s(\ell))\,=\,0, \qquad \qquad \qquad \qquad \qquad \quad \,\,\, \forall \ell\in I, s(\ell)\in \Sigma_\ell : \fs_{\ell,s(\ell)} = \emptyset ; \label{cons:DP-leaf}&\\ 
&0 \,\,\le \,\,y(s(N))\,\,\le \,\, 1,\qquad \qquad \qquad \qquad \qquad \forall N\in {\cal P},\, \{s(k)\in\Sigma_k\}_{k\in N}. &\label{cons:01}
\end{flalign}

\begin{cl}\label{cl:sym-LP-distr}
For any node $i\in I$ with children $j,j'$ and $s(k)\in\Sigma_k$ for all $k\in T_i$,
\begin{equation}
y(s(T_i))\,\,=\,\,\sum_{s(j)\in\Sigma_j}\,\, \sum_{s(j')\in\Sigma_{j'}}{y(s(T_i\cup\{j,j'\})}.
\end{equation}\label{lm:SADP}
\end{cl}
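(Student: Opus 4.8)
The plan is to prove Claim~\ref{cl:sym-LP-distr} by repeated application of the Sherali--Adams consistency constraints~\eqref{cons:SA}, peeling off one node at a time until we have added both children $j$ and $j'$ to the index set $T_i$. First I would observe that $T_i \cup \{j\}$ and $T_i\cup\{j,j'\} = T_j = T_{j'}$ are both members of $\mathcal{P}$: indeed, $T_i$ lies on some root--leaf path together with off-path children, and adjoining $j$ (a child of $i$) keeps it inside $T_{\ell}$ for any leaf $\ell$ in the subtree of $i$ through $j$; adjoining $j'$ as well keeps everything inside $T_{\ell_1}\cup T_{\ell_2}$ for a leaf $\ell_1$ below $j$ and a leaf $\ell_2$ below $j'$. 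So both invocations of~\eqref{cons:SA} we need are legal.

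Concretely, I would apply~\eqref{cons:SA} with $N = T_i$ and $i$ replaced by the node $j$ (note $j\notin T_i$ since $T_i$ contains only nodes on the $r$--$i$ path and children of the strict ancestors of $i$, and $j$ is a child of $i$ itself), obtaining
\begin{equation*}
y(s(T_i)) \;=\; \sum_{s(j)\in\Sigma_j} y(s(T_i\cup\{j\})).
\end{equation*}
Then, for each fixed choice of $s(j)$, I would apply~\eqref{cons:SA} again with $N = T_i\cup\{j\}$ and the new node $j'$ (again $j'\notin T_i\cup\{j\}$), obtaining
\begin{equation*}
y(s(T_i\cup\{j\})) \;=\; \sum_{s(j')\in\Sigma_{j'}} y(s(T_i\cup\{j,j'\})).
\end{equation*}
Substituting the second identity into the first and interchanging the (finite) sums yields exactly the claimed equation. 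The only subtlety is that, in the displayed statement of the claim, the state assignment $\{s(k)\}_{k\in T_i}$ on the left is held fixed while the right-hand side sums over extensions to $j$ and $j'$; this matches precisely how~\eqref{cons:SA} quantifies over the added coordinate, so no reindexing headache arises.

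The main (very mild) obstacle is the bookkeeping needed to certify membership in $\mathcal{P}$ for the intermediate set $T_i\cup\{j\}$ — one has to unwind the definitions of $T_\bullet$ and $\mathcal{P}$ and check that extending a valid path-plus-off-path-children set by a child of its bottom node produces another set of the same form, so that~\eqref{cons:SA} is applicable. I would dispatch this with one or two sentences referencing Figure~\ref{fig:SA-variable}. Everything else is a direct, two-line chain of the LP constraints, so I do not anticipate any real difficulty; the claim is essentially a restatement of the Sherali--Adams marginalization identity specialized to the tree-decomposition structure.
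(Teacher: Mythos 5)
Your proof is correct and matches the paper's argument exactly: the paper likewise notes that $T_i\cup\{j,j'\}\subseteq T_\ell$ for any leaf $\ell$ below $i$ (hence lies in ${\cal P}$) and then invokes two applications of constraint~\eqref{cons:SA}. Your additional bookkeeping for the intermediate set $T_i\cup\{j\}$ is a harmless elaboration of the same route.
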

\begin{proof}
Note that $T_i\cup\{j,j'\}\sse T_\ell$ for any leaf node $\ell$ in the subtree below $i$. So $T_i\cup\{j,j'\}\in{\cal P}$ and the variables $y(s(T_i\cup\{j,j'\})$ are well-defined. The claim now follows by two applications of constraint \eqref{cons:SA}. 
\hfill \eod
\end{proof}

In constraint~\eqref{cons:DP}, we use $j$ and $j'$ to denote the two children of node $i\in I$.

\subsection{The Rounding Algorithm}
We start with the root node $r\in I$. Here $\{y(s( r))\,:\, s(r)\in\Sigma_r\}$ defines a probability distribution over the states of $r$. We sample a state $a(r)\in\Sigma_r$ from this distribution. Then we continue top-down: given the chosen state $a(i)$ of any node $i$, we sample states for both children of $i$ {\em simultaneously} from their joint distribution given at node $i$.

\begin{algorithm}[H]
\LinesNumbered
\SetKwInOut{Input}{Input}\SetKwInOut{Output}{Output}
\SetKwFor{Do}{Do}{\string:}{end}
 \Input{Optimal solution of \ref{LP}.}
 \Output{A vertex set in ${\cal C}_G$.}
\label{step:sym-round-sample} Sample a state $a(r)$ at the root node by distribution $y(s( r))$\;
 \Do{process all nodes $i$ in ${\cal T}$ in order of increasing depth}{
  Sample states $a(j),a(j')$ for the children of node $i$ by joint distribution
\begin{equation}\Pr[a(j)=s(j)\mbox{ and }a(j')=s(j')] \,\,= \,\, \frac{y(s(T_i\cup\{j,j'\}))}{y(s(T_i))},\label{eq:sym-SA-round}\end{equation}
 where   $s(T_i)=a(T_i)$.
 }
  \Do{process all nodes $i$ in ${\cal T}$ in order of decreasing depth}{
 $R_i =X_{i,a(i)} \cup R_{j}\cup R_{j'}$ where $j,j'$ are the children of $i$.
 }
$R=R_r$\;
\Return $R$.
\caption{Rounding Algorithm for \ref{LP}}
\end{algorithm}
\subsection{Algorithm Analysis}
\begin{lemma}
 \eqref{LP} is a valid relaxation of \gcmc.
\end{lemma}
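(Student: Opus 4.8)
The plan is to show two things: first, that every feasible solution $S\in\gc$ of \gcmc can be mapped to a feasible point of \eqref{LP} whose objective value equals the cut weight $c(\delta S)$; and second, that \eqref{LP} is indeed a maximization relaxation, so its optimum is at least $\opt$.

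First I would take an arbitrary $S\in\gc$ and invoke Claim~\ref{cl:DP-sol-state} to obtain the associated states $\{b(i)\in\Sigma_i\}_{i\in I}$ with $(b(j),b(j'))\in{\cal F}_{i,b(i)}$ for every internal node, $\fs_{\ell,b(\ell)}\ne\emptyset$ for every leaf, and $S=\bigcup_i X_{i,b(i)}$, together with the key pointwise statement $u\in S\iff u\in X_{\overline u,b(\overline u)}$. Then I would define the integral LP point by setting $y(s(N))=1$ if $s(k)=b(k)$ for all $k\in N$, and $y(s(N))=0$ otherwise, for each $N\in{\cal P}$; and setting $z_{uv}$ to be $1$ if $(u,v)\in\delta S$ and $0$ otherwise.

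Next I would verify each constraint for this point. Constraint~\eqref{cons:r} holds since exactly one state at the root, namely $b(r)$, gets value $1$. The \SA consistency constraints~\eqref{cons:SA} hold because, for $N\cup\{i\}\in{\cal P}$, the left side is $1$ iff $s\equiv b$ on $N$, and then exactly the single term with $s(i)=b(i)$ on the right contributes $1$; if $s\not\equiv b$ on $N$ both sides are $0$. Constraint~\eqref{cons:DP} holds because the only way $y(s(\{i,j,j'\}))=1$ is $s\equiv b$ on $\{i,j,j'\}$, and Claim~\ref{cl:DP-sol-state} guarantees $(b(j),b(j'))\in{\cal F}_{i,b(i)}$, so the forbidden configurations all get value $0$; similarly~\eqref{cons:DP-leaf} holds since $\fs_{\ell,b(\ell)}\ne\emptyset$. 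The bounds~\eqref{cons:01} are immediate. For the objective-defining constraint~\eqref{cons:obj}, I would use the equivalence $u\in S\iff u\in X_{\overline u,b(\overline u)}$: the only nonzero term in the double sum on the right is $y(s(\{\overline u,\overline v\}))$ with $s\equiv b$, and it appears in the first sum precisely when $u\in X_{\overline u,b(\overline u)}$ and $v\notin X_{\overline v,b(\overline v)}$ (i.e.\ $u\in S$, $v\notin S$), and in the second sum precisely when $u\notin S$, $v\in S$; in either case the right side equals $1=z_{uv}$, and when $u,v$ are on the same side of the cut both the right side and $z_{uv}$ are $0$. Hence the LP objective $\sum c_{uv}z_{uv}$ for this point equals $\sum_{u\in S,v\notin S}c_{uv}=c(\delta S)$. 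Taking $S$ to be an optimal \gcmc solution shows the LP optimum is $\ge\opt$, which is what ``valid relaxation'' means for this maximization problem.

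The main obstacle is the bookkeeping in constraint~\eqref{cons:obj}: one must be careful that $\overline u$ and $\overline v$ may or may not be distinct nodes, that the sets $X_{\overline u,s(\overline u)}$ range only over states of a single node, and that the two double sums together account for exactly one of the two orientations of a cut edge. A minor subtlety worth a remark is that the $y$-point is well-defined only because for each $N\in{\cal P}$ the tuple $\{b(k)\}_{k\in N}$ is a valid assignment of states (each $b(k)\in\Sigma_k$), so the indicator is unambiguous; this is immediate from Claim~\ref{cl:DP-sol-state}. Everything else is a routine check that the integral point satisfies linear equalities, so I do not anticipate any real difficulty beyond careful case analysis.
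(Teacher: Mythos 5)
Your proof is correct and follows essentially the same approach as the paper: both invoke Claim~\ref{cl:DP-sol-state} to obtain states $\{b(i)\}$, define the integral solution $y(s(N))$ as the indicator that $s\equiv b$ on $N$, verify the constraints, and use $u\in S\iff u\in X_{\overline u,b(\overline u)}$ to show $z_{uv}$ is exactly the cut indicator. Your write-up is just more explicit about constraint~\eqref{cons:r} and the case analysis for~\eqref{cons:obj}, which the paper leaves as ``it is easy to see.''
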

\begin{proof}
Let $S\in \gc$ be any feasible solution to the \gcmc instance. Let $\{b(i)\}_{i\in I}$ denote the states given by Claim~\ref{cl:DP-sol-state} corresponding to $S$. For any subset $N\in {\cal P}$ of nodes, and for all $\{s(i)\in \Sigma_i\}_{i\in N}$, set
$$y(s(N)) = \left\{
\begin{array}{ll}
1, & \mbox{ if $s(i)=b(i)$ for all }i\in N;\\
0, & \mbox{ otherwise.}
\end{array}\right.
$$
It is easy to see that constraints~\eqref{cons:SA} and~\eqref{cons:01} are satisfied. By the first two properties in Claim~\ref{cl:DP-sol-state}, it follows that constraints~\eqref{cons:DP} and~\eqref{cons:DP-leaf} are also satisfied. The last property in Claim~\ref{cl:DP-sol-state} implies that $u\in S \iff u\in X_{\overline{u},b(\overline{u})}$ for any vertex $u\in V$. So any edge $\{u,v\}$ is cut exactly when one of the following occurs:
\begin{itemize}
\item $u\in X_{\overline{u},b(\overline{u})}$ and $v\not\in X_{\overline{v},b(\overline{v})}$;
\item $u\not\in X_{\overline{u},b(\overline{u})}$ and $v \in X_{\overline{v},b(\overline{v})}$.
\end{itemize}
Using the setting of variable $z_{uv}$ in~\eqref{cons:obj} it follows that $z_{uv}$ is exactly the indicator of edge $\{u,v\}$ being cut by $S$. Thus the objective value in ~\eqref{LP} is $c(\delta S)$.
\hfill \eod
\end{proof}

\begin{lemma}
\eqref{LP} has a polynomial number of variables and constraints. Hence the overall algorithm runs in  polynomial time.
\end{lemma}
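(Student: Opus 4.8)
The plan is to bound separately the number of LP variables and the number of constraints, and then invoke the polynomial-time solvability of linear programs. The key quantity to control is $|{\cal P}|$, the number of index sets $N$ over which we have $y$-variables, together with the number $t^{|N|}$ of state-assignments to each such $N$. So the first step is to bound $|T_i|$ for every node $i$. Since the balanced tree-decomposition of Theorem~\ref{thm:treedecomp} is a binary tree of depth $d = 2\lceil \log_{5/4}(2|V|)\rceil = O(\log n)$, the $r$--$i$ path has at most $d+1$ nodes, and $T_i$ adds at most one sibling per internal node on that path; hence $|T_i| \le 2d+1 = O(\log n)$ for every $i$.

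Next I would bound $|{\cal P}|$. Each $J \in {\cal P}$ is contained in $T_{\ell_1}\cup T_{\ell_2}$ for some pair of leaves $\ell_1,\ell_2$, so $|J| \le 2(2d+1) = O(\log n)$ and $J$ is a subset of a fixed $O(\log n)$-size set determined by the pair $(\ell_1,\ell_2)$. There are at most $|I|^2 \le (2|V|)^2$ leaf-pairs, and for each the number of subsets of $T_{\ell_1}\cup T_{\ell_2}$ is $2^{O(\log n)} = n^{O(1)}$; thus $|{\cal P}| = n^{O(1)}$. For each $N \in {\cal P}$ the number of state-tuples $\{s(k)\in\Sigma_k\}_{k\in N}$ is at most $t^{|N|} \le t^{O(\log n)} = n^{O(\log t)} = n^{O(1)}$ since $t = O(1)$ by Assumption~\ref{assm:DP}(1). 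Multiplying, the total number of $y$-variables is $n^{O(1)}$, and there are $\binom{|V|}{2}$ additional $z$-variables, so the LP has polynomially many variables. The constraints are counted the same way: \eqref{cons:obj} has $\binom{|V|}{2}$ constraints; \eqref{cons:SA} is indexed by $(N,i)$ with $N,N\cup\{i\}\in{\cal P}$ and a state-tuple on $N$, which is $n^{O(1)}$; \eqref{cons:r} is one constraint; \eqref{cons:DP} and \eqref{cons:DP-leaf} are indexed by a node and a (bounded) state-tuple, hence $O(|I|)\cdot O(1) = n^{O(1)}$; and \eqref{cons:01} is one bound per $y$-variable. So the LP has polynomial size, and it can be solved in polynomial time by any standard LP algorithm (e.g.\ the ellipsoid or interior-point method).

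Finally I would note that the rounding algorithm itself runs in polynomial time: it performs one sampling step per node of ${\cal T}$, and each step only reads the values $y(s(T_i\cup\{j,j'\}))$ and $y(s(T_i))$, of which there are $O(1)$ relevant ones once $a(T_i)$ is fixed (since $|{\cal F}_{i,\sigma}|=p=O(1)$ and the denominator is a single number); the bottom-up pass that forms the $R_i$'s touches each node once. Hence the overall algorithm is polynomial. The main thing to get right — the only step that is not purely bookkeeping — is the bound $|T_i| = O(\log n)$, which crucially uses that the tree-decomposition from Theorem~\ref{thm:treedecomp} is both \emph{binary} (so each path-node contributes at most one extra sibling to $T_i$) and of \emph{logarithmic depth}; everything else then follows from $t,p = O(1)$ and elementary counting. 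I would present this as a short paragraph-style proof rather than a formal enumeration.

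\begin{proof}
We use the balanced tree-decomposition of Theorem~\ref{thm:treedecomp}: ${\cal T}$ is a binary tree with $|I| = O(|V|)$ nodes and depth $d = 2\lceil \log_{5/4}(2|V|)\rceil = O(\log n)$. For any node $i\in I$, the $r$--$i$ path has at most $d+1$ nodes, and $T_i$ consists of these nodes together with the (at most one) sibling of each non-root node on the path other than $i$; hence $|T_i| \le 2d+1 = O(\log n)$. Consequently every $J\in{\cal P}$ satisfies $|J| \le 2(2d+1) = O(\log n)$, and $J$ is a subset of $T_{\ell_1}\cup T_{\ell_2}$ for one of at most $|I|^2 = O(|V|^2)$ leaf-pairs; since each such set has at most $2^{O(\log n)} = n^{O(1)}$ subsets, we get $|{\cal P}| = n^{O(1)}$. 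For each $N\in{\cal P}$ the number of state-assignments $\{s(k)\in\Sigma_k\}_{k\in N}$ is at most $t^{|N|} = t^{O(\log n)} = n^{O(1)}$, using $t = O(1)$ from Assumption~\ref{assm:DP}(1). Therefore the number of $y$-variables is $n^{O(1)}$; adding the $\binom{|V|}{2}$ variables $z_{uv}$, \eqref{LP} has polynomially many variables.

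The same estimates bound the constraints. There are $\binom{|V|}{2}$ constraints of type~\eqref{cons:obj} and of type in~\eqref{cons:01} (one per $y$-variable and $z$-variable). Constraints~\eqref{cons:SA} are indexed by a pair $N, N\cup\{i\}\in{\cal P}$ and a state-assignment on $N$, giving $n^{O(1)}$ of them. Constraint~\eqref{cons:r} is a single equation, and constraints~\eqref{cons:DP} and~\eqref{cons:DP-leaf} are indexed by a node of $I$ together with a state-assignment on an $O(1)$-size set of nodes, so there are $O(|I|)\cdot O(1) = n^{O(1)}$ of them. Thus \eqref{LP} has polynomial size and can be solved in polynomial time.

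It remains to observe that the rounding algorithm runs in polynomial time: it makes one sampling step at each of the $|I| = O(|V|)$ nodes, and at a node $i$ with fixed $a(T_i)$ the distribution~\eqref{eq:sym-SA-round} is over the at most $|{\cal F}_{i,a(i)}| \le p = O(1)$ relevant pairs $(s(j),s(j'))$ (the others having numerator $0$ by~\eqref{cons:DP}), each probability being a ratio of two LP values; the final bottom-up pass computing the sets $R_i$ visits each node once. Hence the overall algorithm is polynomial.
\hfill \eod
\end{proof}
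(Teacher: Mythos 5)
Your proof is correct and follows essentially the same approach as the paper: bound $|T_i|=O(\log n)$ using the binary, logarithmic-depth balanced tree-decomposition, deduce $|{\cal P}|=n^{O(1)}$ from the $O(n^2)$ leaf-pairs, and multiply by $t^{O(\log n)}=n^{O(1)}$ state-assignments. You give slightly more detail than the paper (explicit constraint-counting and a spelled-out argument that the rounding step is polynomial, where the paper just asserts it), but the substance is identical.
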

\begin{proof}
There are ${n\choose 2}=O(n^2)$ variables $z_{uv}$. Since the tree is binary, we have $|T_i|\le 2d$ for any node $i$, where $d=O(\log n)$ is the depth of the tree-decomposition. Moreover there are only $O(n^2)$ pairs of leaves as there are $O(n)$ leaf nodes. For each pair $\ell_1,\ell_2$ of leaves, we have $|T_{\ell_1}\cup T_{\ell_2}|\le 4d$. Thus $|{\cal P}|\le O(n^2)\cdot 2^{4d}=poly(n)$. By Assumption~\ref{assm:DP}, we have $\max|\fs_{i,\sigma}|=t=O(1)$, so the number of $y$-variables is at most $|{\cal P}|\cdot t^{4d}=poly(n)$. This shows that ~\eqref{LP} has polynomial size and can be solved optimally in polynomial time. Finally, it is easy to see that the rounding algorithm runs in polynomial time.
\hfill \eod
\end{proof}

\begin{lemma}
The algorithm's solution $R$ is always feasible.
\end{lemma}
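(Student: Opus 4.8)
The plan is to show that the random states $\{a(i)\}_{i\in I}$ produced by the top-down sampling satisfy exactly the hypotheses of Assumption~\ref{assm:DP}(3), so that the bottom-up reconstruction $R_i = X_{i,a(i)} \cup R_j \cup R_{j'}$ yields $R_i \in \fs_{i,a(i)}$ for every node $i$, and in particular $R = R_r \in \fs_{r,a(r)} \sse {\cal C}_G$ by Assumption~\ref{assm:DP}(4).

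\begin{proof}
We first argue that the sampling is well-defined and that all sampled state-tuples have positive $y$-weight. At the root, $\{y(s(r))\}_{s(r)\in\Sigma_r}$ is a probability distribution by constraint~\eqref{cons:r}, so $a(r)$ is drawn with $y(a(r))>0$. Inductively, suppose at node $i$ with children $j,j'$ we have already drawn $a(T_i)$ with $y(a(T_i))>0$. By Claim~\ref{cl:sym-LP-distr}, the quantities in~\eqref{eq:sym-SA-round} sum to $1$ over $(s(j),s(j'))\in\Sigma_j\times\Sigma_{j'}$, so they form a valid probability distribution (each term is nonnegative by~\eqref{cons:01}); hence $(a(j),a(j'))$ is drawn with $y(a(T_i\cup\{j,j'\}))>0$, and since $T_j \sse T_i\cup\{j,j'\}$, constraint~\eqref{cons:SA} gives $y(a(T_j)) \ge y(a(T_i\cup\{j,j'\}))>0$ (by nonnegativity of the summands), completing the induction. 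In particular, for every leaf $\ell$ we have $y(a(\ell))>0$, so by constraint~\eqref{cons:DP-leaf} we get $\fs_{\ell,a(\ell)}\ne\emptyset$; and for every internal node $i$ with children $j,j'$, since $y(a(\{i,j,j'\}))>0$ (it is a marginal of $y(a(T_i\cup\{j,j'\}))$ via~\eqref{cons:SA}, using that $\{i,j,j'\}\sse T_i\cup\{j,j'\}$), constraint~\eqref{cons:DP} forces $(a(j),a(j'))\in{\cal F}_{i,a(i)}$.

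Now I prove by induction on the height of node $i$ (bottom-up, matching the second loop of the algorithm) that $R_i\in\fs_{i,a(i)}$. For a leaf $\ell$, the algorithm sets $R_\ell = X_{\ell,a(\ell)}$, and since $\fs_{\ell,a(\ell)}\ne\emptyset$, Assumption~\ref{assm:DP}(3) (the leaf case stated there) gives $\fs_{\ell,a(\ell)} = \{X_{\ell,a(\ell)}\}$, so $R_\ell\in\fs_{\ell,a(\ell)}$. For an internal node $i$ with children $j,j'$, by the inductive hypothesis $R_j\in\fs_{j,a(j)}$ and $R_{j'}\in\fs_{j',a(j')}$, and we showed $(a(j),a(j'))\in{\cal F}_{i,a(i)}$; hence by the subproblem identity in Assumption~\ref{assm:DP}(3), $R_i = X_{i,a(i)}\cup R_j\cup R_{j'}\in\fs_{i,a(i)}$. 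Applying this at the root, $R = R_r\in\fs_{r,a(r)}\sse\bigcup_{\sigma\in\Sigma_r}\fs_{r,\sigma} = {\cal C}_G$ by Assumption~\ref{assm:DP}(4), so $R$ is feasible. \hfill\eod
\end{proof}

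The main obstacle — really the only subtle point — is the positivity bookkeeping: one must verify that every conditional distribution used in~\eqref{eq:sym-SA-round} is drawn from a node whose accumulated $y$-weight is strictly positive, so that the constraints~\eqref{cons:DP} and~\eqref{cons:DP-leaf} (which merely zero out invalid states rather than asserting validity directly) can be invoked to conclude $(a(j),a(j'))\in{\cal F}_{i,a(i)}$ and $\fs_{\ell,a(\ell)}\ne\emptyset$. This hinges on the membership relations $T_j, \{i,j,j'\} \sse T_i\cup\{j,j'\} \sse T_\ell$ among the relevant node sets, which is exactly what places all these sets in ${\cal P}$ and lets the Sherali-Adams marginalization constraint~\eqref{cons:SA} propagate positivity; everything else is a direct unwinding of Assumption~\ref{assm:DP}.
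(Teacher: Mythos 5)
Your proof is correct and follows essentially the same route as the paper: validity of the conditional distributions via Claim~\ref{cl:sym-LP-distr}, deducing $(a(j),a(j'))\in{\cal F}_{i,a(i)}$ and $\fs_{\ell,a(\ell)}\ne\emptyset$ from constraints~\eqref{cons:DP} and~\eqref{cons:DP-leaf}, and then the bottom-up induction giving $R_i\in\fs_{i,a(i)}$. The only difference is that you make explicit the positivity bookkeeping that the paper compresses into the phrase ``with probability one,'' which is a welcome clarification rather than a new idea.
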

\begin{proof}
Note that the distribution used in Step~\ref{step:sym-round-sample} is always valid due to Claim~\ref{cl:sym-LP-distr};  so the states $a(i)$s are well-defined.

We now show that  for any node $i\in I$ with children $j,j'$ we have $(a(j),a(j'))\in {\cal F}_{i,a(i)}$. Indeed, at the iteration for node $i$ (when $a(j)$ and $a(j')$ are set) using the probability distribution in~\eqref{eq:sym-SA-round} and by constraint~\eqref{cons:DP}, we obtain  that $(a(j),a(j'))\in {\cal F}_{i,a(i)}$ with probability one.

We show that for each node $i\in I$, the subset $R_i\in \fs_{i,a(i)}$ by induction on the height of $i$. The base case is when $i$ is a leaf. In this case, due to constraint~\eqref{cons:DP-leaf} (and the validity of the rounding algorithm) we know that $\fs_{i,a(i)}\ne \emptyset$. So $R_i=X_{i,a(i)}\in \fs_{i,a(i)}$ by Assumption~\ref{assm:DP}(3). For the inductive step, consider node $i\in I$ with children $j,j'$ where $R_j\in \fs_{j,a(j)}$ and $R_{j'} \in \fs_{j',a(j')}$.  Moreover, from the property above, $(a(j),a(j'))\in {\cal F}_{i,a(i)}$. Now using Assumption~\ref{assm:DP}(3) we have $R_i =X_{i,a(i)} \cup R_{j}\cup R_{j'}\in \fs_{i,a(i)}$. Thus the final solution $R\in \gc$.
\hfill \eod
\end{proof}

\begin{cl}
A vertex $u$ is contained in solution $R$ if and only if $u\in X_{\bar{u},a(\bar{u})}$.
\end{cl}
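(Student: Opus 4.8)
The plan is to mirror the inductive structure already used in the feasibility lemma and in Claim~\ref{cl:DP-sol-state}. Recall that the rounding algorithm builds $R=R_r$ bottom-up via $R_i = X_{i,a(i)}\cup R_j\cup R_{j'}$, and we have already shown $R_i\in\fs_{i,a(i)}$ for every node $i$. So the ``only if'' direction is immediate: since $\bar u$ is the highest node containing $u$, Assumption~\ref{assm:DP}(2) gives $R_{\bar u}\cap X_{\bar u} = X_{\bar u,a(\bar u)}$, hence $u\notin X_{\bar u,a(\bar u)}$ together with $u\in X_{\bar u}$ forces $u\notin R_{\bar u}$; and because $\bar u$ is the highest node with $u\in X_{\bar u}$, no ancestor of $\bar u$ contributes $u$ either (any $X_{k,a(k)}\subseteq X_k$ and $u\notin X_k$ for $k$ a strict ancestor of $\bar u$), so $u\notin R_r = R$. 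This is exactly the argument from the second half of the proof of Claim~\ref{cl:DP-sol-state}, just applied to the rounded states $a(\cdot)$ and sets $R_i$ instead of $b(\cdot)$ and $B_i$.

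For the ``if'' direction, suppose $u\in X_{\bar u,a(\bar u)}$. First I would note $X_{\bar u,a(\bar u)}\subseteq R_{\bar u}$ directly from the recursion $R_{\bar u}=X_{\bar u,a(\bar u)}\cup R_j\cup R_{j'}$, so $u\in R_{\bar u}$. Then I would propagate this up the $r$–$\bar u$ path: if $k$ is the parent of a node $m$ on this path and $u\in R_m$, then $R_k = X_{k,a(k)}\cup R_m\cup R_{m'}\supseteq R_m\ni u$, so $u\in R_k$. Iterating from $\bar u$ up to the root yields $u\in R_r=R$. Both directions are short monotonicity/telescoping arguments over the root-to-$\bar u$ path.

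The only mild subtlety — the step I'd flag as needing a sentence of care rather than a genuine obstacle — is justifying that for a strict ancestor $k$ of $\bar u$ we have $u\notin X_{k,a(k)}$ in the ``only if'' direction: this is simply because $\bar u$ is defined as the \emph{highest} (closest to the root) node whose bag contains $u$, so $u\notin X_k$ for every strict ancestor $k$ of $\bar u$, and $X_{k,a(k)}\subseteq X_k$ by Assumption~\ref{assm:DP}(2). There is no real difficulty here; the claim is essentially a restatement of the last sentence of Claim~\ref{cl:DP-sol-state} for the randomly rounded solution, and its whole content is that the vertices of $R$ are determined locally by the states $a(\cdot)$ at the highest-occurrence nodes — which is precisely what makes the LP objective~\eqref{cons:obj}, written only in terms of the pair $(\bar u,\bar v)$, match the true cut value of $R$ in the subsequent analysis.
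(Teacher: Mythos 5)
Your argument is correct and takes essentially the same route as the paper's, which simply observes that the proof is identical to the last part of Claim~\ref{cl:DP-sol-state}: apply Assumption~\ref{assm:DP}(2) to $R_{\bar u}\in\fs_{\bar u,a(\bar u)}$ (established in the feasibility lemma) together with the tree-decomposition property that $u$ only appears in bags within the subtree rooted at $\bar u$. One small imprecision in your ``only if'' direction: after ruling out $u\in R_{\bar u}$ and $u\in X_{k,a(k)}$ for strict ancestors $k$, you still need to rule out $u\in R_{k'}$ for subtrees hanging off the root-to-$\bar u$ path at sibling nodes $k'$; this follows from the same connectedness fact (every bag containing $u$ lies in the subtree of $\bar u$, so $u\notin X_m$ for any $m$ outside that subtree), which is exactly what the paper compresses into ``it suffices to show $u\notin B_{\bar u}$.''
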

\begin{proof}
This proof is identical to that of the last property in Claim~\ref{cl:DP-sol-state}.
\hfill \eod
\end{proof}

In the rest of this section, we show that every edge $(u,v)$ is cut by solution $R$ with probability at least $z_{uv}/2$, which would prove the algorithm's approximation ratio. Lemma~\ref{lem:sym-alg-cut1} handles the case when $\bar{u} \in T_{\bar{v}}$ (the case $\bar{v} \in T_{\bar{u}}$ is identical). And Lemma~\ref{lem:sym-alg-cut2} handles the (harder) case when $\bar{u}\not\in T_{\bar{v}}$ and  $\bar{v}\not\in T_{\bar{u}}$.

We first state some useful claims before proving the lemmas.

\begin{observation}[see \cite{GTW13} for a similar use of this principle]
Let $X,Y$ be two jointly distributed $\{0,1\}$ random variables. Then $\Pr(X=1)\Pr(Y=0)+\Pr(X=0)\Pr(Y=1)\ge\frac12[\Pr(X=0,Y=1)+\Pr(X=1,Y=0)]$.\label{obs:joint_indep}
\end{observation}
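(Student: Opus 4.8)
The statement to prove is Observation~\ref{obs:joint_indep}: for jointly distributed $\{0,1\}$ random variables $X,Y$,
$$\Pr(X=1)\Pr(Y=0)+\Pr(X=0)\Pr(Y=1)\ge\tfrac12\bigl[\Pr(X=0,Y=1)+\Pr(X=1,Y=0)\bigr].$$

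The plan is to expand both sides in terms of the four joint probabilities of the distribution and reduce to an elementary inequality. Let me introduce shorthand $p_{ab}=\Pr(X=a,Y=b)$ for $a,b\in\{0,1\}$, so that $p_{00}+p_{01}+p_{10}+p_{11}=1$ and all $p_{ab}\ge 0$. Then $\Pr(X=1)=p_{10}+p_{11}$, $\Pr(Y=0)=p_{00}+p_{10}$, $\Pr(X=0)=p_{00}+p_{01}$, $\Pr(Y=1)=p_{01}+p_{11}$. The left-hand side becomes $(p_{10}+p_{11})(p_{00}+p_{10})+(p_{00}+p_{01})(p_{01}+p_{11})$, and the right-hand side is $\tfrac12(p_{01}+p_{10})$.

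First I would observe that the right-hand side $p_{01}+p_{10}$ can be written using $1=\sum p_{ab}$, or alternatively I would try to directly lower-bound the left-hand side. The cleanest route: note $\Pr(X=1)\Pr(Y=0)+\Pr(X=0)\Pr(Y=1)\ge \Pr(X=1,Y=0)\cdot\bigl(\Pr(Y=0)+\Pr(X=0)\bigr)$? That is not quite it. Instead, observe $\Pr(X=1)\ge \Pr(X=1,Y=0)=p_{10}$ and $\Pr(Y=0)\ge \Pr(X=1,Y=0)=p_{10}$; similarly $\Pr(X=0)\ge p_{01}$ and $\Pr(Y=1)\ge p_{01}$. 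These bounds alone give LHS $\ge p_{10}^2+p_{01}^2$, which is not enough (it could be much smaller than $\tfrac12(p_{01}+p_{10})$ when the probabilities are small). So the better approach is to use that the full probability mass is $1$: write $\Pr(X=1)+\Pr(Y=0) = p_{10}+p_{11}+p_{00}+p_{10} = 1 + p_{10} - p_{01} \ge p_{10}$ ... hmm, let me think again about the right grouping.

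The main obstacle — and the key step — is finding the right way to exploit normalization. Here is the plan I expect to work: by the elementary inequality $a+b\ge 2\sqrt{ab}$ is the wrong direction; instead use $\alpha\beta + \gamma\delta \ge $ something. Actually the clean identity is: $\Pr(X=1)\Pr(Y=0)+\Pr(X=0)\Pr(Y=1) \ge \tfrac12\bigl[\Pr(X=1)\Pr(Y=0)+\Pr(X=0)\Pr(Y=1) + \Pr(X=1,Y=0) + \Pr(X=0,Y=1)\bigr] - \tfrac12[\dots]$ — too convoluted. The direct approach: since $\Pr(X=1)\ge \Pr(X=1,Y=0)$ and $\Pr(X=0)\ge\Pr(X=0,Y=1)$, and since $\Pr(Y=0)+\Pr(Y=1)=1$, I get
$$\Pr(X=1)\Pr(Y=0)+\Pr(X=0)\Pr(Y=1)\ge \Pr(X=1,Y=0)\Pr(Y=0)+\Pr(X=0,Y=1)\Pr(Y=1).$$
This still is not obviously $\ge\tfrac12(p_{10}+p_{01})$. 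So the real plan is just to do the two-line algebra: expand LHS$-$RHS in the $p_{ab}$, substitute $1=\sum p_{ab}$ where a linear term appears, and verify the resulting quadratic form in the $p_{ab}$ is nonnegative on the simplex — e.g. show it equals $p_{10}\Pr(X=1)+p_{01}\Pr(X=0) + (\text{nonneg cross terms}) - \tfrac12(p_{10}+p_{01})$ and check term by term, or more slickly apply Observation-style convexity: the function $(X,Y)\mapsto$ "is cut" has the stated property because conditioning can only hurt by a factor of $2$. I expect expanding and regrouping to a manifestly nonnegative combination of the $p_{ab}$ (using $p_{10}+p_{11}+p_{00}+p_{10}=1+p_{10}-p_{01}$ type substitutions) to be the crux; it is routine but requires care to land on the factor $\tfrac12$ exactly, and equality should hold when, e.g., the mass is split evenly between $(1,0)$ and $(0,1)$.
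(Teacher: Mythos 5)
Your plan---expand both sides in the four joint probabilities $p_{ab}=\Pr(X=a,Y=b)$ and homogenize the right-hand side using $\sum_{a,b}p_{ab}=1$---is sound and does work, but as written you never execute the one step that actually constitutes the proof: you end by asserting that the expansion ``should'' regroup into something manifestly nonnegative without exhibiting the regrouping. For the record, it does, and cleanly. Writing the right-hand side as $\tfrac12(p_{01}+p_{10})(p_{00}+p_{01}+p_{10}+p_{11})$ and expanding both sides yields
$$\mathrm{LHS}-\mathrm{RHS}\;=\;\tfrac12\,(p_{01}-p_{10})^2\;+\;\tfrac12\,(p_{00}+p_{11})(p_{01}+p_{10})\;+\;2\,p_{00}\,p_{11}\;\ge\;0,$$
with equality exactly when $p_{00}=p_{11}=0$ and $p_{01}=p_{10}$, i.e.\ the mass is split evenly between $(0,1)$ and $(1,0)$, as you predicted. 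You should include this identity; without it the proposal is a plan, not a proof. Once completed, your route is genuinely different from (and arguably cleaner than) the paper's: the paper sets $a=\Pr(X=0)$, $b=\Pr(Y=0)$, $x=\Pr(X=0,Y=0)$, reduces the claim to $a+b+2x\ge 4ab$, and then argues by cases on whether $a+b\le 1$, using the constraint $x\ge\max\{0,a+b-1\}$ together with AM--GM. Your homogenization avoids the case analysis and the AM--GM step entirely and makes the equality case transparent. The several detours in your write-up (e.g.\ the bounds $\Pr(X=1)\ge p_{10}$, which you correctly note give only $p_{10}^2+p_{01}^2$) are rightly abandoned and should simply be cut.
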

\begin{proof}
Let $\Pr(X=0,Y=0)=x$, $\Pr(X=0)=a$, $\Pr(Y=0)=b$. The probability table is as below: 
\begin{table}[htb]
\centering
\resizebox{0.5\textwidth}{!}{
\begin{tabular}{ccc}
\cline{1-2}
\multicolumn{1}{|c|}{$\quad x\quad$}   & \multicolumn{1}{c|}{$a-x$}     &$ a$   \\ \cline{1-2}
\multicolumn{1}{|c|}{$\quad b-x\quad$} & \multicolumn{1}{c|}{$\quad1+x-a-b\quad$} & $1-a$ \\ \cline{1-2}
$b$                         & $1-b $                         &    
\end{tabular}
}

\label{tab:pro}
\end{table}

Then we want to show $a(1-b)+b(1-a)\ge\frac12(a+b-2x)\Leftrightarrow a+b+2x\ge4ab$. Since each probability is in $[0,1]$, we have $0\le x\le \min\{a,b\}$ and $a+b-1\le x$.

If $a+b>1$, we have $a+b+2x\ge 3a+3b-2> 1$. If $ab<\frac14$, it is done. If $\frac14\le ab\le 1$, we have $6\sqrt{ab}-2\ge 4ab$. Then $a+b+2x\ge 3a+3b-2\ge 4ab$.

If $a+b\le 1$, we have $a+b+2x\ge a+b\ge 2\sqrt{ab}\ge 4ab$.

Combine the above two cases, we have the observation is true.
\hfill \eod
\end{proof}
\begin{cl}
For any node $i$ and state $s(k)\in \Sigma _k$ for all $k\in T_i$, the rounding algorithm satisfies $\Pr[a(T_i)=s(T_i)]=y(s(T_i))$.\label{cl:path}
\end{cl}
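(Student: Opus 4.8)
The plan is to prove Claim~\ref{cl:path} by induction on the depth of node $i$ in the tree-decomposition ${\cal T}$. The statement to establish is that for every node $i$ and every choice of states $s(k)\in\Sigma_k$ for all $k\in T_i$, the rounding algorithm produces $\Pr[a(T_i)=s(T_i)]=y(s(T_i))$. Recall that $T_i$ consists of the nodes on the root-to-$i$ path together with the children of all those path-nodes except $i$ itself; so as we descend from $i$ to one of its children, the set $T$ grows in a controlled way.

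For the base case, take $i=r$. Then $T_r=\{r\}$, and by Step~\ref{step:sym-round-sample} the algorithm samples $a(r)$ exactly from the distribution $y(s(r))$ (which is a genuine distribution by constraint~\eqref{cons:r}), so $\Pr[a(r)=s(r)]=y(s(r))$ as required. For the inductive step, suppose the claim holds for a node $i$ with children $j$ and $j'$, and consider the child $j$ (the argument for $j'$ is symmetric). The key set identity is $T_j = T_i\cup\{j,j'\}$: descending from $i$ to $j$ adds $j$ to the path and adds the children $\{j,j'\}$ of the new path-node $i$, while $i$ itself was already in $T_i$ and its sibling-children are exactly $j,j'$. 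Hence $T_j\in{\cal P}$ (it is contained in $T_\ell$ for any leaf $\ell$ below $j$), and the variable $y(s(T_j))=y(s(T_i\cup\{j,j'\}))$ is well-defined. Now I would write, for a fixed target assignment $s(\cdot)$ on $T_j$ (which restricts to an assignment on $T_i$),
\begin{equation*}
\Pr[a(T_j)=s(T_j)] \;=\; \Pr[a(T_i)=s(T_i)]\cdot \Pr\big[a(j)=s(j),\,a(j')=s(j') \,\big|\, a(T_i)=s(T_i)\big].
\end{equation*}
By the induction hypothesis the first factor is $y(s(T_i))$, and by the definition of the conditional sampling step~\eqref{eq:sym-SA-round} the second factor is $y(s(T_i\cup\{j,j'\}))/y(s(T_i))$ whenever $y(s(T_i))>0$ (and when $y(s(T_i))=0$ the event $a(T_i)=s(T_i)$ has probability zero by induction, and $y(s(T_i\cup\{j,j'\}))=0$ too by constraint~\eqref{cons:SA}, so both sides vanish). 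Multiplying, the factors $y(s(T_i))$ cancel and we obtain $\Pr[a(T_j)=s(T_j)]=y(s(T_i\cup\{j,j'\}))=y(s(T_j))$, completing the induction.

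The main thing to get right — more a bookkeeping subtlety than a genuine obstacle — is the set identity $T_j=T_i\cup\{j,j'\}$ and the corresponding claim that the joint sampling step for the pair $(j,j')$ only refines the assignment already made on $T_i$, i.e. that $a(T_i)$ is already determined before node $i$ is processed in the top-down loop (which holds because all strict ancestors of $i$, and their children, are processed at strictly smaller depth). One should also note that the conditional distribution in~\eqref{eq:sym-SA-round} is a valid probability distribution over $\Sigma_j\times\Sigma_{j'}$ precisely because of Claim~\ref{cl:sym-LP-distr}, which guarantees $\sum_{s(j),s(j')} y(s(T_i\cup\{j,j'\}))=y(s(T_i))$; this is what makes the sampling well-defined and the telescoping exact. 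With these observations in place the computation is the short product-and-cancel displayed above.
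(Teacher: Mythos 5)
Your proposal is correct and follows essentially the same argument as the paper: induction on the depth of $i$, using the set identity $T_j=T_i\cup\{j,j'\}$ and the product rule with the conditional sampling probability from \eqref{eq:sym-SA-round}. You add a bit more care (the $y(s(T_i))=0$ edge case, the appeal to Claim~\ref{cl:sym-LP-distr} for well-definedness) but the route is the same.
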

\begin{proof}
We proceed by induction on the depth of node $i$. It is clearly true when $i=r$, i.e. $T_i=\{r\}$. Assuming the statement is true for node $i$, we will prove it for $i$'s children.  Let $j,j'$ be the children nodes of $i$; note that $T_j=T_{j'}=T_i\cup\{j,j'\}$. Then using~\eqref{eq:sym-SA-round}, we have
$$\Pr[a(T_j)=s(T_j) \,\, | \,\, a(T_i)=s(T_i)]\quad =\quad \frac{y(s(T_i\cup\{j,j'\}))}{y(s(T_i))}.$$
Combined with $\Pr[a(T_i)=s(T_i)]=y(s(T_i))$ we obtain  $\Pr[a(T_j)=s(T_j)]=y(s(T_j))$ as desired.\hfill \eod
\end{proof}

\begin{cl}
For any $u,v\in V$, $s(\bar{u})\in \Sigma_{\bar{u}}$ and $s(\bar{v})\in \Sigma_{\bar{v}}$, we have
$$y(s(\{\bar{u},\bar{v}\}))=\sum_{\substack{s(k)\in\Sigma_{k} \\ k \in T_{i}\setminus \bar{u} \setminus \bar{v}}}  y(s(T_i\cup\{\bar{u},\bar{v}\})),$$  where $i$ is the least common ancestor of $\bar{u}$ and $\bar{v}$.\label{cl:leaf}
\end{cl}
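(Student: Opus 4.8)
The plan is to prove Claim~\ref{cl:leaf} by repeatedly applying the Sherali-Adams consistency constraint~\eqref{cons:SA} to ``sum out'' the states of nodes in $T_i \setminus \{\bar u, \bar v\}$. First I would verify that all the variables in the statement are well-defined: since $\bar u$ and $\bar v$ both lie in $T_{\ell}$ for any leaf $\ell$ below $i$ (as $i$ is their least common ancestor, the $r$--$\bar u$ and $r$--$\bar v$ paths pass through $i$, and both $\bar u,\bar v$ lie in the subtree rooted at $i$ or equal $i$), the set $T_i \cup \{\bar u, \bar v\}$ is contained in $T_\ell$ for a suitable leaf $\ell$, hence lies in $\cal P$; similarly $\{\bar u, \bar v\}$ is a size-$\le 2$ subset of some $T_{\ell_1}\cup T_{\ell_2}$, so $y(s(\{\bar u, \bar v\}))$ is defined. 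I should also note the edge case where one of $\bar u,\bar v$ equals $i$ or is an ancestor of the other, in which case $T_i$ already contains it and the sum/statement degenerates appropriately.

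The core argument is an induction (or, equivalently, a direct telescoping) that peels off one node of $T_i \setminus \{\bar u, \bar v\}$ at a time. Order the nodes of $T_i \setminus \{\bar u, \bar v\}$ as $k_1, k_2, \dots, k_m$ so that each prefix, together with $\{\bar u,\bar v\}$, forms a set in $\cal P$ — for instance, process them from the leaves of $T_i$ upward toward the root, or more simply observe that any subset of $T_i\cup\{\bar u,\bar v\}$ is itself in $\cal P$ since $\cal P$ is closed under taking subsets. Then starting from $N = \{\bar u, \bar v\}$, applying~\eqref{cons:SA} with $i := k_1$ gives $y(s(\{\bar u,\bar v\})) = \sum_{s(k_1)} y(s(\{\bar u,\bar v,k_1\}))$; applying it again with $k_2$ inside the sum, and so on, yields after $m$ steps $y(s(\{\bar u,\bar v\})) = \sum_{s(k_1),\dots,s(k_m)} y(s(\{\bar u,\bar v,k_1,\dots,k_m\})) = \sum_{\substack{s(k)\in\Sigma_k \\ k\in T_i\setminus\bar u\setminus\bar v}} y(s(T_i\cup\{\bar u,\bar v\}))$, which is exactly the claim. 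This is essentially the same mechanism as in Claim~\ref{cl:sym-LP-distr}, just iterated more than twice.

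The only mild obstacle is bookkeeping: I must make sure that at every intermediate step the argument set passed to~\eqref{cons:SA} is genuinely in $\cal P$ and that the node being summed out is not already in it. Closure of $\cal P$ under subsets handles the first point cleanly, and the ordering of $k_1,\dots,k_m$ handles the second. I do not expect any real difficulty beyond stating this carefully; there is no probabilistic content here, only LP-feasibility manipulation. I would therefore keep the proof short, citing the repeated use of~\eqref{cons:SA} and the subset-closure of $\cal P$, and flagging the degenerate cases ($\bar u = i$, or $\bar u \in T_{\bar v}$, etc.) with a one-line remark that the statement then holds trivially or reduces to a shorter sum.
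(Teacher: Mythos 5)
Your proposal is correct and matches the paper's own argument: the paper likewise notes that $T_i\cup\{\bar u,\bar v\}\in{\cal P}$ because $i$ is the least common ancestor (so this set is contained in $T_{\ell_1}\cup T_{\ell_2}$ for leaves $\ell_1,\ell_2$ below $\bar u,\bar v$), and then obtains the identity by repeated application of constraint~\eqref{cons:SA}. Your additional bookkeeping about subset-closure of ${\cal P}$ and the order in which nodes are summed out is a correct elaboration of the same telescoping step, not a different route.
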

\begin{proof}
Since $i$ is the least common ancestor of $\bar{u}$ and $\bar{v}$, we have $T_i\cup\{\bar{u},\bar{v}\}\in{\cal P}$. Then the claim follows by repeatedly applying constraint \eqref{cons:SA}.\hfill \eod
\end{proof}

\begin{lemma}\label{lem:sym-alg-cut1}
Consider any $u,v\in V$ such that $\bar{u} \in T_{\bar{v}}$. Then the probability that edge $(u,v)$ is cut by solution $R$ is $z_{uv}$.
\end{lemma}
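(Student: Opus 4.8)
The plan is to show that when $\bar u \in T_{\bar v}$, the two events ``$u \in R$'' and ``$v \in R$'' are controlled by a single consistent draw along the root-to-$\bar v$ structure, so that the probability of cutting $(u,v)$ matches the LP expression for $z_{uv}$ exactly (not just up to a factor of $\frac12$). First I would use the preceding claim that $u \in R \iff u \in X_{\bar u, a(\bar u)}$ and $v \in R \iff v \in X_{\bar v, a(\bar v)}$, so that the edge $(u,v)$ is cut precisely when exactly one of $u \in X_{\bar u, a(\bar u)}$, $v \in X_{\bar v, a(\bar v)}$ holds. Hence $\Pr[(u,v) \text{ cut}] = \sum \Pr[a(\bar u) = s(\bar u),\ a(\bar v) = s(\bar v)]$, the sum taken over the pairs $(s(\bar u), s(\bar v))$ with $u \in X_{\bar u, s(\bar u)}$, $v \notin X_{\bar v, s(\bar v)}$ or vice versa — exactly the index set appearing in constraint~\eqref{cons:obj}.

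So it suffices to prove $\Pr[a(\bar u) = s(\bar u),\ a(\bar v) = s(\bar v)] = y(s(\{\bar u, \bar v\}))$ for every fixed pair of states. The key structural fact to exploit is that $\bar u \in T_{\bar v}$: this means $\bar u$ lies on the root-to-$\bar v$ path or is a child of a node on that path, so $\bar u \in T_{\bar v}$ and consequently the pair $\{\bar u, \bar v\} \subseteq T_{\bar v}$ (more precisely $T_{\bar v}$ or $T_{\bar v} \cup \{\bar v\}$ — I would track the exact membership carefully, noting that $\bar v$ itself need not be in $T_{\bar v}$ by the definition that excludes $i$'s own children-that-aren't-on-the-path, so I may need $T_{\bar v} \cup \{\bar v\} \in \mathcal P$, which holds since it is contained in $T_\ell$ for a leaf $\ell$ below $\bar v$). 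Then by Claim~\ref{cl:path}, $\Pr[a(T_{\bar v} \cup \{\bar v\}) = s(T_{\bar v} \cup \{\bar v\})] = y(s(T_{\bar v} \cup \{\bar v\}))$ for any state assignment, and marginalizing over all coordinates in $(T_{\bar v} \cup \{\bar v\}) \setminus \{\bar u, \bar v\}$ — using repeated application of the Sherali–Adams consistency constraint~\eqref{cons:SA} exactly as in Claim~\ref{cl:leaf} — collapses both the probability and the $y$-value down to $\{\bar u, \bar v\}$, yielding $\Pr[a(\{\bar u,\bar v\}) = (s(\bar u), s(\bar v))] = y(s(\{\bar u,\bar v\}))$.

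Combining the two displays: $\Pr[(u,v)\text{ cut}]$ equals the sum of $y(s(\{\bar u,\bar v\}))$ over exactly the same index set that defines $z_{uv}$ in~\eqref{cons:obj}, hence equals $z_{uv}$. The case $\bar v \in T_{\bar u}$ is symmetric (swap the roles of $u$ and $v$; the constraint~\eqref{cons:obj} is symmetric in the two summands).

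\textbf{Main obstacle.} The only delicate point is the bookkeeping around the definition of $T_i$: since $T_{\bar v}$ contains the children of nodes on the path \emph{except} $\bar v$, the node $\bar v$ itself is not guaranteed to be in $T_{\bar v}$, and $\bar u$ being ``in $T_{\bar v}$'' could mean $\bar u$ is a sibling-of-an-ancestor rather than an ancestor of $\bar v$. I would handle this by working with the set $N := T_{\bar v} \cup \{\bar u, \bar v\}$, verifying $N \in \mathcal P$ (it sits inside $T_{\ell_1} \cup T_{\ell_2}$ for appropriate leaves $\ell_1, \ell_2$ below $\bar u$ and $\bar v$ respectively), invoking Claim~\ref{cl:path} for the assignment restricted to $T_{\bar v} \cup \{\bar v\}$ (or whichever of $\bar u,\bar v$ already lies on the path) and then one extra round of~\eqref{cons:SA} to bring in the remaining node, and finally marginalizing. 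Everything else is a routine reindexing of sums.
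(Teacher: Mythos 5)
Your proof is correct and follows essentially the same route as the paper: apply Claim~\ref{cl:path} at node $\bar v$, marginalize via repeated use of constraint~\eqref{cons:SA} down to the pair $\{\bar u,\bar v\}$ to get $\Pr[a(\bar u)=s(\bar u),\,a(\bar v)=s(\bar v)]=y(s(\{\bar u,\bar v\}))$, and then reindex the sum to match~\eqref{cons:obj}. Your hedge about whether $\bar v\in T_{\bar v}$ is unnecessary --- by the definition of $T_i$ the node $i$ lies on the $r$--$i$ path and hence $i\in T_i$ (only $i$'s \emph{children} are excluded), so $\{\bar u,\bar v\}\subseteq T_{\bar v}$ outright --- but your workaround via $T_{\bar v}\cup\{\bar u,\bar v\}$ is harmless.
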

\begin{proof}
Applying Claim~\ref{cl:path} with node $i=\bar{v}$, for any $\{s(k)\in \Sigma _k : k\in T_{\bar{v}}\}$, we have $\Pr[a(T_{\bar{v}}) = s(T_{\bar{u}})] = y(s(T_{\bar{u}}))$. Let $D_u=\{s(\bar{u})\in\Sigma_{\bar{u}}|u\in s(\bar{u})\}$ and $D_v=\{s(\bar{v})\in\Sigma_{\bar{v}}|v\in s(\bar{v})\}$. Since $\bar{u}\in T_{\bar{v}}$,
$$\Pr[u\in R, v\not\in R ]=\sum_{s(\bar{u}) \in D_u} \sum_{s(\bar{v})\not\in D_v} \sum_{\substack{s(k)\in\Sigma_{k} \\ k \in T_{\bar{v}}\setminus \bar{u} \setminus \bar{v}}}  y(s(T_{\bar{u}})) = \sum_{s(\bar{u}) \in D_u} \sum_{s(\bar{v})\not\in D_v} y(s(\bar{u},\bar{v}))$$
The last equality above is by repeated application of constraint~\eqref{cons:SA}. Similarly we have
$$\Pr[u\not\in R, v\in R ]= \sum_{s(\bar{u}) \not\in D_u} \sum_{s(\bar{v})\in D_v} y(s(\bar{u},\bar{v})),$$
which combined with constraint~\eqref{cons:r} implies $\Pr[|\{u,v\}\cap R| =1 ] = z_{uv}$.
\hfill \eod \end{proof}

\begin{lemma}\label{lem:sym-alg-cut2}
Consider any $u,v\in V$ such that $\bar{u}\not\in T_{\bar{v}}$ and  $\bar{v}\not\in T_{\bar{u}}$. Then the probability that edge $(u,v)$ is cut by solution $R$ is at least $z_{uv}/2$.
\end{lemma}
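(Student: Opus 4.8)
The plan is to condition on the states that the rounding algorithm assigns ``near the top'' of ${\cal T}$, argue that once this conditioning is fixed the events $u\in R$ and $v\in R$ become independent, and then apply Observation~\ref{obs:joint_indep} coordinate by coordinate.

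First I would pin down the geometry. Let $w\in I$ be the least common ancestor of $\bar u$ and $\bar v$, with children $j$ and $j'$. Since $\bar u\notin T_{\bar v}$ and $\bar v\notin T_{\bar u}$, node $w$ is a proper ancestor of both, and in fact $\bar u$ lies strictly below $j$ while $\bar v$ lies strictly below $j'$; in particular $\bar u$ and $\bar v$ belong to the disjoint subtrees rooted at $j$ and $j'$, and $\bar u,\bar v\notin T_j$. Unwinding the definition of the sets $T_i$, one checks $T_j=T_{j'}=T_w\cup\{j,j'\}$, that $T_j\cup\{\bar u\}\subseteq T_{\bar u}$ and $T_j\cup\{\bar v\}\subseteq T_{\bar v}$, and that each of $T_j$, $T_j\cup\{\bar u\}$, $T_j\cup\{\bar v\}$, $T_j\cup\{\bar u,\bar v\}$ lies in ${\cal P}$ (being contained in $T_{\ell_1}\cup T_{\ell_2}$ for any leaves $\ell_1,\ell_2$ below $\bar u,\bar v$), so the $y$-variables used below are well-defined.

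The crux is the following claim: for every assignment $\tau$ of states to the nodes of $T_j$ with $y(\tau)>0$, conditioned on $a(T_j)=\tau$ the random states $a(\bar u)$ and $a(\bar v)$ are \emph{independent}, with
\[
\Pr\!\big[a(\bar u)=\alpha \,\big|\, a(T_j)=\tau\big]=\frac{y(\tau,\bar u{=}\alpha)}{y(\tau)},\qquad
\Pr\!\big[a(\bar v)=\beta \,\big|\, a(T_j)=\tau\big]=\frac{y(\tau,\bar v{=}\beta)}{y(\tau)},
\]
where $y(\tau,\bar u{=}\alpha)$ denotes the LP variable on $T_j\cup\{\bar u\}$ that extends $\tau$ by assigning state $\alpha$ to $\bar u$. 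The two marginal identities follow by applying Claim~\ref{cl:path} at node $\bar u$ (resp. $\bar v$), using $T_j\cup\{\bar u\}\subseteq T_{\bar u}$ and summing out the states on $T_{\bar u}\setminus(T_j\cup\{\bar u\})$ with constraint~\eqref{cons:SA}, then dividing by $\Pr[a(T_j)=\tau]=y(\tau)$ (Claim~\ref{cl:path} at node $j$). The independence is where the top-down structure of the rounding enters: by the time any node in the subtree rooted at $j$ is processed, all states of $T_j=T_w\cup\{j,j'\}$ are already fixed, and every sampling step performed at a node $x$ in that subtree reads only $a(T_x)$ with $T_x\subseteq T_j\cup(\text{subtree rooted at }j)$; hence, conditioned on $a(T_j)=\tau$, the whole state configuration in the subtree rooted at $j$ --- and in particular $a(\bar u)$ --- is a function of $\tau$ and the independent random choices made at nodes of that subtree only, and symmetrically for $j'$ and $a(\bar v)$. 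Since the two subtrees are disjoint, $a(\bar u)$ and $a(\bar v)$ are conditionally independent given $a(T_j)$. I expect this independence claim --- specifically the bookkeeping that each conditioning set $T_x$ meets only one of the two subtrees --- to be the main point requiring care; the rest is routine LP manipulation.

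Granting the claim, I would finish as follows. Write $A=\{\alpha\in\Sigma_{\bar u}:u\in X_{\bar u,\alpha}\}$ and $B=\{\beta\in\Sigma_{\bar v}:v\in X_{\bar v,\beta}\}$, and for each $\tau$ set $x_\tau=\frac1{y(\tau)}\sum_{\alpha\in A}y(\tau,\bar u{=}\alpha)$ and $y_\tau=\frac1{y(\tau)}\sum_{\beta\in B}y(\tau,\bar v{=}\beta)$, the conditional probabilities of $u\in R$ and of $v\in R$ given $a(T_j)=\tau$. By the independence claim,
\[
\Pr\!\big[\,|\{u,v\}\cap R|=1 \,\big|\, a(T_j)=\tau\,\big]=x_\tau(1-y_\tau)+(1-x_\tau)y_\tau .
\]
Separately, expanding $z_{uv}$ through constraint~\eqref{cons:obj} and pushing the sums down to $T_j$ by repeated use of~\eqref{cons:SA} (as in Claim~\ref{cl:leaf}) gives $z_{uv}=\sum_\tau y(\tau)\,q_\tau$, where $q_\tau$ is the probability, under the distribution $\mu_\tau(\alpha,\beta):=y(\tau,\bar u{=}\alpha,\bar v{=}\beta)/y(\tau)$ on $\Sigma_{\bar u}\times\Sigma_{\bar v}$, that exactly one of the events $\{\alpha\in A\}$, $\{\beta\in B\}$ occurs; note $\mu_\tau$ has $\{0,1\}$-marginal probabilities $x_\tau$ and $y_\tau$. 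Applying Observation~\ref{obs:joint_indep} to this pair of indicators under $\mu_\tau$ yields $x_\tau(1-y_\tau)+(1-x_\tau)y_\tau\ge\frac12 q_\tau$ for every $\tau$, and since $\Pr[a(T_j)=\tau]=y(\tau)$ by Claim~\ref{cl:path},
\[
\Pr\!\big[\,|\{u,v\}\cap R|=1\,\big]=\sum_\tau y(\tau)\big(x_\tau(1-y_\tau)+(1-x_\tau)y_\tau\big)\;\ge\;\frac12\sum_\tau y(\tau)\,q_\tau\;=\;\frac{z_{uv}}{2},
\]
which is the desired bound.
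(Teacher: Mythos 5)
Your proof is correct and follows essentially the same route as the paper's: fix the states chosen near the top of $\cal T$, use conditional independence of $a(\bar u)$ and $a(\bar v)$ together with Observation~\ref{obs:joint_indep}, and recover $z_{uv}$ by pushing the Sherali--Adams sums down to the conditioning set before averaging via Claim~\ref{cl:path}. The one substantive difference is the conditioning set: you condition on $a(T_j)$ with $T_j=T_w\cup\{j,j'\}$, i.e.\ including both children of the least common ancestor, whereas the paper conditions only on $a(T_i)$ for the ancestor itself; your choice is the more careful one, since $(a(j),a(j'))$ are sampled \emph{jointly} at that node and may be correlated under the LP distribution, so conditioning on the ancestor's $T$-set alone does not by itself decouple the two subtrees. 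Beyond that the arguments coincide --- your distribution $\mu_\tau$ restricted to the two indicators is exactly the paper's auxiliary table $(X,Y)$, with $q_\tau$ playing the role of $z^+_{uv}(s(T_i))+z^-_{uv}(s(T_i))$.
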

\begin{proof}
In order to simplify notation, we define:
\begin{equation*}
z^+_{uv}\,\, =\,\, \sum_{\substack{s(\bar{u})\in\Sigma_{\bar{u}} \\ u\in X_{\bar{u},s(\bar{u})}}}\sum_{\substack{s(\bar{v})\in\Sigma_{\bar{v}} \\ v\not\in X_{\bar{v},s(\bar{v})}}}{y(s(\{\bar{u},\bar{v}\}))},\quad z^-_{uv}\,\, =\,\, \ \sum_{\substack{s(\bar{u})\in\Sigma_{\bar{u}} \\ u\not\in X_{\bar{u},s(\bar{u})}}}\sum_{\substack{s(\bar{v})\in\Sigma_{\bar{v}} \\ v\in X_{\bar{v},s(\bar{v})}}}{y(s(\{\bar{u},\bar{v}\}))}.
\end{equation*}
Note that $z_{uv} = z_{uv}^+ + z_{uv}^-$.

Let $D_u=\{s(\bar{u})\in\Sigma_{\bar{u}}|u\in s(\bar{u})\}$ and $D_v=\{s(\bar{v})\in\Sigma_{\bar{v}}|v\in s(\bar{v})\}$. Let $i$ denote the least common ancestor of nodes $\bar{u}$ and $\bar{v}$. For any choice of states $\{s(k)\in \Sigma_k\}_{k\in T_i}$ define:
$$z^+_{uv}(s(T_i)) = \sum_{s(\bar{u}) \in D_u} \sum_{s(\bar{v})\not\in D_v} \frac{y(s(T_i\cup\{\bar{u},\bar{v}\}))}{y(s(T_i))},$$
and similarly $z^-_{uv}(s(T_i))$.

In the rest of the proof we fix states $\{s(k)\in \Sigma_k\}_{k\in T_i}$ and condition on the event ${\cal E}$ that $a(T_i)=s(T_i)$. We will show:
\begin{equation}\label{eq:sym-cut-prob}
\Pr[|\{u,v\}\cap R|=1 \, |\, {\cal E}] \,\, \ge \,\, \frac12\left(z^+_{uv}(s(T_i))+z^-_{uv}(s(T_i))\right).
\end{equation}

By taking expectation over the conditioning $s(T_i)$, this would imply  Lemma~\ref{lem:sym-alg-cut2}.

We now define the following indicator random variables (conditioned on ${\cal E}$).
\begin{equation*}
I_u=\begin{cases}
0\quad \mbox{ if }a(\bar{u}) \not\in D_u\\
1\quad \mbox{ if }a(\bar{u}) \in D_u
\end{cases} \quad \mbox{and} \quad
I_v=\begin{cases}
0\quad \mbox{ if }a(\bar{v}) \not\in D_v\\
1\quad \mbox{ if }a(\bar{v}) \in D_v
\end{cases}.
\end{equation*}
Observe that $I_u$ and $I_v$ ({\em conditioned} on ${\cal E}$) are independent since $\bar{u}\not\in T_{\bar{v}}$ and $\bar{v}\not\in T_{\bar{u}}$. So,
\begin{equation}\label{eq:sym-alg-cut-prob}
\Pr[|\{u,v\}\cap R|=1 \, |\, {\cal E}] \,\, = \,\, \Pr[I_u=1]\cdot \Pr[I_v=0] +  \Pr[I_u=0]\cdot \Pr[I_v=1]
\end{equation}

For any $s(k)\in\Sigma_k$ for $k\in T_{\bar{u}}\setminus T_i$, we have by Claim \ref{cl:path} and $T_i\sse T_{\bar{u}}$ that
$$\Pr[a(T_{\bar{u}})=s(T_{\bar{u}})\,|\, a(T_i)=s(T_i)] = \frac{\Pr[a(T_{\bar{u}})=s(T_{\bar{u}})]}{\Pr[a(T_i)=s(T_i)]}=\frac{y(s(T_{\bar{u}}))}{y(s(T_i))}.$$
Therefore
$$\Pr[I_u=1]\,\,=\,\,\sum_{s(\bar{u})\in D_u} \,\, \sum_{\substack{k\in T_{\bar{u}}\setminus T_i\setminus \{\bar{u}\} \\s(k)\in\Sigma_k}}\frac{y(s(T_{\bar{u}}))}{y(s(T_i))}=\sum_{s(\bar{u})\in D_u}\frac{y(s(T_{i}\cup\{\bar{u}\}))}{y(s(T_i))}.$$

The last equality follows from the \eqref{cons:SA} constraint. Similarly,
$$\Pr[I_v=1]\,\,=\,\, \sum_{s(\bar{v})\in D_v}\frac{y(s(T_{i}\cup\{\bar{v}\}))}{y(s(T_i))}.$$

Now define $\{0,1\}$ random variables $X$ and $Y$ jointly distributed as:
$$\begin{array}{|c|c|c|}
\hline
&Y=0 & Y=1\\
\hline
X=0 & \Pr[I_v=1] - z_{uv}^-(s(T_i)) & z_{uv}^-(s(T_i))  \\
\hline
X=1 & z_{uv}^+(s(T_i)) & \Pr[I_u=1] - z_{uv}^+(s(T_i))\\
\hline
\end{array}
$$
Note that $\Pr[X=1]=\Pr[I_u=1]$ and $\Pr[Y=1]=\Pr[I_v=1]$. So, applying Observation~\ref{obs:joint_indep} and using~\eqref{eq:sym-alg-cut-prob} we have:
$$\Pr[|\{u,v\}\cap R|=1 \, |\, {\cal E}] \,\, \ge \,\, \frac12\left( \Pr[X=0,Y=1]+\Pr[X=1,Y=0]\right),$$
which implies~\eqref{eq:sym-cut-prob}.
\hfill \eod
\end{proof}
\section{Applications}\label{sec:appln}
In this section, we show a number of graph constraints that satisfy Assumption~\ref{assm:DP} and thereby obtain $\frac12$-approximation algorithms for \gcmc under these constraints (on bounded-treewidth graphs).

Recall that the underlying graph $G$ is given by its tree-decomposition $({\cal T}=(I,F), \{X_i|i\in I\})$ from Theorem~\ref{thm:treedecomp}. Recall also the definition of a dynamic program on this tree-decomposition, as given in Definition~\ref{def:DP}.

\subsection{Independent Set}
Given graph $G=(V,E)$ and edge-weights $\ew:{V\choose 2}\rightarrow \mathbb{R}_+$ we want to maximize $c(\delta S)$ where $S$ is an independent set in $G$.
\smallskip

\noindent For each node $i\in I$ define state space $\Sigma_i = \{\sigma\subseteq X_i \, | \, \sigma\mbox{ is an independent set}\}$. For each node $i\in I$ and $\sigma\in \Sigma_i$, we define:
\begin{itemize}
\item set $X_{i,\sigma} = \sigma$.
\item collection $\fs_{i,\sigma}=\{S\subseteq V_i \, | \, X_i\cap S=\sigma\mbox{ and $S$ is an independent set in }G[V_i]\}$.
\item ${\cal F}_{i,\sigma}=\{(w_{j_1},w_{j_2}) \, |\, \mbox{for each }j\in \{j_1,j_2\},\, w_j\in\Sigma_j\mbox{ such that }w_j\cap X_i=\sigma\cap X_j\}$ which denotes valid combinations. Note that the condition $w_j\cap X_i=\sigma\cap X_j$ enforces $w_j$ to agree with $\sigma$ on vertices of $X_i\cap X_j$.
\end{itemize}

We next show that these satisfy all the conditions in Assumption~\ref{assm:DP}.

\smallskip
\noindent \emph{Assumption~\ref{assm:DP} part 1.} We have $t=\max{|\Sigma_i|}\le 2^k=O(1)$  for bounded-treewidth $k$. Also $p=\max{|{\cal F}_{i,\sigma}|}\le t^2$ since each node has at most two children.

\smallskip
\noindent \emph{Assumption~\ref{assm:DP} part 2.}
By definition, for any  $S\in \fs_{i,\sigma}$ we have $S\cap X_i = \sigma = X_{i,\sigma}$.

\smallskip
\noindent \emph{Assumption~\ref{assm:DP} part 3.} For any leaf $\ell\in I$ and $\sigma\in \Sigma_\ell$ it is clear that $\fs_{\ell,\sigma} = \{X_{i,\sigma}\}$.

Consider now any non-leaf node $i$ and $\sigma\in \Sigma_i$.  Let
\begin{equation} \label{eq:dp-indepset}
{\cal Z}\,\, =\,\, \{X_{i,\sigma}\cup S_{j_1} \cup S_{j_2} \, :\, S_{j_1}\in\fs_{j_1,w_{j_1}}, S_{{j_2}}\in\fs_{j_2,w_{j_2}}, (w_{j_1},w_{j_2}) \in{\cal F}_{i,\sigma}\}.
\end{equation}

We first prove $\fs_{i,\sigma}\subseteq {\cal Z}$. For any $S\in \fs_{i,\sigma}$ and child $j\in \{j_1,j_2\}$  let
$S_j=S\cap V_j$ and $w_j=S\cap X_j$; since $S$ is independent $S_j$ is also an independent set, and $S_j\in \fs_{j,w_j}$. Note that  $S\cap X_i = X_{i,\sigma}$.  Since $V_i=X_i\cup V_{j_1}\cup V_{j_2}$, we have $S=X_{i,\sigma}\cup S_{j_1} \cup S_{j_2}$. Moreover, we have $\sigma\cap X_j = S\cap X_i \cap X_j = w_j\cap X_i$ for each $j\in \{j_1,j_2\}$.  So we have $(w_{j_1},w_{j_2})\in {\cal F}_{i,\sigma}$ and hence $S\in {\cal Z}$.

We next prove ${\cal Z}\subseteq\fs_{i,\sigma}$. Consider any $S=X_{i,\sigma}\cup S_{j_1} \cup S_{j_2}$ as in~\eqref{eq:dp-indepset}. For $j\in\{j_1,j_2\}$ by definition of ${\cal F}_{i,\sigma}$ and $\fs_{j,w_j}$, we have $\sigma\cap X_j = w_j\cap X_i = (S_j\cap X_j)\cap X_i$; since $X_i\cap (V_j \setminus X_j)=\emptyset$ (by definition of the tree-decomposition) we have $X_i\cap S_j = X_i\cap X_j\cap S_j = \sigma\cap X_j$.
 Thus we have $X_i\cap S=\sigma$. It just remains to prove that $S$ is an independent set in $G[V_i]$. Since $S_{j_1}$, $S_{j_2}$ and $X_{i,\sigma}$ are independent sets, if $S$ were not independent then we must have an edge $(u,v)$ where $u\in V_{j_1}\cup X_i$ and $v\in V_{j_2}\setminus  X_i$ (or the symmetric case); this is not possible due to the tree-decomposition. So $S \in \fs_{i,\sigma}$.

\smallskip
\noindent \emph{Assumption~\ref{assm:DP} part 4.} This  follows directly from the definition of $\fs_{i,\sigma}$.

\subsection{Connectivity}

Given graph $G=(V,E)$ and edge-weights $\ew:{V\choose 2}\rightarrow \mathbb{R}_+$ we want to maximize $c(\delta S)$ where $S$ is a connected vertex-set in $G$.
\smallskip

\noindent For each node $i\in I$ define the state space
$$\Sigma_i = \{(B_i,P_i) \, | \, B_i \subseteq X_i,\, P_i \mbox{ is a partition of }B_i\}.$$
Here a state $\sigma=(B_i,P_i)$ specifies which subset $B_i$ of the vertices (in $X_i$) are included in the solution and what is the connectivity pattern $P_i$ among them.

For each node $i\in I$ and $\sigma =(B_i,P_i) \in \Sigma_i$, we define:
\begin{itemize}
\item set $X_{i,\sigma}=B_i$.
\item if $i=r$ (at the root) $\Sigma_r=\{(B_r,P_r)\,|\,B_r\sse X_r,\,P_r=\{B_r\}\}$.
\item if $i\ne r$ then $\fs_{i,\sigma}=\{S\sse V_i \,|\, X_i\cap S=B_i$, each part of $P_i$ is connected in $G[S]$ and every connected component of $G[S]$ contains some vertex of $B_i\}$.
\item partition $\bar{P}_i$ denotes the connected components in $G[B_i]$.
\item ${\cal F}_{i,\sigma}$ consists of $(w_{j_1},w_{j_2})$ where for $j\in\{j_1,j_2\}$, $w_j=(B_j,P_j)\in\Sigma_j$ such that $B_i\cap X_j = B_j\cap X_i$ and each part of $P_j$ contains some vertex of $B_i$,  and $P_i$ is satisfied\footnote{Given two partitions $Q$ and $R$, their union $P=Q\cup R$ is the refined partition where a pair of elements are in the same part iff they occur in the same part of either $Q$ or $R$. Moreover, a partition $P$ is said to be satisfied by another partition $P'$ if $P'$ is a refinement of $P$, i.e. every pair of elements in the same part of $P$ also lie in the same part of $P'$.} by $\bar{P}_i\cup P_{j_1}\cup P_{j_2}$. Note that for some states there may be no such pair $(w_{j_1},w_{j_2})$ : in this case ${\cal F}_{i,\sigma}$ is empty.
\end{itemize}

\noindent \emph{Assumption~\ref{assm:DP} part 1.} For each node $i$, the possible number of vertex subsets $B_i$ is at most $2^k$ and the possible number of partitions $P_i$ is at most $k^k$, where $k$ is the treewidth. So for a bounded-treewidth $k$, we have $t=\max{|\Sigma_i|}\le k^{k+1}=O(1)$. Then $p=\max|{\cal F}_{i,\sigma}|\le t^2=O(1)$.

\smallskip
\noindent \emph{Assumption~\ref{assm:DP} part 2.} This follows directly from the definition of $\fs_{i,\sigma}$ and $X_{i,\sigma}$.

\smallskip
\noindent \emph{Assumption~\ref{assm:DP} part 3.} Let ${\cal Z}$ be as in~\eqref{eq:dp-indepset} with the new definitions of $\fs$ and ${\cal F}$ for connectivity (as above). The leaf case is trivial, so we consider a non-leaf node $i\in I$ and $\sigma=(B_i,P_i)\in \Sigma_i$. To reduce notation we just use $j$ to denote a child of $i$; we will not specify $j\in \{j_1,j_2\}$ each time.

We first prove $\fs_{i,\sigma}\subseteq {\cal Z}$. For any $S\in \fs_{i,\sigma}$, let $S_j=V_j\cap S$ and $B_j=X_j\cap S$. Let $P_j$ be a partition of $B_j$ with a part $C\cap B_j$ for every connected component $C$ in $G[S_j]$. Let $w_j=(B_j,P_j)$. We will show that $S_j\in\fs_{j,w_j}$ and $(w_{j_1},w_{j_2})\in{\cal F}_{i,\sigma}$.

\begin{itemize}
\item $S_j\in\fs_{j,w_j}$. By definition of $B_j$, we have $X_j\cap S_j=X_j\cap V_j\cap S=X_j\cap S=B_j$. We only need to prove each connected component of $G[S_j]$ has at least one vertex of $B_j$. We will in fact show that each component of $G[S_j]$ has at least one vertex of $B_i$ (i.e. in $B_j\cap B_i$).  Suppose (for contradiction) there is some connected component $C$ in $G[S_j]$ which does not have any vertex of $B_i$. By $S\in \fs_{i,\sigma}$ we know that in the (larger) graph $G[S]$ component $C$ has to be connected to some vertex $u\in B_i$. Then there is a path $\pi$ in $G[S]$ from some vertex $u'\in C$ to $u$ such that $u'$ is the {\em only} vertex of $C$ on $\pi$ (see also Figure \ref{fig:connect}). Let $(u',v')$ be the first edge of $\pi$, so $u'\in C\sse S_j$ and $v'\in S\setminus S_j$.  By tree-decomposition, there is some node containing both $u'$ and $v'$. Since $u',v'\in V_i$ and $u'\in S_j,v'\not\in S_j$, that node can only be $i$. This means $u'\in B_i$, contrary to our assumption. Therefore we have $S_j\in\fs_{j,w_j}$.

\item $(w_{j_1},w_{j_2})\in{\cal F}_{i,\sigma}$.  We have $B_i\cap X_j = B_j\cap X_i$ by definition of $w_j$. Since we already proved that each connected component of $G[S_j]$ has at least one vertex of $B_j\cap B_i$, we know that each part of partition $P_j$ has at least one vertex of $B_i$. By tree-decomposition we have $G[V_i]=G[X_i]\cup G[V_{j_1}]\cup G[V_{j_2}]$, so $G[S]=G[B_i]\cup G[S_{j_1}]\cup G[S_{j_2}]$. Hence partition $P_i$ is satisfied by $\bar{P}_i\cup P_{j_1}\cup P_{j_2}$. Thus $(w_{j_1},w_{j_2})\in{\cal F}_{i,\sigma}$.
\end{itemize}

\begin{figure}[htb]
  \centering
  \includegraphics[height=0.35\textwidth]{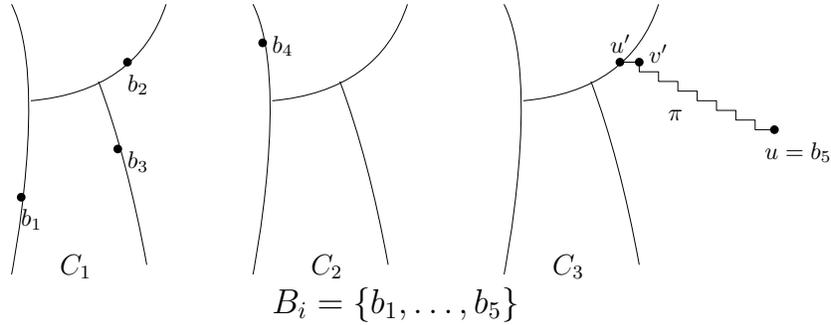}
\caption{Maximal connected component of $G[S_j]$}
\label{fig:connect}
\end{figure}

Next we prove ${\cal Z}\sse\fs_{i,\sigma}$. Consider any $S\in {\cal Z}$ given by $S=B_i\cup S_{j_1}\cup S_{j_2}$ as in~\eqref{eq:dp-indepset}. The fact that $S\cap X_i=B_i$ follows exactly as in the case of an independent-set constraint. Since $P_i$ is satisfied by $\bar{P}_i\cup P_{j_1}\cup P_{j_2}$ and $S_j$ connects up each part of $P_j$, it follows that $G[S] = G[B_i]\cup G[S_{j_1}]\cup G[S_{j_2}]$ connects up each part of $P_i$. It remains to show that each connected component of $G[S]$ has a vertex of $B_i$. Since
$(w_{j_1},w_{j_2})\in{\cal F}_{i,\sigma}$ we know that each part of $P_j$ has a $B_i$-vertex. By $S_j\in \fs_{j,w_j}$, we know that each component of $G[S_j]$ contains some vertex $u\in B_j$, and  this vertex $u$ is connected to some vertex $v\in B_i$ (as each part of $P_j$ contains a $B_i$-vertex); so every component of $G[S_j]$ contains some vertex of $B_i$.  Hence each component of $G[S]= G[B_i]\cup G[S_{j_1}]\cup G[S_{j_2}]$ also contains some vertex of $B_i$. 

\smallskip
\noindent \emph{Assumption~\ref{assm:DP} part 4.} By our definition of $\Sigma_r$, any solution given by $\fs_{r,\sigma}$ requires all chosen vertices to be connected. Thus this assumption is satisfied.

\subsection{Vertex Cover}

Given graph $G=(V,E)$ and edge-weights $\ew:{V\choose 2}\rightarrow \mathbb{R}_+$ we want to maximize $c(\delta S)$ where $S$ is a vertex cover in $G$ (i.e. $S$ contains at least one end-point of each edge in $E$).
\smallskip

\noindent For each node $i\in I$ define the state space $\Sigma_i = \{\sigma\subseteq X_i \, | \, \sigma\mbox{ is a vertex cover in }G[X_i]\}$. For each node $i\in I$ and $\sigma\in \Sigma_i$, we define:
\begin{itemize}
\item set $X_{i,\sigma} = \sigma$.
\item collection $\fs_{i,\sigma}=\{S\subseteq V_i \, | \, X_i\cap S=\sigma\mbox{ and $S$ is a vertex cover in }G[V_i]\}$.
\item ${\cal F}_{i,\sigma}=\{(w_{j_1},w_{j_2}) \, |\, \mbox{for each }j\in \{j_1,j_2\},\, w_j\in\Sigma_j\mbox{ such that }w_j\cap X_i=\sigma\cap X_j\}$ which denotes valid combinations. Note that the condition $w_j\cap X_i=\sigma\cap X_j$ enforces $w_j$ to agree with $\sigma$ on vertices of $X_i\cap X_j$.

\end{itemize}

The proof of the above notation satisfying Assumption~\ref{assm:DP} is identical to the independent set proof.

\subsection{Dominating Set}

Given graph $G=(V,E)$ and edge-weights $\ew:{V\choose 2}\rightarrow \mathbb{R}_+$ we want to maximize $c(\delta S)$ where $S$ is a dominating set in $G$ (i.e. every vertex in $V$ is either in $S$ or a neighbor of some vertex in $S$).
\smallskip

\noindent For each node $i\in I$ define the state space $$\Sigma_i=\{(B_i,Y_i)\,|\,B_i\sse X_i,\,Y_i\sse X_i\}.$$ For vertex set $S\sse V$, we use $N(S)$ to denote $S$ and the neighbor vertices of $S$.

Here a state $\sigma=(B_i,Y_i)$ specifies which subset $B_i$ of the vertices (in $X_i$) are included in the solution and what subset  $Y_i$ of the vertices (in $X_i$) should be dominated. 

For each node $i\in I$ and $\sigma=(B_i,Y_i)\in\Sigma_i$, we define:
\begin{itemize}
\item set $X_{i,\sigma}=B_i$
\item if $i=r$ (at the root) $\Sigma_r=\{(B_r,Y_r)\,|\,B_r\sse X_r,\,Y_r=\emptyset\}$.
\item if $i\ne r$ then $\fs_{i,\sigma}=\{S\sse V_i|\,X_i\cap S=B_i, S$ is a dominating set of $V_i\setminus Y_i$ in $G[V_i]\}$
\item ${\cal F}_{i,\sigma}$ consists of $(w_{j_1},\,w_{j_2})$ where for $j\in\{j_1,\,j_2\}$, $w_j=(B_j,Y_j)\in\Sigma_j$ such that $B_i\cap X_j=B_j\cap X_i$ and $V_i\setminus Y_i \sse(V_{j_1}\setminus Y_{j_1})\cup(V_{j_2}\setminus Y_{j_2})\cup N(B_i)$.
Note that for some states there may be no such pair $(w_{j_1},\,w_{j_2})$ : in this case ${\cal F}_{i,\sigma}$ is empty.
\end{itemize}

\noindent \emph{Assumption~\ref{assm:DP} part 1.} For each node $i$, the possible number of vertex subsets $X_i$, $Y_i$ is at most $2^k$, where $k$ is the treewidth. So for a bounded-treewidth $k$, we have $t=\max{|\Sigma_i|}\le 2^{2k}=O(1)$. Then $p=\max|{\cal F}_{i,\sigma}|\le t^2=O(1)$.

\smallskip
\noindent \emph{Assumption~\ref{assm:DP} part 2.} This follows directly from the definition of $\fs_{i,\sigma}$ and $X_{i,\sigma}$.

\smallskip
\noindent \emph{Assumption~\ref{assm:DP} part 3.} Let ${\cal Z}$ be as in~\eqref{eq:dp-indepset} with the new definitions of $\fs$ and ${\cal F}$ for dominate set (as above). The leaf case is trivial, so we consider a non-leaf node $i\in I$ and $\sigma=(B_i,Y_i)\in \Sigma_i$. To reduce notation we just use $j$ to denote a child of $i$; we will not specify $j\in \{j_1,j_2\}$ each time.

We first prove $\fs_{i,\sigma}\subseteq {\cal Z}$. For any $S\in \fs_{i,\sigma}$, let $S_j=V_j\cap S$ and $B_j=X_j\cap S$. Let $Y_j=X_j\setminus N(S_j)$. Let $w_j=(B_j,Y_j)$. We will show that $S_j\in\fs_{j,w_j}$ and $(w_{j_1},w_{j_2})\in{\cal F}_{i,\sigma}$.

\begin{itemize}
\item $S_j\in\fs_{j,w_j}$. By definition of $B_j$, we have $X_j\cap S_j=X_j\cap V_j\cap S=X_j\cap S=B_j$. We only need to prove $S_j$ is a dominating set of $V_j\setminus Y_j$ in $G[V_j]$. For all $v\in V_j\setminus Y_j$: If $v\in X_i$, since $v\in V_j$, we have $v\in X_j$. Since $v\not\in Y_j$ and $v\in X_j$, by $Y_j=X_j\setminus N(S_j)$, we have $v\in N(S_j)$ by tree-decomposition, that is $v$ is dominated by $S_j$. If $v\not\in X_i$, we have $v\in V_i\setminus Y_i$. Then $v$ is dominated by $S$. There is some $u\in S$ such that $(u,v)\in E$. Then by tree-decomposition, since $v\in V_j$ and $v\not\in X_i$, we have $u\in V_j$. Then since $S_j=S\cap V_j$, we have $u\in S_j$. $v$ is dominated by $S_j$. Then we have $S_j$ will dominate $V_j\setminus Y_j$. Thus we have $S_j\in\fs_{j,w_j}$.

\item $(w_{j_1},w_{j_2})\in{\cal F}_{i,\sigma}$.  We have $B_i\cap X_j = B_j\cap X_i$ and $Y_j=X_j\cap(Y_i\cup N(B_j))$ by definition of $w_j$. It remains to show that $V_i\setminus Y_i \sse(V_{j_1}\setminus Y_{j_1})\cup(V_{j_2}\setminus Y_{j_2})\cup N(B_i)$. For all $v\in V_i\setminus Y_i$, we have $v$ is dominated by $S$. There is $u\in S$ such that $(u,v)\in E$. If $u\in X_i$, then we have $u\in B_i$ thus $v\in N(B_i)$. If $u\in V_j\setminus X_i$, we have $u\in S_j$. By tree-decomposition, $u\not\in X_i$, $u\in V_j$ and $(u,v)\in E$ gives us $v\in V_j$. If $v\not\in X_j$, we have $v\in V_j\setminus Y_j$. If $v\in X_j$, since $u\in S_j$, $v\in N(u)$, we have $v\in N(S_j)$. Then by $Y_j=X_j\setminus N(S_j)$, we have $v\not\in Y_j$. Thus $v\in V_j\setminus Y_j$. Therefore, for all $v\in V_i\setminus Y_i$, we have $v\in (V_{j_1}\setminus Y_{j_1})\cup(V_{j_2}\setminus Y_{j_2})\cup N(B_i)$. Thus $V_i\setminus Y_i\sse (V_{j_1}\setminus Y_{j_1})\cup(V_{j_2}\setminus Y_{j_2})\cup N(B_i)$.
\end{itemize}

Next we prove ${\cal Z}\sse\fs_{i,\sigma}$. Consider any $S\in {\cal Z}$ given by $S=B_i\cup S_{j_1}\cup S_{j_2}$ as in~\eqref{eq:dp-indepset}. The fact that $S\cap X_i=B_i$ follows exactly as in the case of an independent-set constraint. It remains to show that $S$ is a dominating set of $V_i\setminus Y_i$. For all $v\in V_i\setminus Y_i$, we have $v\in (V_{j_1}\setminus Y_{j_1})\cup(V_{j_2}\setminus Y_{j_2})\cup N(B_i)$. Since $S_{j}$ is a dominate set of $V_{j}\setminus Y_{j}$ and $B_i$ is a dominate set of $N(B_i)$, we have $v$ is dominated by $B_i\cup S_{j_1}\cup S_{j_2}$, $v$ is dominated by $S$. Thus we have $S$ is a dominating set of $V_i\setminus Y_i$. $S\in \fs_{i,\sigma}$.

\smallskip
\noindent \emph{Assumption~\ref{assm:DP} part 4.} By our definition of $\Sigma_r$, any solution given by $\fs_{r,\sigma}$ requires all vertices are dominated. Thus this assumption is satisfied.

\section{Bounded-genus and Excluded-minor Graphs}
Here we use known decomposition results to show that our results can be extended to a larger class of graphs, and prove Corollary~\ref{cor:em} and~\ref{cor:bg}.
\subsection{Excluded-minor graph}
Recall the following decomposition of any excluded-minor graph into graphs of bounded treewidth.
\begin{theorem}{\cite{DHK11}}
For a fixed graph $H$, there exists a constant $c_H$ such that, for any integer $h\ge 1$ and for every $H$-minor-free graph $G$, the vertices of $G$ can be partitioned into $h+1$ sets such that any $h$ of that sets induce a graph of treewidth at most $c_Hh$. Furthermore, such partition can be found in polynomial time.\label{thm:em}
\end{theorem}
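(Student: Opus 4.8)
\medskip
\noindent\textbf{Proof proposal.} This is a theorem of Demaine--Hajiaghayi--Kawarabayashi~\cite{DHK11}; here is the strategy one would follow. The plan is to reduce to a single ``almost-embeddable'' graph via the graph minor structure theorem, and then run a Baker-style breadth-first-search layering argument on it. (For the planar special case the structure theorem is not needed: plain BFS layering on $G$ already works, so all the extra machinery below is only to handle general $H$-minor-free graphs.)

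First I would invoke the algorithmic form of the Robertson--Seymour graph minor structure theorem: for fixed $H$ there is a constant $\beta=\beta(H)$ and a polynomial-time-computable tree-decomposition $(T,\{W_t\})$ of $G$ in which every adhesion set has size $\le\beta$ and every torso $G_t$ is $\beta$-almost-embeddable --- i.e.\ after deleting an apex set $A_t$ with $|A_t|\le\beta$, what remains embeds in a surface of Euler genus $\le\beta$ together with at most $\beta$ vortices of width $\le\beta$ glued into faces. It then suffices to $(h+1)$-colour each torso so that deleting any single colour class leaves treewidth $O_H(h)$, with colourings that agree on the shared adhesion sets; gluing these colourings along $T$ works because adhesions are cliques in the torsos and clique-sums over cliques do not raise treewidth above the maximum over the summands, so the glued colouring inherits the property for all of $G$.

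Next I would colour one torso. Write it as a surface graph $G^0$ of genus $\le\beta$ plus at most $\beta$ bounded-width vortices plus an apex set $A$; when processing the tree-decomposition top-down I would additionally move the $\le\beta$ adhesion vertices shared with the parent into $A$ (so still $|A|\le 2\beta$) and keep their colours fixed to what the parent chose. Run BFS in $G^0$ to obtain layers $L_0,L_1,\dots$, extend the layering into each vortex along its face-boundary path-decomposition, colour a non-apex vertex of layer $L_t$ by $t\bmod(h+1)$, and colour the apices arbitrarily subject to the inherited constraints. Deleting colour class $c$ removes all layers $\equiv c\pmod{h+1}$, so each connected component of the remaining surface part spans at most $h$ consecutive BFS layers and hence --- contracting the lower layers to a single vertex and applying a standard diameter--treewidth bound for bounded-genus graphs --- has treewidth $O_H(h)$; the $\le\beta$ vortices of width $\le\beta$ contribute $O_H(1)$ and the $\le 2\beta$ apices add at most $2\beta$ more. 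Choosing $c_H$ large in terms of $\beta(H)$, and noting that ``any $h$ of the $h+1$ classes'' is exactly ``delete one class'', gives the treewidth bound $c_H h$; and every step --- the structure-theorem decomposition, BFS, the vortex path-decompositions, and the top-down colour propagation --- is polynomial for fixed $H$.

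I expect the main difficulty to be conceptual rather than computational: there is provably no partition into $h$ parts each of bounded treewidth, since apex graphs have unbounded local treewidth, so the $(h+1)$-part / delete-one formulation is essential --- it is precisely what lets the apices (and the shared adhesion vertices) be absorbed as a bounded additive term. The remaining technical nuisances are propagating the BFS layering correctly through the vortices and keeping the colourings consistent across the clique-sum tree, both of which are handled by the ``promote shared adhesion vertices to apices'' device.
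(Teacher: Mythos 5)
This statement is not proved in the paper at all: it is imported verbatim as a black-box citation of Demaine--Hajiaghayi--Kawarabayashi~\cite{DHK11}, so there is no in-paper argument to compare yours against. Your sketch does faithfully reflect the strategy of the cited work --- the graph-minor structure theorem reducing to almost-embeddable torsos glued along small clique adhesions, Baker-style BFS layering modulo $h+1$ on the surface part, absorbing apices and adhesion vertices as a bounded additive term, and invoking the diameter--treewidth (local treewidth) bound for bounded-genus graphs --- with the usual amount of hand-waving at the genuinely technical steps (propagating the layering through vortices and the bounded-local-treewidth lemma itself), which is appropriate for a result one would cite rather than reprove.
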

Algorithm \ref{alg:em} for Corollary \ref{cor:em} is given below. For a subset $V_i\sse V$, let  $G_i$ be the graph obtained by contracting $V_i$   to $v_{new}$. Then the edge weight on $G_i$ is defined as 
\begin{equation}
c_i(u,v)=\begin{cases}c(u,v),\mbox{ if $u,v\in V\setminus V_i$}\\\sum_{w\in V_i}c(u,w),\mbox{ if $u\in V\setminus V_i,v=v_{new}$}\end{cases}\label{eq:ci}
\end{equation} 
We have $v_{new}$ can increase treewidth by at most one since we can add it to each tree node and give a feasible tree-decomposition.

\begin{algorithm}[t] 
\label{alg:em}
\LinesNumbered
\SetKwInOut{Input}{Input}\SetKwInOut{Output}{Output}
\SetKwFor{Do}{Do}{\string:}{end}
 \Input{$H$-minor-free graph $G$}
 \Output{A vertex set in ${\cal C}_G$}
Use Theorem~\ref{thm:em} to partition $V$ into $V_1,\dots,V_h$\;

\If{independent-set constraint}{
\For{$i=1$ to $h$}{
Solve \gcmc in $G[V\setminus V_i]$ with edge-weight $c$\;
Let $S_i$ be the solution.\;
$S=\argmax_{S\in\{S_i\}}{c(\delta S_i)}$}}
\If{vertex-cover or dominating-set constraint}{
\For{$i=1$ to $h$}{
Solve \gcmc in $G_i$ with edge-weight $c_i$ and require $v_{new}$ to be part of the solution\;
\tcc{this requirement can be achieved by adding constraints $y(s(r))=0$ for all $v_{new}\not\in s(r)$ to \eqref{LP}}
Let $S'_i$ be the solution.\;
$S_i=S'_i\setminus\{v_{new}\}\cup V_i$\;
$i=\argmax_{j=1\dots h}{c(\delta (S_j))}$\;
$S=S_i$\;
}
}
\Return S\;
\caption{Algorithm for excluded minor graph}
\end{algorithm}

We will show Corollary \ref{cor:em} with the following claims.
\begin{cl}
Let $V_1,\dots,V_h$ be a partition of $V$. Let $S'$ be any vertex subset of $V$. Then there is some $i$ such that  $c(\delta (S'\setminus V_i))\ge(1-\frac2h)c(\delta S')$.\label{cl:vi}
\end{cl}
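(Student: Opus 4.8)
\textbf{Proof plan for Claim~\ref{cl:vi}.} The plan is to use a simple averaging (probabilistic) argument over the choice of which block $V_i$ to delete. Fix the vertex subset $S'$ and consider the cut $(S', V\setminus S')$ with its edge set $\delta S'$. For an index $i$ chosen uniformly at random from $\{1,\dots,h\}$, I want to lower bound the expected weight $\E_i[c(\delta(S'\setminus V_i))]$ and conclude that some fixed $i$ achieves at least the expectation.

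First I would observe that deleting $V_i$ from $S'$ can only remove edges from the cut that have at least one endpoint in $S'\cap V_i$; every edge of $\delta S'$ with \emph{both} endpoints outside $V_i$ remains a cut edge of $(S'\setminus V_i, V\setminus(S'\setminus V_i))$. (An edge $\{u,v\}\in\delta S'$ with, say, $u\in S'$, $v\notin S'$, and $u,v\notin V_i$: after deletion $u\in S'\setminus V_i$ and $v\notin S'\setminus V_i$, so it is still cut.) Hence
\begin{equation*}
c(\delta S') - c(\delta(S'\setminus V_i)) \;\le\; \sum_{\substack{\{u,v\}\in\delta S'\\ \{u,v\}\cap V_i\neq\emptyset}} c(u,v)\;\le\; \sum_{\substack{\{u,v\}\in\delta S'\\ u\in V_i}} c(u,v) \;+\; \sum_{\substack{\{u,v\}\in\delta S'\\ v\in V_i}} c(u,v).
\end{equation*}
Taking expectation over the uniformly random $i$, each vertex $w\in V$ lies in $V_i$ with probability $1/h$, so each edge $\{u,v\}\in\delta S'$ contributes its weight to the right-hand side with probability at most $2/h$ (once for $u\in V_i$, once for $v\in V_i$). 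Therefore $\E_i[c(\delta S') - c(\delta(S'\setminus V_i))] \le \frac{2}{h}\, c(\delta S')$, i.e. $\E_i[c(\delta(S'\setminus V_i))] \ge (1-\frac2h)\,c(\delta S')$. Since the minimum over $i$ is at most the average, and here we want a lower bound, I instead note that the \emph{maximum} over $i$ is at least the average; but actually the statement only asks for \emph{some} $i$ with $c(\delta(S'\setminus V_i))\ge(1-\frac2h)c(\delta S')$, and since the average of the quantities $c(\delta(S'\setminus V_i))$ is at least $(1-\frac2h)c(\delta S')$, at least one of them meets this bound. This finishes the proof.

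There is essentially no obstacle here — the only mild subtlety is making sure the double-counting bound ``probability at most $2/h$ per edge'' is stated correctly (an edge with both endpoints in the same $V_i$ is still only counted with total probability $1/h$, and an edge with endpoints in distinct blocks is counted with probability $2/h$; in all cases the bound $2/h$ holds), and that one keeps the inequality direction straight when passing from the expectation to the existence of a good index. No properties of the partition from Theorem~\ref{thm:em} beyond ``it is a partition into $h$ parts'' are needed for this particular claim.

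\hfill\eod
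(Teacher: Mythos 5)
Your averaging argument is correct, and it follows the same basic averaging-over-$i$ idea the paper has in mind, but you route through a different intermediate quantity, which is actually the cleaner choice: the paper's own one-line proof contains a step that is false as written. The paper asserts $\sum_{i=1}^{h} c(\delta(S'\cap V_i))\le 2c(\delta S')$ and then deduces the claim via $c(\delta(S'\setminus V_i))\ge c(\delta S')-c(\delta(S'\cap V_i))$. The latter inequality is fine, but the former is not: an edge $\{u,v\}$ with both endpoints in $S'$ lying in two different parts $V_p,V_q$ is counted in both $\delta(S'\cap V_p)$ and $\delta(S'\cap V_q)$ yet contributes nothing to $\delta S'$, so such edges can make the left side exceed the right by an arbitrary amount (the inequality would hold if $\delta(S'\cap V_i)$ were tacitly intersected with $\delta S'$, which is presumably the intent). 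Your proposal instead charges the loss $c(\delta S')-c(\delta(S'\setminus V_i))$ directly to the $\delta S'$-edges touching $V_i$; the sum of these charges over $i$ really is at most $2c(\delta S')$, so your averaging step is sound. As you observe, one could even get $\bigl(1-\frac1h\bigr)$, since for a $\delta S'$-edge only the endpoint lying in $S'$ can knock the edge out of the cut, but $\bigl(1-\frac2h\bigr)$ is all the claim needs and you establish it correctly.
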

\begin{proof}
Since $V_i,\dots, V_h$ is a partition of $V$, we have $\sum_{i=1}^h{c(\delta (S'\cap V_i))}\le 2c(\delta S')$. Then $\min_{i}{c(\delta (S'\cap V_i))}\le \frac{2}{h}c(\delta S')\Leftrightarrow\max_{i}{c(\delta (S'\setminus V_i))}\ge(1-\frac{2}{h})c(\delta S')$.\label{cl:si}\hfill \eod
\end{proof}
\begin{cl}
Let $V_1,\dots,V_h$ be a partition of $V$. Let $S'$ be any vertex subset of $V$. Then there is some $i$ such that  $c(\delta (S'\cup V_i))\ge(1-\frac2h)c(\delta S')$.\label{cl:vi2}
\end{cl}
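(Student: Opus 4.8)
\textbf{Proof proposal for Claim~\ref{cl:vi2}.}
The plan is to reduce this statement to the already-proven Claim~\ref{cl:vi} by passing to complements. The key observation is that for any vertex subset $T\sse V$, the cut edges of $T$ and of its complement coincide: $\delta T=\delta(V\setminus T)$, and hence $c(\delta T)=c(\delta(V\setminus T))$. This is immediate from the definition of $\delta$, since an edge has exactly one endpoint in $T$ iff it has exactly one endpoint in $V\setminus T$.

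First I would apply this to rewrite the quantity of interest. For each $i$,
\[
\delta(S'\cup V_i)\;=\;\delta\big(V\setminus(S'\cup V_i)\big)\;=\;\delta\big((V\setminus S')\cap(V\setminus V_i)\big)\;=\;\delta\big((V\setminus S')\setminus V_i\big),
\]
so $c(\delta(S'\cup V_i))=c\big(\delta((V\setminus S')\setminus V_i)\big)$. Next I would invoke Claim~\ref{cl:vi} with the vertex subset $V\setminus S'$ in place of $S'$ (the partition $V_1,\dots,V_h$ is unchanged): there exists some index $i$ with
\[
c\big(\delta((V\setminus S')\setminus V_i)\big)\;\ge\;\Big(1-\tfrac{2}{h}\Big)\,c\big(\delta(V\setminus S')\big).
\]
Finally, using $c(\delta(V\setminus S'))=c(\delta S')$ once more, the right-hand side equals $(1-\tfrac2h)\,c(\delta S')$, and combining with the displayed identity for $\delta(S'\cup V_i)$ gives $c(\delta(S'\cup V_i))\ge(1-\tfrac2h)\,c(\delta S')$ for that $i$, as required.

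There is essentially no obstacle here: the entire content is the complement-invariance of $\delta$, and everything else is a direct appeal to Claim~\ref{cl:vi}. The only point worth stating carefully is the set identity $V\setminus(S'\cup V_i)=(V\setminus S')\setminus V_i$, which is routine. This claim will then be used, exactly as Claim~\ref{cl:vi} is, to argue that the vertex-cover / dominating-set branch of Algorithm~\ref{alg:em} (which enlarges the recovered solution by $V_i$) loses only a $(1-\tfrac2h)$ factor against the optimum.
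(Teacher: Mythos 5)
Your proof is correct, and it takes a genuinely different route from the paper's. The paper proves Claim~\ref{cl:vi2} directly by re-running the averaging argument of Claim~\ref{cl:vi}: it writes $\sum_{i=1}^h\bigl(c(\delta S')-c(\delta(S'\cup V_i))\bigr)\le 2c(\delta S')$ and extracts the index $i$ achieving the minimum. You instead observe that $\delta T=\delta(V\setminus T)$ for any $T\sse V$, so that by De~Morgan $\delta(S'\cup V_i)=\delta\bigl((V\setminus S')\setminus V_i\bigr)$, and then invoke Claim~\ref{cl:vi} with $V\setminus S'$ in place of $S'$; the right-hand side comes back to $(1-\tfrac2h)\,c(\delta S')$ again by complement-invariance of $\delta$. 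Your reduction is shorter, avoids duplicating the averaging bookkeeping, and makes explicit the symmetry between the ``delete $V_i$'' and ``add $V_i$'' variants, which the paper leaves implicit by treating them as parallel but separate claims. The only caveat is that your proof inherits whatever strength Claim~\ref{cl:vi} has, rather than being self-contained; since that claim holds (in fact with the better constant $1-\tfrac1h$ by a direct charging argument), this is not an issue here.
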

\begin{proof}
Since $V_i,\dots, V_h$ is a partition of $V$, we have $\sum_{i=1}^h{(c(\delta S') - c(\delta (S'\cup V_i)))}\le 2c(\delta S')$. Then $\min_{i}{(c(\delta S') - c(\delta (S'\cup V_i)))} \le \frac{2}{h}c(\delta S')\Leftrightarrow\max_{i}{c(\delta (S'\cup V_i))}\ge(1-\frac{2}{h})c(\delta S')$.\label{cl:si}\hfill \eod
\end{proof}
\begin{cl}
Let $V_i$ be a subset of $V$. Suppose $S'$ is a feasible solution of some \gcmc problem and $S\setminus V_i$ is a feasible solution in $G[V\setminus V_i]$ with same graph constraint. Then the solution $S_i$ given by \gcmc algorithm in $G[V\setminus V_i]$ has cut value $c(\delta S_i)\ge\frac12c(\delta S'\setminus V_i)$.\label{cl:si}
\end{cl}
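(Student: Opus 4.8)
The plan is to obtain Claim~\ref{cl:si} as an essentially immediate consequence of the main algorithmic result, Theorem~\ref{thm:sym}, applied to the subinstance that the algorithm actually solves: \gcmc on the graph $G[V\setminus V_i]$ under the same-type graph constraint (independent set / vertex cover / dominating set) with edge weights $c$. First I would verify that this subinstance satisfies the hypotheses of Theorem~\ref{thm:sym}. For the treewidth hypothesis: even though the excluded-minor graph $G$ itself need not have small treewidth, Theorem~\ref{thm:em} guarantees that $G[V\setminus V_i]$ has treewidth at most $c_H h$, which is $O(1)$ since $h$ is a fixed constant chosen as a function of $\epsilon$ only. For the dynamic-programming hypothesis: the constraint imposed on $G[V\setminus V_i]$ is just the corresponding constraint on that smaller graph, and the constructions of Section~\ref{sec:appln} that verify Assumption~\ref{assm:DP} (including the bounds $t,p=O(1)$ on the number of states) use nothing about the host graph beyond having bounded treewidth, so they carry over verbatim to $G[V\setminus V_i]$.

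Given this, Theorem~\ref{thm:sym} yields that the \gcmc algorithm run on $G[V\setminus V_i]$ returns a feasible set $S_i$ with $c(\delta S_i)\ge\tfrac12\,\opt$, where $\opt$ denotes the optimum of this subinstance (in expectation, which is what the $\tfrac12$-approximation of Theorem~\ref{thm:sym} delivers). It then remains only to produce a feasible solution of the subinstance whose cut value under $c$ is $c(\delta(S'\setminus V_i))$ — and $S'\setminus V_i$ is exactly such a solution, since by hypothesis it is feasible for the same-type constraint in $G[V\setminus V_i]$ and its cut value is by definition $c(\delta(S'\setminus V_i))$. Hence $\opt\ge c(\delta(S'\setminus V_i))$ and therefore $c(\delta S_i)\ge\tfrac12\,c(\delta(S'\setminus V_i))$, which is the claim.

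I do not expect a genuine obstacle: the claim is simply the $\tfrac12$-guarantee of Theorem~\ref{thm:sym}, repackaged for the subinstances generated inside Algorithm~\ref{alg:em}. The only points needing (routine) care are that it is Theorem~\ref{thm:em}, rather than mere closure under taking subgraphs, that supplies the bounded treewidth of $G[V\setminus V_i]$, and — when the same template is reused for the vertex-cover and dominating-set branches, which instead work in the contracted graph $G_i$ with a forced vertex $v_{new}$ and competitor $S'\cup V_i$ — that contracting $V_i$ raises the treewidth by at most one and that forcing $v_{new}$ into the solution (implemented by the extra LP equalities $y(s(r))=0$ for states $s(r)$ omitting $v_{new}$) preserves both the relaxation-validity lemma and the rounding analysis of Section~\ref{sec:alg}; a short edge-by-edge count with the weights~\eqref{eq:ci} then shows the cut value of a solution of $G_i$ equals the cut value of its lifted counterpart in $G$, so the $\tfrac12$ factor transfers intact.
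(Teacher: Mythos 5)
Your proposal is correct and matches the paper's own argument: the paper likewise observes that $S'\setminus V_i$ is feasible for the subinstance on $G[V\setminus V_i]$, so the optimum there is at least $c(\delta(S'\setminus V_i))$, and then invokes Theorem~\ref{thm:sym} to get the factor $\tfrac12$. Your extra verification that the subinstance satisfies the hypotheses of Theorem~\ref{thm:sym} (via Theorem~\ref{thm:em} for the treewidth bound) is a reasonable elaboration of what the paper leaves implicit.
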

\begin{proof}
Since $S\setminus V_i$ is a feasible solution in $G[V\setminus V_i]$, we have the optimal solution, $S^*$ in $G[V\setminus V_i]$ has cut value $c(\delta S^*)\ge c(\delta S'\setminus V_i)$. By Theorem \ref{thm:sym}, we have $c(\delta S_i)\ge \frac12c(\delta S^*)$, thus we have $c(\delta S_i)\ge\frac12c(\delta S'\setminus V_i)$.\hfill \eod
\end{proof}
\begin{cl}
Let $V_i$ be a subset of $V$. Let $G_i$ and $c_i$ be defined as \eqref{eq:ci}. Suppose $S'$ is a feasible solution of some \gcmc problem and $S'\setminus V_i\cup\{v_{new}\}$ is a feasible solution in $G_i$ with same graph constraint. Then the solution $S_i$ given by \gcmc algorithm in $G_i$ has cut value $c_i(\delta S_i)\ge\frac12c(\delta (S'\cup V_i))$ and $c(\delta (S_i\setminus\{v_{new}\}\cup V_i))=c_i(\delta S_i)$.\label{cl:si2}
\end{cl}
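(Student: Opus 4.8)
The plan is to isolate a single ``cut preservation under contraction'' fact and derive both halves of the claim from it, using Theorem~\ref{thm:sym} only as a black box. The fact I would prove first is: for every $W\subseteq V(G_i)$ with $v_{new}\in W$, setting $\hat W:=(W\setminus\{v_{new}\})\cup V_i\subseteq V$, one has $c(\delta\hat W)=c_i(\delta W)$. To verify this, partition the edges of the complete graph on $V$ according to whether they have both endpoints in $V_i$, both endpoints in $V\setminus V_i$, or one of each. Edges of the first kind never lie in $\delta\hat W$ since $V_i\subseteq\hat W$; edges $\{u,v\}$ of the second kind satisfy $\{u,v\}\in\delta\hat W\iff\{u,v\}\in\delta W$ (because $\hat W\cap(V\setminus V_i)=W\setminus\{v_{new}\}$) and $c(u,v)=c_i(u,v)$ by~\eqref{eq:ci}; and for an edge $\{u,w\}$ with $u\in V\setminus V_i$, $w\in V_i$, we have $\{u,w\}\in\delta\hat W\iff u\notin W$, which (since $v_{new}\in W$) is exactly the condition $\{u,v_{new}\}\in\delta W$, so summing $c(u,w)$ over all $w\in V_i$ reproduces $c_i(u,v_{new})$ by~\eqref{eq:ci}. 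Adding the three contributions yields the identity.

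Given the identity, the second part of the claim is immediate: the \gcmc algorithm in $G_i$ is run with the extra constraints $y(s(r))=0$ for all root states $s(r)$ not containing $v_{new}$, hence $v_{new}\in S_i$, and applying the identity with $W=S_i$ gives $c(\delta(S_i\setminus\{v_{new}\}\cup V_i))=c_i(\delta S_i)$. For the first part, apply the identity with $W=(S'\setminus V_i)\cup\{v_{new}\}$; then $\hat W=S'\cup V_i$, so $c(\delta(S'\cup V_i))=c_i(\delta W)$. By hypothesis $W$ is feasible for the same-constraint \gcmc instance in $G_i$ and contains $v_{new}$, so the optimum $S^*$ of that instance restricted to solutions containing $v_{new}$ has $c_i(\delta S^*)\ge c_i(\delta W)=c(\delta(S'\cup V_i))$. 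Since $G_i$ has treewidth at most one more than $G[V\setminus V_i]$ and is therefore still bounded, Theorem~\ref{thm:sym} applies to this restricted instance and gives $c_i(\delta S_i)\ge\frac12 c_i(\delta S^*)\ge\frac12 c(\delta(S'\cup V_i))$.

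The step I expect to require the most care is justifying that Theorem~\ref{thm:sym} really does apply to the ``restricted'' instance in $G_i$ --- i.e. that forcing $v_{new}$ into every bag of the tree-decomposition and adding $y(s(r))=0$ for root states omitting $v_{new}$ keeps \eqref{LP} a valid, polynomial-size relaxation and leaves the rounding analysis (the cut-probability lemmas) intact. This holds because requiring $v_{new}\in S$ simply restricts \gc to a subfamily that is still captured by Assumption~\ref{assm:DP} after deleting the root states that omit $v_{new}$, but it should be stated explicitly. Everything else is the bookkeeping in the edge-partition argument above.
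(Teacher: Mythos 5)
Your proposal is correct and takes essentially the same approach as the paper: compare $S_i$ to the feasible point $S'\setminus V_i\cup\{v_{new}\}$ in $G_i$, apply Theorem~\ref{thm:sym}, and use the cut-preservation identity $c_i(\delta W)=c(\delta((W\setminus\{v_{new}\})\cup V_i))$ which the paper invokes tersely as ``by definition of $c_i$.'' You spell out that identity via the three-way edge partition and explicitly note that the algorithm restricts to solutions containing $v_{new}$, both of which the paper leaves implicit.
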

\begin{proof}
Since $S'\setminus V_i\cup\{v_{new}\}$ is a feasible solution in $G_i$, we have the optimal solution, $S^*$ in $G_i$ has cut value $c_i(\delta S^*)\ge c_i(\delta (S'\setminus V_i\cup\{v_{new}\}))$. By Theorem \ref{thm:sym}, we have $c_i(\delta S_i)\ge \frac12c_i(\delta S^*)$. Then by definition of $c_i$ we have $c_i(\delta S_i)\ge\frac12c(\delta (S'\cup V_i\}))$ and $c(\delta (S_i\setminus\ \{v_{new}\}\cup V_i))=c_i(\delta S_i)$.\hfill \eod
\end{proof}
\begin{cl}
Let $S^*$ be the optimal solution of \gcmc with independent-set constraint. The solution $S$ given by the Algorithm \ref{alg:em} is a feasible solution to the original \gcmc problem and $c(\delta S)\ge \frac12(1-\frac{2}{h})c(\delta S^*)$.
\end{cl}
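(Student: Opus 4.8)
The plan is to deduce this from the two structural Claims~\ref{cl:vi} and~\ref{cl:si}, exploiting that the independent-set constraint is downward closed.

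First I would verify feasibility. The algorithm outputs $S=S_i$ for some index $i$, where $S_i$ is the \gcmc solution produced via Theorem~\ref{thm:sym} on the graph $G[V\setminus V_i]$. This graph is induced by the $h-1$ parts $\{V_j:j\ne i\}$, so by Theorem~\ref{thm:em} it has treewidth at most $c_Hh=O(1)$, and since the independent-set constraint satisfies Assumption~\ref{assm:DP} on any bounded-treewidth graph (Section~\ref{sec:appln}), Theorem~\ref{thm:sym} legitimately applies. I would then note that $S_i$ is an independent set of $G[V\setminus V_i]$, and since $S_i\subseteq V\setminus V_i$ the induced subgraph on $S_i$ is the same inside $G$ as inside $G[V\setminus V_i]$, so $S_i$ is also an independent set of $G$; hence $S\in\gc$.

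For the approximation ratio I would apply Claim~\ref{cl:vi} with $S'=S^*$ to obtain an index $i^\star$ with $c(\delta(S^*\setminus V_{i^\star}))\ge(1-\frac2h)\,c(\delta S^*)$. Since $S^*$ is an independent set of $G$, its subset $S^*\setminus V_{i^\star}$ is an independent set of $G$ and is contained in $V\setminus V_{i^\star}$, so it is a feasible \gcmc solution on $G[V\setminus V_{i^\star}]$ --- exactly the hypothesis required to invoke Claim~\ref{cl:si} with this $S'=S^*$ and $i=i^\star$, which gives $c(\delta S_{i^\star})\ge\frac12\,c(\delta(S^*\setminus V_{i^\star}))$. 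Chaining these two inequalities and using that the algorithm returns the $S_i$ of largest cut value yields $c(\delta S)\ge c(\delta S_{i^\star})\ge\frac12(1-\frac2h)\,c(\delta S^*)$, as claimed.

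The one place that genuinely matters --- and the reason the algorithm handles independent set in its own branch --- is the feasibility transfer in the second step: it works only because deleting $V_{i^\star}$ from both the constraint graph and the candidate solution preserves independence, which is special to downward-closed constraints. I expect this to be the main (though still mild) obstacle; everything else is just assembling the cited claims, with no new calculation needed. For vertex cover and dominating set the same scheme goes through with $S'\cup V_i$ in place of $S'\setminus V_i$ (Claim~\ref{cl:vi2}) and the contracted graph $G_i$ in place of $G[V\setminus V_i]$ (Claim~\ref{cl:si2}).
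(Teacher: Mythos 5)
Your proposal is correct and follows essentially the same route as the paper: feasibility comes from the fact that an independent set of $G[V\setminus V_i]$ remains independent in $G$, and the ratio follows by chaining Claim~\ref{cl:vi} (to pick the index $i^\star$) with Claim~\ref{cl:si} (to compare $S_{i^\star}$ against $S^*\setminus V_{i^\star}$), then using that the algorithm returns the best $S_i$. You are in fact slightly more careful than the paper's write-up, which leaves the argmax step and the choice of index implicit, and you correctly identify downward-closedness as the property that makes the $S^*\setminus V_{i^\star}$ feasibility transfer work.
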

\begin{proof}
Since $S$ is an independent set in $G[V\setminus V_i]$, then it is an independent set in $G$. $S$ is a feasible solution to the original \gcmc problem. Also we have  that $S^*\setminus V_i$ is a feasible solution in $G[V\setminus V_i]$. Then by Claim \ref{cl:si}, we have $c(\delta S)\ge \frac12 c(\delta (S^*\setminus V_i))$ and by Claim \ref{cl:vi}, we have $c(\delta (S^*\setminus V_i))\ge(1-\frac{2}{h})c(\delta S^*).$ Combine the last two inequalities, we have $c(\delta S)\ge \frac12(1-\frac{2}{h})c(\delta S^*)$.
\hfill \eod
\end{proof}
\begin{cl}
Let $S^*$ be the optimal solution of \gcmc with vertex-cover constraint. The solution $S$ given by the Algorithm \ref{alg:em} is a feasible solution to the original \gcmc problem and $c(\delta S)\ge \frac12(1-\frac{6}{h})c(\delta S^*)$.
\end{cl}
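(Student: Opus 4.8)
The plan is to follow the template of the preceding (independent-set) claim, but with Claim~\ref{cl:vi2} playing the role of Claim~\ref{cl:vi} and Claim~\ref{cl:si2} playing the role of Claim~\ref{cl:si}, since for vertex cover Algorithm~\ref{alg:em} runs the \gcmc routine on the \emph{contracted} graph $G_i$ (with $v_{new}$ forced into the solution) rather than on $G[V\setminus V_i]$.

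First I would check feasibility. For a fixed index $i$ the algorithm returns $S_i=(S'_i\setminus\{v_{new}\})\cup V_i$, where $S'_i$ is a vertex cover of $G_i$ containing $v_{new}$; I would argue that $S_i$ covers every edge of $E$: an edge with an endpoint in $V_i$ is covered because $V_i\subseteq S_i$, and an edge $(u,v)$ with $u,v\in V\setminus V_i$ is also an edge of $G_i$, so one of its endpoints lies in $S'_i$ and (since $v_{new}\notin\{u,v\}$) in fact in $S'_i\setminus\{v_{new}\}\subseteq S_i$. Hence the returned $S$ is a vertex cover of $G$.

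Next I would bound the cut. Applying Claim~\ref{cl:vi2} with $S'=S^*$ gives an index $i$ with $c(\delta(S^*\cup V_i))\ge(1-\frac{2}{h})\,c(\delta S^*)$. I would then observe that $S^*\cup V_i$ is a vertex cover of $G$ containing $V_i$, so $(S^*\setminus V_i)\cup\{v_{new}\}$ is a feasible solution of the ``vertex cover of $G_i$ with $v_{new}\in S$'' instance --- it contains $v_{new}$, covers each edge between two vertices of $V\setminus V_i$ via $S^*$, and covers each edge at $v_{new}$ trivially. This is precisely the hypothesis of Claim~\ref{cl:si2} with $S'=S^*$, which then gives (in the notation of Algorithm~\ref{alg:em}) $c(\delta S_i)=c_i(\delta S'_i)\ge\frac12\,c(\delta(S^*\cup V_i))$. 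Chaining, and using that the algorithm outputs the best $S_i$,
\[
c(\delta S)\ \ge\ c(\delta S_i)\ \ge\ \tfrac12\bigl(1-\tfrac{2}{h}\bigr)c(\delta S^*)\ \ge\ \tfrac12\bigl(1-\tfrac{6}{h}\bigr)c(\delta S^*),
\]
which is the claimed bound (in fact the argument yields the slightly stronger $\tfrac12(1-\tfrac{2}{h})$).

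The step I expect to need the most care --- really a verification rather than a genuine difficulty --- is confirming that Theorem~\ref{thm:sym} applies to the instance solved in the loop: the contracted graph $G_i$ has treewidth at most one more than $G[V\setminus V_i]$ (place $v_{new}$ in every bag of a tree-decomposition of $G[V\setminus V_i]$), hence $O(h)$ by Theorem~\ref{thm:em}; and the constraint ``vertex cover of $G_i$ with $v_{new}\in S$'' still satisfies Assumption~\ref{assm:DP}, being the vertex-cover dynamic program of Section~\ref{sec:appln} with every state space restricted to states containing $v_{new}$ (legitimate since $v_{new}$ belongs to every bag) --- the restriction enforced by the $y(s(r))=0$ constraints added in Algorithm~\ref{alg:em}. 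One also uses, as asserted in Claim~\ref{cl:si2}, that $c_i$ is chosen so that the $G_i$-cut of any $S'_i\ni v_{new}$ equals the $G$-cut of $(S'_i\setminus\{v_{new}\})\cup V_i$, which follows from $c_i(u,v_{new})=\sum_{w\in V_i}c(u,w)$ and $V_i\subseteq S_i$.
\hfill\eod
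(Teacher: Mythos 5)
Your proposal is correct and follows essentially the same route as the paper's own proof: feasibility of $S_i=(S'_i\setminus\{v_{new}\})\cup V_i$ via the definition of the contracted graph $G_i$, then chaining Claim~\ref{cl:vi2} (to pick a good $V_i$) with Claim~\ref{cl:si2} (for the $\frac12$-approximation on $G_i$) to get $c(\delta S)\ge\frac12(1-\frac{2}{h})c(\delta S^*)$. The paper's proof likewise ends with the stronger $\frac12(1-\frac{2}{h})$ factor, so your observation that the stated $\frac{6}{h}$ is loose is consistent with the source.
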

\begin{proof}
Since $S'_i$ is a vertex cover of $G_i$. Then by definition of $G_i$, $S_i\cup V_i$ is a vertex cover of $G$. And by definition of $G_i$ and $c_i$, we have $S^*\cup\{v_{new}\}\setminus V_i$ is a feasible solution for $G_i$. Then by Claim \ref{cl:si2}, $c(\delta S)\ge\frac12c(\delta S^*\cup V_i)$. And by Claim \ref{cl:vi2}, we have $c(\delta S^*\cup V_i)\ge(1-\frac2h)c(\delta S^*)$. Combine the last two inequalities, we have $c(\delta S)\ge \frac12(1-\frac{2}{h})c(\delta S^*)$.\hfill \eod
\end{proof}
\begin{cl}
Let $S^*$ be the optimal solution of \gcmc with dominating-set constraint. The solution $S$ given by the Algorithm \ref{alg:em} is a feasible solution to the original \gcmc problem and $c(\delta S)\ge \frac12(1-\frac{2}{h})c(\delta S^*)$.
\end{cl}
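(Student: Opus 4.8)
The plan is to mirror the vertex-cover argument immediately above, combining the generic Claims~\ref{cl:vi2} and~\ref{cl:si2} with a feasibility check tailored to the domination property. Throughout, $i$ denotes the index ultimately selected by Algorithm~\ref{alg:em}, $S'_i$ the solution it computes for \gcmc on the contracted graph $G_i$ (which is forced to contain $v_{new}$ via the extra LP constraints indicated in the algorithm), and $S=(S'_i\setminus\{v_{new}\})\cup V_i$ the returned set. Note first that $G_i$ still has bounded treewidth, since contracting $V_i$ to a single vertex raises the treewidth by at most one, so Theorem~\ref{thm:sym} does apply to the dominating-set \gcmc on $G_i$.

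First I would check feasibility of $S$. Every vertex of $V_i$ lies in $S$, hence is dominated. For $u\in V\setminus V_i$: if $u$ is dominated in $G_i$ by some $w\in S'_i\setminus\{v_{new}\}$, then $w\in S$ and $(u,w)\in E$, so $u$ is dominated in $G$; otherwise $u$ is dominated in $G_i$ by $v_{new}$, which by the definition of $G_i$ means $u$ has a neighbor in $V_i\subseteq S$, so $u$ is again dominated. Thus $S$ is a dominating set of $G$, i.e.\ feasible for the original \gcmc instance.

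Next I would verify the hypothesis needed to invoke Claim~\ref{cl:si2} with $S'=S^*$, namely that $(S^*\setminus V_i)\cup\{v_{new}\}$ is a feasible (dominating-set) solution in $G_i$ containing $v_{new}$. The vertex $v_{new}$ is in the set by construction. Any $u\in V\setminus V_i$ is dominated in $G$ by some $w\in S^*$; if $w\notin V_i$ then $w\in S^*\setminus V_i$ dominates $u$ in $G_i$, and if $w\in V_i$ then $u$ is adjacent to $v_{new}$ in $G_i$. Hence $(S^*\setminus V_i)\cup\{v_{new}\}$ dominates $G_i$. Now Claim~\ref{cl:si2} gives, for every $i$, a candidate of cut value $c_i(\delta S'_i)\ge\tfrac12 c(\delta(S^*\cup V_i))$ whose cut value in $G$ equals $c_i(\delta S'_i)$; since the algorithm returns the best candidate, $c(\delta S)\ge\max_i \tfrac12 c(\delta(S^*\cup V_i))$. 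Applying Claim~\ref{cl:vi2} to the partition $V_1,\dots,V_h$ and the set $S^*$ yields $\max_i c(\delta(S^*\cup V_i))\ge(1-\tfrac2h)c(\delta S^*)$, and combining the two inequalities gives $c(\delta S)\ge\tfrac12(1-\tfrac2h)c(\delta S^*)$, as claimed.

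The main obstacle I expect is the feasibility bookkeeping around the contraction: one must verify domination in both directions — lifting $S'_i$ back to $G$ and pushing $S^*$ forward to $G_i$ — being careful about whether a dominating neighbor of a given vertex lies inside or outside $V_i$, and about the fact that $v_{new}$ simultaneously plays the role of every vertex of $V_i$ as a potential dominator. Once that is settled, the estimate is just a routine chaining of Claims~\ref{cl:si2} and~\ref{cl:vi2}, exactly as in the vertex-cover case.
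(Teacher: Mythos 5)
Your proposal is correct and follows the same route the paper takes: invoke Claim~\ref{cl:si2} (applied to $S'=S^*$ pushed forward into $G_i$) and Claim~\ref{cl:vi2}, then chain the two inequalities — exactly as in the vertex-cover case. You simply spell out the feasibility bookkeeping (lifting $S'_i$ to a dominating set of $G$ and pushing $S^*$ to a dominating set of $G_i$) in more detail than the paper does, which is a welcome clarification rather than a divergence.
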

\begin{proof}
Since $S'_i$ is a dominating set of $G_i$. Then by definition of $G_i$, $S_i\cup V_i$ is a dominating set of $G$. And by definition of $G_i$ and $c_i$, we have $S^*\cup\{v_{new}\}\setminus V_i$ is a feasible solution for $G_i$. Then by Claim \ref{cl:si2}, $c(\delta S)\ge\frac12c(\delta S^*\cup V_i)$. And by Claim \ref{cl:vi2}, we have $c(\delta S^*\cup V_i)\ge(1-\frac2h)c(\delta S^*)$. Combine the last two inequalities, we have $c(\delta S)\ge \frac12(1-\frac{2}{h})c(\delta S^*)$.
\hfill \eod
\end{proof}
\subsection{Bounded-genus graph}
Here we use:
\begin{theorem}{\cite{DHK05}}
For a bounded-genus graph $G$ and an integer $h$, the edge of $G$ can be partitioned in $h$ color classes $E_1,\dots,E_h$ such that contracting all the edges in any color class leads to a graph with treewidth $O(h)$. Further, the color classes are obtained by a radial coloring and have the following property: If edge $e=(u,v)$ is in class $i$, then every edge $e'$ such that $e\cap e'\ne\emptyset$ is in class $i-1$ or $i$ or $i+1$.\label{thm:bg}
\end{theorem}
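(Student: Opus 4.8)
Since Theorem~\ref{thm:bg} is quoted verbatim from \cite{DHK05}, what follows is a sketch of the proof one would give rather than a new argument. The plan rests on three ingredients: a Baker-style breadth-first layering adapted to surfaces, planar/surface \emph{duality} that turns edge deletions into edge contractions, and the inequality $\mathrm{tw}(H)\le \mathrm{tw}(H^\ast)+O(g)$ relating the treewidth of a graph embedded on a genus-$g$ surface to that of its dual.

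Fix an embedding of $G$ on a surface of genus $g=O(1)$. First planarize: cut the surface along a system of $O(g)$ non-separating curves so that it opens into a disc, turning $G$ into a planar graph together with an ``interface'' of only $O(g)$ boundary vertices. Now run a breadth-first search in the radial graph $R(G)$ (vertices are the vertices and faces of $G$, with an edge joining each vertex to every incident face), and let the color of an edge of $G$ be the minimum radial-BFS level among its two endpoints and two incident faces, taken modulo $h$. Two edges that share a vertex (or a face) of $G$ are at radial distance $O(1)$, so their colors differ by $O(1)$; defining the edge level carefully makes this gap exactly one, which is the stated radial-adjacency property. For a fixed color $c$, the same layering read on the dual $G^\ast$ shows that each component of $G^\ast\setminus E_c$ spans only $O(h)$ consecutive radial layers, hence has treewidth $O(h)$ by the bounded-genus form of Baker's lemma once the $O(g)$ interface is re-identified (which costs only $O(g)=O(1)$ in treewidth). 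Translating back, contraction in $G$ is deletion in $G^\ast$, i.e. $(G/E_c)^\ast=G^\ast\setminus E_c$, so $\mathrm{tw}(G/E_c)\le \mathrm{tw}(G^\ast\setminus E_c)+O(g)=O(h)$, as claimed; every step (choosing the cut system, the BFS, the coloring) runs in polynomial time.

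The hard part is the planarization/re-gluing step: the cut curves must be chosen so that opening along them really does planarize the surface, so that only $O(g)$ pieces of the graph land on the boundary, and so that re-identifying that boundary raises the treewidth by only $O(g)$ rather than by something growing with $h$ or $n$ --- this interface control is exactly what makes the bounded-genus case harder than the planar one, where no cutting is needed. A second, more routine, technical point is the surface analogue of the treewidth/dual-treewidth inequality. Downstream, Theorem~\ref{thm:bg} is used just as Theorem~\ref{thm:em} is used for Corollary~\ref{cor:em}: contracting a color class $E_i$ never disconnects a connected vertex set, so an optimal connected \gcmc solution survives in $G/E_i$ after absorbing the contracted components it meets (still connected, still a union of $E_i$-components), and by an averaging argument over the $h$ classes one finds an $i$ for which it survives while retaining a $(1-O(1/h))$ fraction of its cut weight; combining this with the $\frac12$-approximation of Theorem~\ref{thm:sym} applied to the bounded-treewidth graph $G/E_i$ yields Corollary~\ref{cor:bg}.
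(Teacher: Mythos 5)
The paper offers no proof of this statement: Theorem~\ref{thm:bg} is imported as a black box from \cite{DHK05}, and the authors use it only downstream (in the sentence after the theorem, which defers to the argument of \cite{HKMPS15}). So there is no ``paper's own proof'' to compare against; what you have written is a reconstruction of the external result, which you yourself flag. Your high-level plan --- radial BFS layering on the vertex--face incidence graph, reading the layering in the dual so that contraction in $G$ becomes deletion in $G^\ast$, and then invoking a bounded-genus local-treewidth bound --- is the right circle of ideas and does match the spirit of the Demaine--Hajiaghayi--Kawarabayashi decomposition. The closing paragraph about how the theorem is then used (contraction preserves connectivity of a vertex set; average over the $h$ classes; apply Theorem~\ref{thm:sym} to $G/E_i$) is also consistent with how the paper invokes it via \cite{HKMPS15}.

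One concrete step in your sketch would not survive scrutiny as written: you claim that cutting the surface along $O(g)$ non-separating curves ``turns $G$ into a planar graph together with an interface of only $O(g)$ boundary vertices.'' Cutting along a canonical system of $O(g)$ loops does yield a disc, but the number of \emph{vertices} and \emph{edges} of $G$ that those loops cross is not bounded by $O(g)$ in general --- it scales with the length of the loops in the graph metric and can be $\Omega(n)$. The correct statement is the local-treewidth bound $\mathrm{tw}(G)=O(g\cdot\mathrm{diam}(G))$ for a graph $G$ of genus $g$, whose proof chooses the cut system carefully (e.g.\ along BFS trees) so that re-gluing is controlled relative to the \emph{diameter}, not by a fixed $O(g)$ count of boundary vertices. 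As stated, your ``interface control'' step is the genuine gap; the rest of the outline (radial coloring giving the class-$i$ adjacency property, the identity $(G/E_c)^\ast = G^\ast\setminus E_c$, and $\mathrm{tw}$ vs.\ $\mathrm{tw}$ of the dual differing by $O(g)$) is sound. Since the paper itself does not re-prove the theorem, none of this affects the correctness of the paper; it only affects the accuracy of your reconstruction of \cite{DHK05}.
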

The proof of Corollary \ref{cor:bg} using Theorem~\ref{thm:bg} is identical to the proof in \cite{HKMPS15} for the ``uniform'' connected max-cut problem. For each $E_i$, let $S_i$ be the solution in $G$ with $E_i$ contracted,  then $S'_i=\{v|v\in S_i \mbox{ or } v$ is contracted to some vertex of $S_i\}$ is connected in $G$.  Although our edge-weights $c$ are defined on a complete graph, the proof of \cite{HKMPS15} still works. The main reason is that each vertex gets contracted in at most 3 of the graphs $G$ with $E_i$ contracted.
     
\bibliographystyle{splncs03}
\bibliography{gcmcbib}

\end{document}